\documentclass[11pt,a4paper]{article}

\pdfoutput=1 

\usepackage{xcolor}
\newcommand{\TODO}[1]{\textcolor{red}{\textbf{[TODO: #1]}}}

\usepackage{booktabs}
\usepackage{tikz}
\usepackage{subcaption}
\usetikzlibrary{fit,backgrounds}
\usetikzlibrary{arrows.meta}
\usepackage{float}

\usepackage[margin=1in]{geometry}
\usepackage{amsmath,amssymb,amsthm}
\usepackage{hyperref}
\usepackage{cleveref}

\newtheorem{theorem}{Theorem}[section]
\newtheorem{proposition}[theorem]{Proposition}
\newtheorem{corollary}[theorem]{Corollary}
\newtheorem{lemma}[theorem]{Lemma}
\newtheorem{definition}[theorem]{Definition}

\newcommand{\cI}{\ensuremath{\mathcal{I}}}
\providecommand{\OPT}{\operatorname{OPT}}

\usepackage{comment}

\title{Stochastic Minimum Spanning Tree Identification with Binary Weights} 

\author{
  Haya Diwan\thanks{Department of Computer Science and Engineering, New York University, USA. \\Email: \texttt{hd2371@nyu.edu}.}
  \and
  Lisa Hellerstein\thanks{Department of Computer Science and Engineering, New York University, USA. \\Email:\texttt{lisa.hellerstein@nyu.edu}.}
  \and
  Sanjeev Khanna \thanks{Department of Computer Science, Courant Institute, New York University, USA. Supported in part by NSF award CCF-2625203 and AFOSR award FA9550-25-1-0107. Email: \texttt{sanjeev.khanna@nyu.edu}.}
  \and 
  Junkai Song\thanks{Department of Computer Science, Courant Institute, New York University, USA. Supported in part by NSF award CCF-2625203. Email: \texttt{junkaisong@nyu.edu}.}
}

\begin{document}

\maketitle

\begin{abstract}
We introduce the Stochastic MST Identification Problem, with binary weights.  In this problem, we need to identify a minimum spanning tree (MST) of a graph, each of whose edge weights is either 0 or 1.  The weight of each edge $e$ is a random variable which is 1 with probability $p_e$ and 0 with probability $1-p_e$.  The probability $p_e$ is given for each edge $e$, but the realized weight of an edge can only be determined by querying the edge. The problem is then to determine a strategy for querying the edges of the graph that minimizes the expected number of queries that need to be performed until we can identify a subset of edges that form an MST of the graph. We address both adaptive and non-adaptive versions of this problem. For the non-adaptive version of our Stochastic MST Identification problem, we give an $O(\log E)$-approximation algorithm via a reduction to submodular ranking. For the adaptive version, we show that solving the problem is trivial when all $p_e$ are equal to 1/2, because of a relation to the Tutte polynomial.  However, computing the expected cost of an optimal querying strategy is \#P-hard, even when all $p_e$ equal $\frac{1}{2}$.
For arbitrary $p_e$ values, we present a constant-factor approximation algorithm for simple planar graphs. We investigate the question of whether 
the {\em submodular goal value} approach of Deshpande et al.\ could be used to achieve a $\log$-factor approximation algorithm that would work for arbitrary $p_e$ values and general graphs.  In this context, we address the structural question of determining optimal submodular goal values for MST identification in graphs.  
\end{abstract}

\newpage

\section{Introduction}\label{sec:intro}

Consider the general problem of identifying a minimum-weight spanning tree (MST) in an {\em uncertainty graph}.  In such a graph, the weights of the edges are not initially known.  For each edge $e$, we are guaranteed that its weight $w_e$ is a member of a given {\em uncertainty set} $A_e$.  But we can only determine the actual value of $w_e$ by asking a query.  Querying must continue until enough information is obtained to identify the edges of a minimum spanning tree (MST) of the graph.  However, it is not necessary to discover the weight of the MST.

The problem of identifying an MST with queries under uncertainty was introduced by Erlebach and Hoffmann~\cite{erlebach08steiner_uncertainty} in an online adversarial setting, where querying is done adaptively and the goal is to minimize the worst-case competitive ratio against the optimal offline {\em MST Verification} cost. Erlebach and Hoffmann ~\cite{erlebach_verification} also introduced the study of the offline MST Verification problem. It takes as input the uncertainty sets $A_e$ and the exact edge weights $w_e$, and the goal is to identify a min-cost set of queries verifying an MST. Querying all edges in this set certifies an MST of the graph regardless of the values of the unqueried edges. 

In this paper, we introduce the study of the MST identification problem in the stochastic setting, where the weight $w_e$ of each edge is drawn independently from a given probability distribution on $A_e$.  The goal in this setting is to develop a query strategy that minimizes the expected cost of the performed queries.

We focus on the fundamental binary weight case, where each $w_e$ is either 0 or 1 (equivalently, $A_e=\{0,1\}$).  For each edge $e$, we have $p_e = P[w_e = 1]$.
We set each query to have unit cost, so the goal is to minimize the expected {\em number} of queries.  We call this the Stochastic MST Identification problem with binary weights, or the MST-ID problem, for short. 

We note that in the online adversarial setting, the binary weight case is  not interesting, because it is not possible to achieve sublinear competitive ratio~\cite{erlebach08steiner_uncertainty}.
However, as shown recently by Diwan et al. \cite{diwan2025} in the verification setting the binary case is interesting and non-trivial, because different MSTs of the same uncertainty graph can have different optimal verification costs~\cite{diwan2025}. 

We consider both adaptive and non-adaptive versions of the MST-ID problem.  In the non-adaptive version, we only allow non-adaptive query strategies.  These are represented by a permutation of the edges; queries must be performed in the order specified by the permutation until enough information is obtained to identify an MST of the graph.    
In the adaptive version, 
we allow adaptive query strategies, which correspond to decision trees.

The adaptivity gap for the MST-ID problem is large; it can easily be shown using arguments from \cite{Hellerstein_Kletenik_Liu_Witter_2022} that there are instances for which the expected number of queries of the best non-adaptive strategy is $\Omega(\sqrt{n})$ times larger than that of the best adaptive strategy. 

Our work leaves open the question of whether the adaptive and non-adaptive versions of the MST-ID problems are NP-hard.  It is unclear whether resolving this question will require new techniques. We note that there are a number of other stochastic probing problems whose NP-hardness remains open, most notably the SBFE problem for read-once Boolean formulas, which has been studied since the 1970's~(cf.~\cite{unluyurt2004sequential, unluyurt2025sequential}).

We do show that for the adaptive version of MST-ID, computing the expected cost of the optimal  strategy is \#P-hard.  This holds even in the special case where all $p_e$ are equal to 1/2.  In contrast, as discussed below, the adaptive MST-ID problem itself has an easy polynomial-time algorithm in this special case. Our main algorithmic results are as follows.

\vspace{2mm}
{\bf \noindent Non-adaptive MST-ID}~~
We give a polynomial-time $O(\log E)$-approximation algorithm for the non-adaptive MST-ID problem.  This algorithm is based on an approach recently used in \cite{brody2026extensionsunanimousvoteproblem}, which involves reducing a non-adaptive stochastic probing problem to the problem of Submodular Ranking. To apply this approach to MST-ID, we construct, for each realization $\alpha \in \{0,1\}^{E}$ of the edge weights, a corresponding monotone submodular {\em goal function} $g_{\alpha}:2^{E} \rightarrow \{0,1,2,\ldots,Q_{\alpha}\}$ such that $Q_{\alpha}$ is polynomial in $|E|$ for all $\alpha$. 

\vspace{2mm}
{\bf \noindent Adaptive MST-ID}~~ 
For the adaptive version of the MST-ID problem, we begin by showing that in the special case where all $p_e$ are equal to $1/2$, the problem is very easy. 
It follows from a relation to the Tutte polynomial that in this special case, it is optimal to query the edges in any order, provided that ``useless'' queries are skipped.  

However, for arbitrary $p_e$, the adaptive version of the MST-ID problem appears to be much more challenging.  
As part of addressing it, we consider the corresponding adaptive {\em MST Weight Determination problem (MST-WD)}, where the goal is not just to identify the MST, but also to determine its weight.  
We note that because weights are binary, the problem of computing the weight of an MST of the graph is equivalent to the problem of computing the number of connected components of the subgraph consisting of the weight 0 edges.

We show that the adaptive MST-WD problem has an easy exact algorithm.  Further, we show that on simple planar graphs, that algorithm gives a constant-factor approximation for the adaptive MST-ID problem. 

If all $p_e$ are greater than 1/2 (symmetrically, all are less than 1/2), then performing queries in increasing (decreasing)  $p_e$ order and skipping ``useless'' queries is a 2-approximation algorithm. 

We leave as an open question whether there is a polynomial-time approximation algorithm that achieves a non-trivial approximation factor on general graphs with arbitrary probabilities.  

In addition to the above algorithmic results, we also consider related questions about submodular goal value.
As discussed above, 
the approach we use to achieve our polynomial-time $O(\log E)$ approximation algorithm for the non-adaptive MST-ID problem is based on a reduction to Submodular Ranking.  This approach is only suitable for non-adaptive problems. An analogous approach, used by Deshpande et al.\ for Stochastic Boolean Function Evaluation (SBFE) problems \cite{SBFE}, is suitable for adaptive problems.  It is based on a reduction to Stochastic Submodular Cover.  

Using this analogous approach to achieve a logarithmic approximation algorithm for adaptive MST-ID requires the construction of a state-dependent monotone submodular goal function $f:\{0,1,*\}^E \rightarrow \{0,1,2,\ldots,Q\}$ where $Q$ is polynomial in $|E|$.  In contrast to the functions $g_{\alpha}$ constructed for the non-adaptive problem, which can be different for each realization $\alpha$, 
this is a single function whose domain is the set of all partial realizations of the weights of the edges in $E$.  
This gives rise to the structural question: does such a single function exist?  
Or phrased another way:  is there a monotone submodular goal function for this problem that has polynomial goal value?  

We do not resolve this question, but we give related preliminary results and observations.  Here we note that
the same question has previously been considered in the context of SBFE problems, where the goal is evaluation of a Boolean function~\cite{bach2018submodular}.  In that context, the answer was shown to be ``no'' for certain Boolean functions. The proof was implicitly based on showing an exponential lower bound on goal value for one of the $g_{\alpha}$ for the non-adaptive problem.  However, as discussed above, for MST identification we have shown that for every $\alpha$, there is a $g_{\alpha}$ with  polynomial goal value.  
Thus proving exponential lower bounds on goal value for MST identification would require new techniques.  

\subsection{Related work}

For the offline MST verification problem, Erlebach and Hoffmann~\cite{erlebach_verification} gave a polynomial time algorithm for computing a minimum cost query set when every uncertainty set is an open interval or a single point. Recently,  Diwan et al. \cite{diwan2025}
showed that an optimal query set can be computed exactly in polynomial time for arbitrary uncertainty sets and query costs, by proving this holds for the more general problem of identifying a minimum weight basis of a matroid. In the online adversarial setting, constant competitive ratios have been established for both MST and matroid basis identification. Erlebach et al. \cite{erlebach2016query} proved a best possible ratio of $2$ for deterministic algorithms. Megow et al. proved a $1.707$ ratio for a randomized algorithm \cite{megow2017randomization}.

Stochastic MST Identification is an optimization problem in which the task is to determine an order in which to perform queries, so as to achieve some goal (identifying an MST of the input graph) while incurring minimal expected cost.
The SBFE problem, mentioned above, is of the same general type. There the goal is to determine the value of a given Boolean function on
a random assignment of Boolean values to its input variables; the value assigned to an input variable can only be determined by performing a test (asking a query). SBFE problems have been studied for many years, see the survey papers of Ünlüyurt~\cite{unluyurt2004sequential, unluyurt2025sequential}.  Stochastic Function Evaluation (SFE) problems have also been studied for non-Boolean functions (see e.g.,\cite{ghuge2025nonadaptive, grammel2022algorithms, plank2024simple,BallotCounting}).  
As discussed above, we use techniques from the SBFE literature in our work.  
However, the MST-ID problem poses novel challenges because of its graph-theoretic nature.

Our $O(\log E)$-approximation algorithm for non-adaptive MST-ID is based on the approach introduced by Brody et al.\cite{brody2026extensionsunanimousvoteproblem}, who used it to solve a dice-rolling problem they called Finite Coupon Collection.  The greedy algorithm for Submodular Ranking used in that approach was introduced by Azar and Gamzu~\cite{AzarGamzu}. Im et al. \cite{Im2016} subsequently provided an alternative analysis of the algorithm. Their work also shows that a natural generalization of the rule used in the greedy algorithm achieves the same approximation factor when different queries have different costs.

The MST-ID problem is an example of an \emph{explorable uncertainty} problem, where input parameters are initially unknown but can be revealed through queries to compute an optimal solution.   Merino and Soto \cite{MerinoS19} studied the problem of finding a minimum weight basis of an uncertainty matroid in a setting where 
there is no prior knowledge regarding the actual element weights. They require that their query set act as a certificate across every realization such that $w_e \in A_e$ for all elements $e$. 
In contrast, we consider a stochastic setting  where query strategies must minimize expected query cost over a known distribution. 

Explorable uncertainty in stochastic settings has also been studied for graph and hypergraph orientation \cite{BampisDEdLMS21}, sorting and median selection \cite{ChaplickHLT20}, and covering based set selection \cite{MegowS23}. These prior works assume real valued weights drawn from continuous distributions. There is also related work that studies explorable uncertainty with untrusted predictions for MST and sorting/orientation \cite{ErlebachLMS23, ErlebachLMS22}.

MST problems have also been studied in stochastic settings that are fundamentally different from ours, such as the two-stage stochastic optimization with recourse model \cite{dhamdhere2005pay, dhamdhere2005two, flaxman2006random}, and related two-stage combinatorial frameworks \cite{gupta2004boosted, gupta2004edge, immorlica2004costs, ravi2004hedging, shmoys2004stochastic}. In these models, decisions center on purchasing edges before and after uncertainty resolves in order to minimize the total edge usage cost. 

\section{Preliminaries} \label{sec: prelim}

\subsection{Basic definitions and notation}

{\bf \noindent Miscellaneous.} We use ``increasing'' (resp. ``decreasing'') to mean ``non-decreasing'' (resp. ``non-increasing'').  In writing expressions involving $|E|$ and $|V|$, we sometimes use $E$ and $V$ instead. Additionally, we note that full proofs for all results stated without a proof in the main text, along with supplementary figures and experiments, are deferred to the Appendix.

\vspace{2mm}
{\bf \noindent Graphs, minors, and certification.} 
We do not assume that graphs are simple; they may contain loops and multi-edges (also called \emph{parallel edges}), which are distinct edges sharing the same pair of endpoints. {\em Contracting} an edge $e=(v,v')$ in a graph $G=(V,E)$ corresponds to merging the endpoints $v,v'$ of $e$ into a single vertex $\hat{v}$ and deleting $e$. While technically the endpoints of edges adjacent to $v$ and $v'$ change when those vertices are merged, we usually treat the edges of the contracted graph as being equal to the edges in $G$ from which they were derived. 

For an edge $e \in E$, $G \setminus e$ denotes the deletion of $e$ from $G$, and $G / e$ denotes the contraction of $e$ in $G$. 
The edge set of both these induced graphs is $E \setminus \{e\}$. 

A graph created from $G$ by a sequence of edge contractions, edge deletions, and deletions of isolated vertices is a {\em minor} of $G$. A \emph{bridge} of $G$ is an edge whose removal increases the number of connected components of $G$. A \emph{loop} is an edge whose endpoints are identical. 

Let $T$ be some spanning tree of a connected graph $G$. Tree $T$ must contain all bridge edges of $G$. If $e \in E \setminus T$, then $e$ forms a cycle with edges of $T$ called the \emph{fundamental cycle} of $e$ with $T$.

A (full) \emph{realization} of edge weights is a vector $\alpha = \{0, 1\}^E$, where $\alpha_e \in \{0, 1\}$ represents the realized weight $w_e = \alpha_e$ of some edge $e$. For $\alpha \in \{0,1\}^E$ and a spanning tree $T$ of $G$, we use $\alpha(T) = \sum_{e \in T} \alpha_e$ to denote the weight of $T$ under realization $\alpha$. We say $T$ is an MST of $G$ for weights $\alpha$ if $\alpha(T) \le \alpha(T')$ for every spanning tree $T'$ of $G$. 
A \emph{partial realization} is a vector $\beta \in \{0, 1, *\}^E$.  Partial realizations are used to represent information about edge weights that has been obtained through queries: $\beta_e \in \{0, 1\}$ indicates that edge $e$ has been queried and found to have weight $\beta_e$ and $\beta_e = *$ indicates that 
$e$ has not been queried and its weight is unknown.  For partial realizations $\beta,\gamma \in \{0,1,*\}^E$ we call $\beta$ an extension of $\gamma$, or say that $\beta$ contains $\gamma$ (written $\gamma \preceq \beta$) if $\beta_e=\gamma_e$ for all $e \in E$ with $\gamma_e \neq *$.  

Given a partial realization $\beta \in \{0,1,*\}^E$, $G_\beta$ is the minor of $G$ obtained by contracting all edges with $\beta_e = 0$ and deleting all edges with $\beta_e = 1$. 
For a realization $\alpha \in \{0,1\}^E$ and an edge subset $S \subseteq E$, we define $\alpha^S \in \{0,1,*\}^E$ to be the partial realization where $\alpha^S_e = \alpha_e$ for $e \in S$ and $\alpha^S_e = *$ for $e \notin S$. 

Let $T$ be a spanning tree of graph $G$. A partial realization $\beta \in \{0,1,*\}^E$ \emph{certifies} $T$ as an MST of $G$ if for every (full) realization $\alpha$ such that $\beta \preceq \alpha$,
$T$ is an MST of $G$  for weights $\alpha$.  
 We say a partial realization $\beta$ is an \emph{MST-ID certificate for  graph $G$} if it certifies at least one MST $T$ of $G$. 

\vspace{2mm}
{\bf \noindent The Stochastic MST-ID problem.} 
The Stochastic MST-ID problem was described in Section~\ref{sec:intro}.  We give some additional information about the problem here.  We represent an instance of the problem as a pair $\mathcal{I} = (G, p)$, where $G = (V, E)$ is the connected graph and $p = (p_e)_{e\in E}\in (0,1)^E$ is the vector of edge probabilities $p_e = P[w_e=1]$.  
Given an oracle that answers the edge queries, an (exact) algorithm for solving the MST-ID problem must ask queries according to a strategy with minimum expected cost (minimum expected number of queries), and end by outputting an identified MST.
An identified MST is a tree $T$ that is certified as an MST of $G$ by $\beta$, the partial realization representing the answers received to the performed queries.  

The Stochastic MST-ID problem can be contrasted with the corresponding offline MST verification problem.  That problem takes as input the graph $G$ and a realization $\alpha \in \{0,1\}^E$ of its edge weights. The problem is to find a minimum-cost MST-ID certificate $\beta$ for $G$ such that $\beta \preceq \alpha$. The cost of $\beta$ is the number of queries performed before obtaining such a certificate, which is $|\{e \in E \mid \beta_e \neq *\}|$.

\vspace{2mm}
{\bf \noindent Monotone submodular functions.}
Let $I$ be a finite ground set of items. We call $f: 2^I \rightarrow \mathbb{R}_{\geq 0}$ a \emph{set function}. For $S \subseteq I$, we consider $f(S)$ to be the value of the set $S$, with respect to $f$. A set function $f$ is \emph{monotone} if $f(A) \leq f(B)$ for all $A \subseteq B \subseteq I$. It is \emph{submodular} if $f(A \cup \{i\}) - f(A) \geq f(B \cup \{i\}) - f(B)$ for all $A \subseteq B \subseteq I$ and $i \in I \setminus B$. 

\subsection{Important properties of graphs and MSTs}

We will rely on the following proposition, which was fundamental to previous work on MST identification in online and verification settings.  

\begin{proposition} 
\label{prop:fundcyc}
Let $G=(V,E)$ be a connected graph and let $T$ be a spanning tree of $G$.  
For each $e \in E\setminus T$, let $C_e$ denote the fundamental cycle that $e$ forms with $T$.
Then $\beta \in \{0,1,*\}^E$ is a certificate that $T$ is an MST of $G$ iff the following holds for every $e \in E \setminus T$:  
if $\beta_e = *$ or $\beta_e = 0$, then $\beta_{e'}=0$ for all $e' \in C_e \setminus \{e\}$ .
\end{proposition}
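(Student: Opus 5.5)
The plan is to rest both directions on the classical cycle optimality condition: for a full realization $\alpha$, a spanning tree $T$ is an MST iff for every $e \in E\setminus T$ and every $e' \in C_e\setminus\{e\}$ we have $\alpha_{e'} \le \alpha_e$. This is standard (a cheaper swap $T - e' + e$ exists exactly when the condition fails, and local optimality under single swaps implies global optimality for spanning trees), and I will quote it rather than reprove it. Since $\beta$ certifies $T$ iff $T$ is an MST for every full extension $\alpha \succeq \beta$, both directions become statements about when the inequality $\alpha_{e'} \le \alpha_e$ holds for all such extensions.

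\emph{Sufficiency.} Assume the stated condition and let $\alpha \succeq \beta$. Fix $e \in E\setminus T$ and $e' \in C_e\setminus\{e\}$. If $\beta_e = 1$, then $\alpha_e = 1 \ge \alpha_{e'}$ for any binary $\alpha_{e'}$. Otherwise $\beta_e \in \{0,*\}$, so the hypothesis gives $\beta_{e'} = 0$, hence $\alpha_{e'} = 0 \le \alpha_e$. Thus the cycle condition holds for $\alpha$, and $T$ is an MST for $\alpha$.

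\emph{Necessity.} Argue by contraposition. Suppose there are $e \in E\setminus T$ with $\beta_e \in \{0,*\}$ and $e' \in C_e\setminus\{e\}$ with $\beta_{e'} \neq 0$, i.e.\ $\beta_{e'} \in \{1,*\}$. Note that $e' \ne e$ and $e' \in T$. Build an extension $\alpha$ by setting $\alpha_e = 0$ (allowed since $\beta_e \in \{0,*\}$), $\alpha_{e'} = 1$ (allowed since $\beta_{e'} \in \{1,*\}$), and all other unqueried edges arbitrarily, say to $0$. Then $T' = T - e' + e$ is a spanning tree, because $e'$ lies on the fundamental cycle of $e$. Its weight is $\alpha(T') = \alpha(T) - 1$, so $T$ is not an MST for $\alpha$ and $\beta$ does not certify $T$.

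The main obstacle is only bookkeeping. One needs that $T - e' + e$ is a spanning tree, which is the standard fundamental-cycle exchange. One also needs to handle loops: if $e$ is a loop, then $C_e = \{e\}$ and the condition is vacuous, consistent with loops never mattering. The cycle optimality criterion itself is taken as known background.
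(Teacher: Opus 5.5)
The paper gives no proof of this proposition. It is stated as a known fact ``fundamental to previous work on MST identification'' and used as background, so there is no in-paper argument to compare your route against.

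Your argument is correct. Sufficiency is the cycle optimality criterion applied to each extension $\alpha \succeq \beta$, and your case split ($\beta_e = 1$ versus $\beta_e \in \{0,*\}$) is exactly right. Necessity is a clean explicit counterexample via the single exchange $T - e' + e$. It needs only that $e' \in T$ lies on the tree path between the endpoints of $e$, and does not use the criterion at all.

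The only imported ingredient is the ``swap-optimal implies optimal'' direction of the cycle criterion, used in sufficiency. If you want the proof self-contained in the binary setting, a counting argument can replace it. Take an extension $\alpha \succeq \beta$ and let $k$ be the number of connected components of the weight-$0$ subgraph $(V,\{f : \alpha_f = 0\})$.
\begin{itemize}
\item Any spanning tree contains at most $|V|-k$ weight-$0$ edges, so it has weight at least $k-1$.
\item Every non-tree edge $e$ with $\alpha_e = 0$ has $\beta_e \neq 1$. By hypothesis, its entire tree path then has $\beta = 0$, hence $\alpha = 0$.
\item So the weight-$0$ edges of $T$ have the same connected components on $V$ as all weight-$0$ edges. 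This gives $\alpha(T) = (|V|-1)-(|V|-k) = k-1$.
\end{itemize}
This is the same fact the paper later uses in the MST-WD analysis: the MST weight is one less than the number of weight-$0$ components.
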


We use the above proposition to prove the following lemma, which characterizes when a partial realization $\beta$ has enough information to identify an MST of $G$.

\begin{lemma} \label{prop: beta certifies if no loop or bridge}
Let $G=(V,E)$ be a connected graph and let $\beta \in \{0,1,*\}^E$.  Then $\beta$ is an MST-ID certificate on graph $G$ iff every edge of $G_{\beta}$ is either a loop or a bridge.   
\end{lemma}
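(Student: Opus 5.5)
We prove both directions using Proposition~\ref{prop:fundcyc}. Throughout, let $Z=\{e:\beta_e=0\}$, $O=\{e:\beta_e=1\}$ and $U=\{e:\beta_e=*\}$, so the edge set of $G_\beta$ is $U$. An edge $e\in U$ is a loop of $G_\beta$ iff its endpoints are joined by a path in $Z$. It is a bridge of $G_\beta$ iff every cycle of $G\setminus O$ through $e$ contains no other edge of $U$. To see this, note that contracting the edges of $Z$ preserves the cycle structure: cycles of $G_\beta$ through $e$ correspond to cycles of $G\setminus O$ through $e$, with their $Z$-edges contracted away.

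\emph{($\Leftarrow$)} Suppose every edge of $G_\beta$ is a loop or a bridge. We build $T$ as follows. First take a spanning forest $F_Z$ of $(V,Z)$. Next add every unqueried edge that is a bridge of $G_\beta$; these edges form a forest in $G_\beta$. Finally, extend with edges of $O$ to a spanning tree $T$ of $G$, which is possible because $G$ is connected. We then check the condition of Proposition~\ref{prop:fundcyc} for each $e\notin T$ with $\beta_e\in\{0,*\}$.
\begin{itemize}
\item If $\beta_e=0$, then $e$ lies within a component of $F_Z$, so its fundamental cycle uses only $F_Z$ edges, all of weight $0$.
\item If $\beta_e=*$, then $e$ is not a bridge of $G_\beta$ (all bridges were added), so it is a loop. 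Its endpoints are therefore connected in $F_Z$, and again its fundamental cycle lies in $F_Z$.
\end{itemize}
So $\beta$ certifies $T$. The one point to verify is that the $F_Z$ edges and the bridge edges together form a forest, which holds because $G_\beta$ is $G/Z$ with $O$ deleted.

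\emph{($\Rightarrow$)} Suppose $\beta$ certifies some spanning tree $T$, and suppose for contradiction that some $e\in U$ is neither a loop nor a bridge of $G_\beta$. Then $e$ lies on a cycle of $G_\beta$ of length at least $2$, which lifts to a cycle $C$ of $G\setminus O$ containing $e$ and at least one other edge $f\in U$.

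\emph{Case $e\notin T$.} Since $\beta_e=*$, Proposition~\ref{prop:fundcyc} forces every edge of the fundamental cycle $C_e$ other than $e$ to have $\beta=0$. Then $e$ is a loop in $G_\beta$, a contradiction.

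\emph{Case $e\in T$.} Removing $e$ splits $T$ into two sides. The cycle $C$ crosses this cut an even number of times, so some other edge $g\in C\setminus\{e\}$ crosses it, with $g\notin T$ and $\beta_g\in\{0,*\}$. The fundamental cycle of $g$ then contains $e$, and $\beta_e=*\neq 0$, which violates Proposition~\ref{prop:fundcyc}.

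The main obstacle is the bookkeeping that translates cycles in the minor $G_\beta$ into cycles in $G$ avoiding $O$, and the even-crossing argument in the case $e\in T$. An alternative for the forward direction is a direct exchange argument: choose two extensions $\alpha$ in which $T$ cannot be minimal. Either route works; I would pursue the cut argument above.
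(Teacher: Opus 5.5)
Your proof is correct and follows essentially the same route as the paper. In the forward direction you use the same case split: if $e\notin T$, the fundamental cycle forces $e$ to be a loop of $G_\beta$; if $e\in T$, the lifted cycle has a second edge crossing the cut of $e$, and that edge's fundamental cycle contains $e$. In the reverse direction, your layered tree (a spanning forest of $Z$, then all bridges of $G_\beta$, then edges of $O$) is exactly the tree the paper gets from Kruskal with weights $0,\frac{1}{2},1$. You also fill in the step the paper leaves as ``it can be shown that $\beta_{e'}=1$.''

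One small slip: your preliminary claim that $e\in U$ is a bridge of $G_\beta$ iff every cycle of $G\setminus O$ through $e$ avoids other $U$-edges is false when $e$ is a loop of $G_\beta$. For example, an unqueried edge parallel to a weight-$0$ edge satisfies your condition but is not a bridge. Nothing later in your argument depends on this claim.
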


\begin{proof} \leavevmode

    {\bf \noindent $\Rightarrow$} Suppose
    for contradiction that $\beta$ certifies some $T$ as an MST of $G$, but
    $G_\beta$ has an edge $e^*$ that is not a loop or a bridge. 
    Then $e^*$ lies on a non-loop cycle in $G_\beta$.   Recall that $G_{\beta}$ is formed from $G$ by contracting edges $e$ with $\beta_{e}=0$ and deleting edges $e$ with $\beta_{e}=1$.  
    Thus $e^*$ also lies on a cycle $C$ of $G$ in which each edge $e$ satisfies $\beta_e=*$ or $\beta_e=0$.  

    We claim that $e^* \in T$.  Suppose not. Then by Proposition~\ref{prop:fundcyc}, $\beta_e = 0$ for all other edges in the fundamental cycle of $e^*$ with $T$. So those edges were contracted in forming $G_{\beta}$, making $e^*$ a loop of $G_{\beta}$.  Contradiction.  Thus $e^*$ is an edge of $T$.
    
    Let $(V',V'')$ be the cut of $G$ defined by removing $e^*$ from $T$ and letting $V'$ and $V''$ be the vertices in the resulting connected components.  Since $C$ is a cycle containing $e^*$, there must be at least one other edge $e'$ of $C$ crossing this cut.  Since $\beta$ certifies $T$ as an MST, it can be shown that  $\beta_{e'}=1$, which contradicts that each edge $e$ in $C \setminus e^*$ satisfies $\beta_e=*$ or $\beta_e=0$.  
    
\smallskip
    {\bf \noindent $\Leftarrow$} Suppose that $G_{\beta}$ has only loops and bridges.  Construct a spanning tree $T$ of $G$ by
    running Kruskal's algorithm on $G$, using the 0 or 1 weights given by $\beta$ for all $e$ such that $\beta_e \neq *$, and using weight $\frac{1}{2}$ for edges with $\beta_e=*$.  We now prove that $\beta$ certifies the resulting tree $T$ as an MST of $G$, using Proposition~\ref{prop:fundcyc}.  Consider $e \in E \setminus T$.  
    If $\beta_e=0$, then because $e$ was not placed in $T$ during execution of Kruskal's algorithm, it formed a cycle with edges $e'$ in $T$ such that $\beta_{e'}=0$.  That cycle is the fundamental cycle of $e$ with $T$, so the condition of Proposition~\ref{prop:fundcyc} is satisfied.  Suppose $\beta_e=*$.  
    Then because it was not placed in $T$ during execution of Kruskal's algorithm, it formed a cycle with edges $e'$ in $T$ such that $\beta_{e'}=0$ or $\beta_{e'}=*$.  Thus $e$ lies on a cycle in $G_{\beta}$, so $e$ is a loop in $G_{\beta}$. Then there is a path in $G$ joining the endpoints of $e$ whose edges $e'$ all have $\beta_{e'}=0$.  It follows that when $e$ was processed by Kruskal's algorithm, 
    there was already a (possibly different) path of edges $e'$ in $T$ satisfying $\beta_{e'}=0$, forming the fundamental cycle of $e$ with $T$.   
    So the condition of Proposition~\ref{prop:fundcyc} is satisfied. 
    \end{proof}
    
The above lemma and its proof have implications for algorithms solving the MST-ID problem.  
As such an algorithm is making queries, it can 
easily determine whether it has already obtained enough information to identify an MST of $G$, simply by checking whether $G_{\beta}$ has only loops and bridges, for partial realization $\beta$ representing the answers to the queries performed so far.  
Further, once this point has been reached, it can construct an appropriate MST $T$ using Kruskal's algorithm.   

\subsection{Matroids} \label{sec: matroid generalization prelims}

Some of the results in this paper use matroid theory. Here we present essential relevant definitions. For a more information on matroid theory see Oxley \cite{oxley}.

\begin{definition}[Matroid Basics] \label{def: matroid basics}
    A matroid $M = (E, \mathcal{I})$ consists of a finite ground set $E$ and a family $\mathcal{I} \subseteq 2^E$ of subsets, called \emph{independent sets}, satisfying: (i) ~$\emptyset \in \mathcal{I}$, (ii)~if $I \in \mathcal{I}$ and $I' \subseteq I$, then $I' \in \mathcal{I}$ (downward closure), and (iii)~if $I, J \in \mathcal{I}$ with $|I| < |J|$, there exists some $e \in J \setminus I$ such that $I \cup \{e\} \in \mathcal{I}$.

    A maximal independent set is a \textit{basis}. All bases have the same size; this size is the {\em rank} of $M$.  A {\em circuit} $S$ of $M$ is a minimal dependent set of $M$, and $e \in E$ is a \emph{loop} if $\{e\}$ is dependent. For $S \subseteq E$, the {\em rank of $S$} in $M$ is the maximum size of an independent subset of $S$. Associated with $M$ is another matroid $M^*(G)$, called the dual matroid of $M$.  Its independent sets are $\{E \setminus I \mid I \in \cI\}$.
\end{definition}

A graph $G = (V, E)$ has an associated \emph{graphic matroid} $M(G)$ whose ground set is $E$, and whose independent sets are the subsets $S \subseteq E$ containing no cycles of $G$.  When $G$ is connected, the bases of $M(G)$ are the spanning trees of $G$, and the bases of the dual graphic matroid $M^*(G)$ are the complements of the spanning trees of $G$. For a connected graph $G = (V, E)$, the \emph{rank} of $G$ (the rank of $M(G)$) is $|V|-1$, and the \emph{corank} of $G$ is $|E| - |V| + 1$ which corresponds to the rank of the dual matroid $M^*(G)$.

\section{Adaptive querying strategies and the Tutte polynomial} \label{sec: tutte polynomial adaptive} 

The Tutte polynomial of a graph $G$ is a two-variable polynomial $T(G, x,y)$ which can be defined recursively as follows.  If $G$ contains only bridge edges and loops, then $T(G,x,y) = x^i y^j$, where $i$ is the number of bridge edges and $j$ is the number of loops. Otherwise, let $e$ be an edge of $G$ that is not a bridge edge or a self loop.  Then $T(G,x,y) = T(G/e,x,y) + T(G \setminus e,x,y)$. It is an essential property of the Tutte polynomial that the resulting polynomial $T(G,x,y) = \sum_{i, j} t_{i,j}x^i y^j$ will be the same, no matter which (non-bridge and non-loop) edge $e$ is chosen in the recursive definition.  Here $t_{i,j}$ is the coefficient in the resulting polynomial.
(This can be proved by induction on the number of edges in $G$).

Applying the recursive definition of the Tutte polynomial for a graph $G$ gives rise to a binary decision tree, with the first chosen edge $e$ as the query at the root, and recursively, trees for $G / e$ and $G \setminus e$ as the 0-child and 1-child of the root.  Call such a decision tree a Tutte strategy for $G$.

Each leaf $\ell$ of the binary decision tree then corresponds to a graph $G_{\ell}$ with only bridge edges and loops, and to a term $x^iy^j$ where $i$ is the number of bridges and $j$ is the number of loops.  
When $G$ is a connected graph, leaf $\ell$ also corresponds to a spanning tree $T_{\ell}$ of $G$ consisting of all bridge edges of $G_{\ell}$ and all edges $e$ contracted along the path from the root of the decision tree to $\ell$.   

The Tutte polynomial of $G$ is equal to the sum, over the leaves of this decision tree, of their corresponding $x^iy^j$.  Call such a decision tree a {\em Tutte strategy} for the graph $G$.  The value of each coefficient $t_{i,j}$ in the Tutte polynomial is equal to the number of leaves of the tree whose corresponding term is $x^iy^j$.

Given an instance $\mathcal{I} = (G=(V,E),p)$ of the MST-ID problem, and an edge $e \in G$, denote by $\mathcal{I}_{e=0}$ the instance $(G/e,p)$ of the MST-ID problem.  Similarly, denote by $\mathcal{I}_{e=1}$ the instance $(G\setminus e,p)$.  (In both cases, $p$ is restricted to the edges in $E\setminus\{e\}$).
Given an adaptive strategy $\mathcal{T}$ solving an MST-ID instance that begins by querying an edge $e$, denote by $\mathcal{T}^0$ and $\mathcal{T}^1$ the substrategies executed subsequently for $w_e=0$ and $w_e=1$ respectively. In what follows, when we refer to a strategy that solves an instance of MST-ID, that strategy may not be optimal.

\begin{lemma}
\label{lem:decompose}
Let $\mathcal{I}=(G,p)$ be an instance of the adaptive MST-ID problem.  If $\mathcal{T}$ is a strategy for $\mathcal{I}$ whose first query is on an edge $e$ that is neither a loop nor a bridge of $G$, then $\mathcal{T}^0$ and $\mathcal{T}^1$ are strategies solving the MST-ID problem on $\mathcal{I}_{e=0}$ and $\mathcal{I}_{e=1}$, respectively.  Conversely, if $e$ is an edge of $G$ that is neither a loop nor a bridge, and $S_0$ and $S_1$ are strategies solving the MST-ID problem for $\mathcal{I}_{e=0}$ and $\mathcal{I}_{e=1}$ respectively, then the following is a strategy solving the MST-ID problem for $\mathcal{I}$:  first query $e$, then execute $S_0$ if $w_e=0$ and $S_1$ if $w_e=1$. 
\end{lemma}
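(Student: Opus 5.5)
The key is to show that, for a non-loop non-bridge edge $e$ and $b\in\{0,1\}$, the MST-ID certificate condition for $G$ restricted to partial realizations with $\beta_e=b$ coincides with the certificate condition for $G/e$ (if $b=0$) or $G\setminus e$ (if $b=1$). Once this correspondence is in place, both directions of the lemma follow by following root-to-leaf paths of the decision trees.

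\textbf{Step 1: the minor identity.} Let $\beta\in\{0,1,*\}^E$ with $\beta_e=0$, and let $\beta'$ be its restriction to $E\setminus\{e\}$, viewed as a partial realization for $G/e$. By definition, $G_\beta$ is obtained from $G$ by contracting the $0$-edges and deleting the $1$-edges. Contraction and deletion of distinct edges commute, so
\[ G_\beta=(G/e)_{\beta'}. \]
The same argument gives $G_\beta=(G\setminus e)_{\beta'}$ when $\beta_e=1$. Both graphs have the same edge set $E\setminus\{e\}$. Since $e$ is not a bridge, $G\setminus e$ is connected, and $G/e$ is always connected. Hence Lemma~\ref{prop: beta certifies if no loop or bridge} applies to $G$ and to both minors. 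It shows that $\beta$ is an MST-ID certificate for $G$ iff $\beta'$ is an MST-ID certificate for $G/e$ (respectively $G\setminus e$). Both statements reduce to the single condition that every edge of the common graph $G_\beta$ is a loop or a bridge.

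\textbf{Step 2: both directions.} Consider any root-to-leaf path of $\mathcal{T}^b$. Together with the root query $e=b$, it corresponds to a path of $\mathcal{T}$ ending in a leaf with partial realization $\beta$, where $\beta_e=b$. Because $\mathcal{T}$ solves $\mathcal{I}$, $\beta$ certifies. By Step 1, its restriction $\beta'$ certifies for $\mathcal{I}_{e=b}$. Hence every leaf of $\mathcal{T}^b$ stops with a valid certificate, so $\mathcal{T}^b$ solves $\mathcal{I}_{e=b}$. Note also that $\mathcal{T}^b$ never queries $e$ again, since $e$ is already known; so $\mathcal{T}^b$ is a strategy on the edge set $E\setminus\{e\}$.

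The converse runs the same argument backwards. A leaf of the composed strategy has a partial realization $\beta$ with $\beta_e=b$ whose restriction certifies for $\mathcal{I}_{e=b}$. By Step 1, $\beta$ certifies for $G$. The output tree is the one built by Kruskal's algorithm as in the proof of Lemma~\ref{prop: beta certifies if no loop or bridge}.

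\textbf{Main obstacle.} The only subtle point is the identification of edges of the minors with edges of $G$, and checking that contraction followed by further minor operations really yields the same graph $G_\beta$. This requires that contraction and deletion commute, which holds up to this identification and includes the case of parallel edges becoming loops. It also requires the connectivity hypotheses on the minors, which is exactly where the assumption that $e$ is not a bridge is used for $G\setminus e$. Without it, $G\setminus e$ would be disconnected and Lemma~\ref{prop: beta certifies if no loop or bridge} would not apply.
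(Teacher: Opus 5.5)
Your proof is correct, and it takes a different route from the paper's.

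The paper works directly with the fundamental-cycle criterion of Proposition~\ref{prop:fundcyc}. For $\beta_e=0$ it fixes a tree $T$ certified by $\beta$ and splits into two cases. If $e\in T$, it takes $T\setminus\{e\}$ in $G/e$. If $e\notin T$, it swaps $e$ for an edge $\hat e$ of its all-zero fundamental cycle and then removes $e$. In both cases it checks the fundamental-cycle condition edge by edge. For the converse it lifts a certified tree $T'$ of $G/e$ to $T'\cup\{e\}$, and it dismisses the case $w_e=1$ as dual.

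You instead reduce everything to the loop/bridge characterization of Lemma~\ref{prop: beta certifies if no loop or bridge}. You combine it with the identity $G_\beta=(G/e)_{\beta'}$ (resp.\ $(G\setminus e)_{\beta'}$), which holds because contraction and deletion of distinct edges commute. Your route has three advantages:
\begin{itemize}
\item it is short, and it treats both outcomes of the first query uniformly, with no appeal to duality;
\item it makes transparent exactly where the non-bridge hypothesis enters, namely connectivity of $G\setminus e$ so that the characterization applies;
\item it shows that your argument does not even use the non-loop hypothesis.
\end{itemize}

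The trade-off is that your proof rests on Lemma~\ref{prop: beta certifies if no loop or bridge}, whose forward direction the paper only sketches. The paper's argument uses only Proposition~\ref{prop:fundcyc}, and it produces an explicit correspondence between certified trees of $G$ and of the minors. For the lemma itself nothing is lost on your side, since, as you note, Kruskal's algorithm recovers a valid output tree at each leaf.
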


The next two lemmas follow easily from Lemma~\ref{lem:decompose}.

\begin{lemma}
\label{lem:optistutt}
Consider an instance $\mathcal{I}=(G,p)$ of the MST-ID problem. 
If $\mathcal{T}$ is an optimal strategy solving this instance, then $\mathcal{T}$ is a Tutte strategy.
\end{lemma}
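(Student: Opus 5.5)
We argue by strong induction on $|E|$, the number of edges of $G$. The key observations are that an optimal strategy never queries a loop or a bridge, and that it stops exactly when the current minor consists only of loops and bridges. Lemma~\ref{lem:decompose} then lets the Tutte structure propagate down the tree.

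\textbf{Base case and stopping.} If every edge of $G$ is a loop or a bridge, then by Lemma~\ref{prop: beta certifies if no loop or bridge} the empty partial realization (all $*$) is already an MST-ID certificate. The optimal strategy therefore makes no queries and is a single leaf, which is exactly the Tutte strategy for such a $G$. Conversely, if $G$ has some edge that is neither a loop nor a bridge, then the all-$*$ realization is not a certificate, again by Lemma~\ref{prop: beta certifies if no loop or bridge}. Hence any strategy solving $\mathcal{I}$, optimal or not, must make at least one query. (More generally, any strategy may be assumed to stop as soon as $\beta$ is a certificate, since extra queries only add cost, and each $p_e\in(0,1)$ makes every node reachable with positive probability.)

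\textbf{No useless first query.} Suppose the first query of $\mathcal{T}$ is on an edge $e$ that is a loop or a bridge of $G$. A loop lies in no spanning tree, and a bridge lies in every spanning tree. Either way, the weight of $e$ does not affect which spanning trees are MSTs: changing $w_e$ shifts every spanning tree's weight by the same amount, or by zero. Concretely, for any $\beta$ with $\beta_e=*$, $\beta$ is a certificate iff $\beta$ with $e$ set to $0$ is a certificate iff $\beta$ with $e$ set to $1$ is. This follows since $G_\beta$ with $e$ a loop or bridge, contracted or deleted, has the same loop/bridge status on its other edges, using Lemma~\ref{prop: beta certifies if no loop or bridge}. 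We then build $\mathcal{T}'$ by replacing the root with $\mathcal{T}^0$ (or just pruning the query of $e$) and ignoring the answer for $e$. By the observation above, $\mathcal{T}'$ still solves $\mathcal{I}$, and its expected cost is exactly $1$ less, contradicting optimality. This pruning step is the one place that needs care: I would verify that $\mathcal{T}^0$, when run on $G$ without knowing $w_e$, reaches certificates. The cleanest way is to note that loop/bridge status of $e$ persists in every minor, so it is equivalent to treat $e$ as never queried.

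\textbf{Inductive step.} The first query $e$ is therefore neither a loop nor a bridge. By Lemma~\ref{lem:decompose}, $\mathcal{T}^0$ and $\mathcal{T}^1$ solve $\mathcal{I}_{e=0}=(G/e,p)$ and $\mathcal{I}_{e=1}=(G\setminus e,p)$. They must also be optimal for these instances. The cost of $\mathcal{T}$ is $1+(1-p_e)\,\mathrm{cost}(\mathcal{T}^0)+p_e\,\mathrm{cost}(\mathcal{T}^1)$, and by the converse direction of Lemma~\ref{lem:decompose}, replacing either substrategy with a cheaper one would give a cheaper strategy for $\mathcal{I}$. Both $G/e$ and $G\setminus e$ have fewer edges, so by induction $\mathcal{T}^0$ and $\mathcal{T}^1$ are Tutte strategies for $G/e$ and $G\setminus e$. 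A tree whose root queries a non-loop, non-bridge $e$ and whose children are Tutte strategies for $G/e$ and $G\setminus e$ is by definition a Tutte strategy for $G$, which completes the induction.
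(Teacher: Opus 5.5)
Your proposal is correct and follows the paper's proof: induction on $|E|$, ruling out a loop or bridge as the first query by dropping that query, then Lemma~\ref{lem:decompose}, optimality of $\mathcal{T}^0,\mathcal{T}^1$ for $\mathcal{I}_{e=0},\mathcal{I}_{e=1}$, and the inductive hypothesis. One small slip is in the pruning step: running $\mathcal{T}^0$ alone costs $\mathrm{cost}(\mathcal{T}^0)$, which need not equal $\mathrm{cost}(\mathcal{T})-1$, so you should keep the cheaper of $\mathcal{T}^0$ and $\mathcal{T}^1$ (both valid by your equivalence), whose cost is at most $(1-p_e)\,\mathrm{cost}(\mathcal{T}^0)+p_e\,\mathrm{cost}(\mathcal{T}^1)=\mathrm{cost}(\mathcal{T})-1$, and this still gives the contradiction.
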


\begin{lemma}
\label{lem:tuttesolves}
Let $\mathcal{I}=(G,p)$ be an instance of the MST-ID problem.  Let $\mathcal{T}$ be a Tutte strategy for $G$.  Then $\mathcal{T}$ solves the MST-ID problem for $\mathcal{I}$.
\end{lemma}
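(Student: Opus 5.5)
We argue by strong induction on $|E|$, the number of edges of $G$, applying Lemma~\ref{lem:decompose} at the root of $\mathcal{T}$.

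\emph{Base case.} Suppose $G$ contains only bridges and loops. Then the Tutte strategy for $G$ is the trivial decision tree that performs no queries. By Lemma~\ref{prop: beta certifies if no loop or bridge}, the empty partial realization $\beta = *^E$ is already an MST-ID certificate, because $G_\beta = G$ has only loops and bridges. So the trivial strategy solves $\mathcal{I}$, and Kruskal's algorithm as in the proof of that lemma outputs a certified MST.

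\emph{Inductive step.} Otherwise, $\mathcal{T}$ queries at its root some edge $e$ that is neither a bridge nor a loop of $G$, since a Tutte strategy always branches on such an edge. By definition of a Tutte strategy, the subtrees $\mathcal{T}^0$ and $\mathcal{T}^1$ are Tutte strategies for $G/e$ and $G\setminus e$ respectively. Both graphs have fewer edges than $G$.

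We check that both are connected, so that they are valid MST-ID instances.
\begin{itemize}
\item $G/e$ is connected because contraction preserves connectivity.
\item $G\setminus e$ is connected because $e$ is not a bridge.
\end{itemize}
By the induction hypothesis, $\mathcal{T}^0$ solves $\mathcal{I}_{e=0}$ and $\mathcal{T}^1$ solves $\mathcal{I}_{e=1}$. The converse direction of Lemma~\ref{lem:decompose} then says that querying $e$ and running $\mathcal{T}^0$ or $\mathcal{T}^1$ according to $w_e$ solves $\mathcal{I}$. This combined strategy is exactly $\mathcal{T}$.

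\emph{Main point to check.} The only subtlety is the base case: we need that stopping at a leaf is legitimate, meaning the accumulated partial realization $\beta$ is a certificate for the original $G$. Within the induction this is handled entirely by Lemma~\ref{lem:decompose}. If one prefers a direct argument, note that at a leaf $\ell$ the graph $G_\ell$ equals $G_\beta$, where $\beta$ records the answers along the root-to-leaf path (0 for contracted edges, 1 for deleted edges). Since $G_\ell$ has only bridges and loops, Lemma~\ref{prop: beta certifies if no loop or bridge} applies immediately. In either version, the key observation is that a Tutte strategy never queries a bridge or a loop, which is precisely the hypothesis Lemma~\ref{lem:decompose} requires.
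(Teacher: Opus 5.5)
Your proof is correct and follows the same route as the paper's: induction on the number of edges, with the inductive step applying the converse direction of Lemma~\ref{lem:decompose} to the root query $e$, which a Tutte strategy guarantees is neither a loop nor a bridge. Your connectivity check for $G/e$ and $G\setminus e$ and the direct leaf-level argument via Lemma~\ref{prop: beta certifies if no loop or bridge} are valid details that the paper leaves implicit.
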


Now consider the following algorithm.  It executes an adaptive query strategy that solves the MST-ID problem for $G$.
The algorithm takes as input a fixed permutation $\pi$ of the edges of $G$.  It queries the edges of $G$ in the order specified by $\pi$, except that it skips any edges that are bridge or self loops in the minor of $G$ created by contracting edges $e$ already found to have $w_e=0$, and deleting edges $e$ already found to have $w_e=1$.  

\vspace{2mm}

{\bf \noindent ExecuteTutteStrategy($\pi$):}
\begin{enumerate}  
\item Initialize an MST $T$ of $G$ to be the empty set.
\item Process the edges $e$ of $G$ in the order specified by $\pi$ as follows:
If $e$ is a bridge edge 
of $G$, add it to $T$ without querying it.  If $e$ is a loop of $G$, do not query it.  Otherwise, query $e$.  If $w_e=0$ contract it in $G$ and add it to $T$.  If $w_e=1$, delete it from $G$.
\item Output $T$.
\end{enumerate}

Using the above, it is easy to prove the following theorem.  Although stated for a strategy based on a permutation $\pi$, the result holds for any strategy that processes an unqueried edge at each step, provided it  does not query edges that become loops and bridges of $G$.

\begin{theorem} \label{thm: all pe are half}
ExecuteTutteStrategy($\pi$) is an optimal algorithm for the adaptive MST-ID problem, for every permutation $\pi$ of the edges of $G$, in the special case where all $p_e=1/2$. 
\end{theorem}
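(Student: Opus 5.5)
Our plan is to show that when every $p_e=1/2$, all Tutte strategies for $G$ have the same expected cost, and this cost depends only on $G$ (in fact only on its Tutte polynomial). Once this is established, the theorem follows directly. By Lemma~\ref{lem:optistutt}, an optimal strategy is a Tutte strategy. By construction, ExecuteTutteStrategy($\pi$) executes a Tutte strategy, since it queries only edges that are neither loops nor bridges of the current minor, and it contracts or deletes each queried edge according to the answer. By Lemma~\ref{lem:tuttesolves}, that strategy solves the instance. Since its expected cost equals that of the optimal Tutte strategy, it is optimal.

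The main step is computing the expected cost of a Tutte strategy $\mathcal{T}$. Each leaf $\ell$ is reached with probability $2^{-d(\ell)}$, where $d(\ell)$ is its depth, i.e.\ the number of queries made on the root-to-$\ell$ path. So the expected cost is $\sum_\ell d(\ell)2^{-d(\ell)}$. We would show that $d(\ell)$ is determined by the leaf's term $x^iy^j$. Let $G$ be connected with $n=|V|$ and $m=|E|$. Along the path to $\ell$, each contracted edge joins $T_\ell$. The remaining edges of $T_\ell$ are the $i$ bridges of $G_\ell$, which were never queried. The tree $T_\ell$ has $n-1$ edges, so the number of contracted edges is $n-1-i$. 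Similarly, each edge outside $T_\ell$ is either deleted (queried with answer 1) or a loop of $G_\ell$ (never queried). Hence the number of deleted edges is $m-(n-1)-j$. Adding these gives $d(\ell)=(n-1-i)+(m-n+1-j)=m-i-j$.

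Therefore the expected cost equals
\[
\sum_{i,j} t_{i,j}\,(m-i-j)\,2^{-(m-i-j)},
\]
where $t_{i,j}$ counts the leaves with term $x^iy^j$. The coefficients $t_{i,j}$ are independent of the choice of Tutte strategy, by the well-definedness of the Tutte polynomial stated earlier. Hence every Tutte strategy has the same expected cost, which completes the argument.

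The step needing the most care is justifying the depth formula $d(\ell)=m-i-j$ through the correspondence between leaves and spanning trees. In particular, we must check that the bridges of $G_\ell$ together with the contracted edges form exactly a spanning tree, and that loops of $G_\ell$ are exactly the unqueried non-tree edges. If $G$ is not connected, we would apply the same count per component, or simply use that every queried edge is non-loop and non-bridge, so $d(\ell)=m-i-j$ holds directly: each edge of $G$ is either queried, or ends as a bridge or loop of $G_\ell$. This direct count in fact sidesteps connectivity altogether, so I would use it as the primary argument.
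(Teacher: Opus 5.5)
Your proposal is correct and follows essentially the same route as the paper: write the expected cost of any Tutte strategy as $\sum_{i,j} t_{i,j}(m-i-j)2^{-(m-i-j)}$, note that the $t_{i,j}$ do not depend on the edge order, and combine this with Lemma~\ref{lem:optistutt}. The differences are cosmetic. You justify correctness via Lemma~\ref{lem:tuttesolves} rather than Proposition~\ref{prop:fundcyc}, and you derive the depth count $d(\ell)=m-i-j$ (each edge is either queried or survives as a bridge or loop of the leaf minor), which the paper simply asserts.
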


\begin{proof}
Consider an instance $\mathcal{I}=(G,p)$ of the adaptive MST-ID problem, where all $p_e=1/2$. 
Let $\pi$ be a permutation of the edges of $G$.
Consider the Tutte strategy $\mathcal{T}$ for $G$ which results from considering the edges of $G$ in the order specified by $\pi$, skipping edges that become bridges and loops.
Thus $\mathcal{T}$ is the Tutte strategy executed by ExecuteTutteStrategy($\pi$).
By Proposition~\ref{prop:fundcyc}, the tree $T$ output by ExecuteTutteStrategy($\pi$) 
is certified as an MST of $G$ by the partial assignment $\beta \in \{0,1,*\}^E$ representing the answers received to the queries during execution.
It remains to show that $\mathcal{T}$ has minimum possible expected cost.

Consider a root-leaf path in $\mathcal{T}$.
Let $x^iy^j$ be the term corresponding to the leaf at the end of the root-leaf path.
 The number of queries performed on this path is $m-(i+j)$, where $m$ is the number of edges of $G$.
 When $p_e=1/2$ for all $e$, the probability of following this path is $(1/2^{m})2^{i+j}$.
 Thus the expected cost of the tree is equal to $\sum_{i,j} t_{i,j}(m-(i+j))(1/2^{m-(i+j)})$, where the $t_{i,j}$ are the coefficients of the Tutte polynomial.  Since the values of the $t_{i,j}$ do not depend on the ordering of the edges in the tree, it follows that the expected cost of the strategy is the same for any Tutte strategy $\mathcal{T}$ for $G$.  By Lemma~\ref{lem:optistutt}, there is a Tutte strategy $\mathcal{T}'$ for $G$ that is an optimal strategy for $\mathcal{I}$. 
 Since ExecuteTutteStrategy($\pi$) computes a Tutte strategy for $G$, and all such strategies have expected cost equal to the expected cost of $\mathcal{T}'$, it computes an optimal strategy for $\mathcal{I}$.
\end{proof}

The above proof relies heavily on the fact that the $p_e$ are all equal to 1/2.  In fact, the proof can be extended to the case where all $p_e$ are equal to another fixed value in $(0,1)$ (see Corollary~\ref{cor: all pe are equal} in Appendix~\ref{app: adaptive}).  However,
with arbitrary, unequal $p_e$, the optimal tree is still a Tutte tree, but its expected cost can differ depending on the order in which edges are processed. 

Using an amortized cost argument, we can extend the above ideas to get an approximation algorithm for the case where edge probabilities $p_e$ all lie on one side of $\frac{1}{2}$.
Let $DecP$ be the Tutte strategy that queries the edges in decreasing $p_e$ order, skipping created loops and bridges.  
That is, $DecP$ is executed by ExecuteTutteStrategy($\pi$) when $\pi$ lists the edges in decreasing $p_e$ order.
Let $IncP$ be the symmetric strategy that queries in increasing $p_e$ order. 

\begin{theorem} \label{thm :amortization}
$DecP$ is a 2-approximation for the special case of the MST-ID problem where
$p_e \le \frac{1}{2}$ for all edges $e$.   
$IncP$ is a 2-approximation for the special case of the MST-ID problem where
$p_e \ge \frac{1}{2}$ for all edges $e$.
\end{theorem}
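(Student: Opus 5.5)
The plan is to prove the statement for $DecP$ with all $p_e \le \frac12$, and obtain the $IncP$ statement by matroid duality. Under duality, complementing every weight ($p_e \mapsto 1-p_e$) swaps contraction with deletion, and loops with bridges. By Lemma~\ref{prop: beta certifies if no loop or bridge}, the MST-ID condition ``every edge of $G_\beta$ is a loop or a bridge'' is self-dual. So Tutte strategies for $M(G)$ with probabilities $p$ correspond cost-preservingly to Tutte strategies for $M^*(G)$ with probabilities $1-p$, and $IncP$ corresponds to $DecP$. I will therefore argue in matroid language (contraction/deletion on $M(G)$) so that the dual case is literally the same argument; the ExecuteTutteStrategy framework and Lemmas~\ref{lem:decompose}--\ref{lem:tuttesolves} extend verbatim to matroids. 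This extension is the one point that needs a remark, since $M^*(G)$ need not be graphic.

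\textbf{Step 1 (amortization).} Fix any Tutte strategy $\mathcal{T}$ and let $Z(\mathcal{T})$ be the number of queries answered $0$, i.e.\ the number of contractions. Each time a query on $e$ is made, the outcome is independent of the past, and the probability that the answer is $0$ is $1-p_e \ge \frac12$. A Wald/optional-stopping argument, or induction on the decision tree, then gives $\mathbb{E}[\text{cost}(\mathcal{T})] \le 2\,\mathbb{E}[Z(\mathcal{T})]$. Trivially $\mathbb{E}[Z(\mathcal{T})] \le \mathbb{E}[\text{cost}(\mathcal{T})]$. Applying the upper bound to $DecP$ and the trivial bound to an optimal strategy, which is a Tutte strategy by Lemma~\ref{lem:optistutt}, it suffices to show
\[ \mathbb{E}[Z(DecP)] \le \mathbb{E}[Z(\mathcal{T})] \quad\text{for every Tutte strategy } \mathcal{T}. \]

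\textbf{Step 2 (characterize $Z$ pathwise).} On realization $\alpha$, the contracted edges are independent zero edges. The leaf's spanning tree $T_\ell$ consists of these edges plus the final bridges. For the permutation strategy $DecP$, the contracted set is exactly the greedy independent set obtained by scanning the $0$-edges of $\alpha$ in decreasing $p_e$ order, except that an edge is skipped if it has already become a bridge. I will use this description to compare with arbitrary adaptive Tutte strategies.

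\textbf{Step 3 (exchange argument — the main obstacle).} I would show by induction on $|E|$ that, among Tutte strategies, $\mathbb{E}[Z]$ is minimized by querying first an edge $e$ of maximum $p_e$ among non-loop, non-bridge edges. The plan is to take an optimal-for-$Z$ strategy rooted at $f$ with $p_f \le p_e$. By induction, its subtrees can be assumed to be $DecP$ on the minors, so $e$ is queried right after $f$ in both branches, unless $e$ has become a loop or bridge there. In the generic case, swapping the order of $e$ and $f$ yields four leaf-minors $G/\{e,f\}$, $G/e\setminus f$, $G\setminus e/f$, $G\setminus\{e,f\}$ that are identical, with identical probabilities. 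The difference then lies only in the configurations where the second query becomes unnecessary, namely when $e,f$ are parallel or in series in the current minor. I would compute these cases directly: when $e,f$ form a $2$-cycle or a $2$-cut, the expected number of contractions in the two orders differ by a term proportional to $(p_e - p_f)$ times a nonnegative quantity, with sign favoring querying larger $p$ first. This local computation is where I expect the real work. In particular, I must check that the parallel/series degeneracies are the only cases in which the swap changes $Z$, and that the sign is right in both. If it fails for $Z$ directly, my fallback is to run the same exchange on the potential $\mathbb{E}[\text{cost}] - \mathbb{E}[Z]$ or on a weighted combination adapted to $p_e \le \frac12$.
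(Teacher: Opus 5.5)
Your proposal is correct and is essentially the paper's proof in Lemma~\ref{lem:amortize}: charge $2(1-p_e)$ per query, so that cost is at most $2\,\mathbb{E}[Z]$ when $p_e\le\frac12$; show by induction, via an exchange on the first two queries split into generic, parallel and series cases, that $DecP$ minimizes this amortized cost; and obtain the $IncP$ statement by the symmetric (dual) argument. The local computation you defer goes through without any fallback. In the parallel case both orders give $1-p_ep_f$ expected zeros on the pair. In the series case, and in the parallel-and-series case, querying the larger-$p$ edge first saves exactly $p_e-p_f$. In the degenerate cases both orders induce the same distribution over the resulting matroid minors; for example, in the series case $G/e\setminus f$ and $G\setminus e/f$ both equal $G\setminus\{e,f\}$. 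One further use of the induction hypothesis then turns ``$e$, then $f$, then $DecP$'' into $DecP$.
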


\section{Algorithms for special cases of adaptive MST-ID}

We now present algorithms for other special cases of adaptive MST-ID.

\subsection{Easy graphs}
It is easy to solve the adaptive MST-ID problem for some very special graphs.  For example, if the graph is a simple cycle, querying must continue until either a weight 1 edge is identified, or $m-1$ weight 0 edges are identified.  It follows that an optimal adaptive strategy is actually nonadaptive: it is represented by a permutation of the queries, and querying continues in the order specified by the permutation until either an edge $e$ with $w_e=1$ is found, or the first $m-1$ edges in the permutation are found to have $w_e=0$. 
By an exchange argument, the first $m-1$ queries in the optimal permutation must be listed in  decreasing $p_e$ order.  The only question is which $m-1$ queries go first.  This can be determined by generating, for each edge $e$, the permutation which first lists the other $m-1$ edges, sorted in decreasing $p_e$ order, and then lists $e$.  
By computing the expected cost of the strategy corresponding to each of these lists,  the optimal adaptive (and non-adaptive) strategy is found.  

It follows that optimal strategies can also be easily computed for some related graphs. We refer to a graph obtained from a simple path (simple cycle) by replacing each edge with one or more parallel edges as a  \emph{multi-edge path}  (\emph{multi-edge cycle}). When $G$ is a single multi-edge, an optimal strategy can be computed by exchanging the roles of 0 and 1 in the above procedure for a simple cycle.  When $G$ is a multi-edge path, then finding an MST for $G$ requires finding an MST for each multi-edge. Thus, an optimal adaptive (and non-adaptive) strategy can be found by solving the MST-ID problem on each edge independently. However, the MST-ID problem is much more challenging for other types of graphs. We do not even know whether there is a polynomial-time exact algorithm for multi-edge cycles.

\subsection{Min-degree 3 graphs and simple planar graphs}
\label{sec:mindeg3}

We give a constant-factor approximation algorithm for graphs with no degree 2 vertices.  We then show how it can be used to obtain a constant-factor approximation algorithm for simple planar graphs. These approximation algorithms are based on the optimal solution of the MST-WD problem in which you not only need to identify the MST but also determine its weight.

\subsubsection{An optimal algorithm for MST-WD}
Let $G=(V,E)$ be the connected input graph, with independent edge weights $w_e\in\{0,1\}$ and $p_e=\Pr(w_e=1)$. We will assume w.l.o.g. that $0 < p_e < 1$, for all $e \in E$. A partial realization $\beta\in\{0,1,*\}^E$ records the queried weights, with $*$ denoting an unqueried edge. Define $
    G^0_\beta := (V,\{e\in E:\beta_e=0\}).
$
Our algorithm follows a greedy strategy. It maintains the connected components of $G^0_\beta$ and repeatedly queries an edge with minimum $p_e$ among those joining distinct components induced by weight 0 edges. 

\vspace{2mm} 
{\bf\noindent MST-WD ALG}
\begin{enumerate}
    \item Fix a permutation $\pi$ of $E$ in increasing order of $p_e$, breaking ties arbitrarily. Initialize $\beta_e=*$ for every $e\in E$ and $T_0=\varnothing$.

    \item Process the edges in the order specified by $\pi$. If the endpoints of the current edge $e$ belong to the same component of $G^0_\beta$, skip $e$. Otherwise, query $e$ and set $\beta_e=w_e$; if $w_e=0$, also add $e$ to $T_0$.

    \item Extend $T_0$ to a spanning tree $T$ using queried $1$-weight edges, and output $T$ and its weight $|T\setminus T_0|$.
\end{enumerate}

\begin{theorem} \label{thm: mst wd optimal}
MST-WD ALG is an optimal algorithm for MST-WD.
\end{theorem}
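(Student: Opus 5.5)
The plan is an exchange argument combined with induction on $|E|$, after first simplifying what MST-WD requires.

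\textbf{Step 1: the stopping condition.} With binary weights, the MST weight equals the number of connected components of the zero-weight subgraph, minus one. A strategy may therefore stop at $\beta$ exactly when no unqueried edge joins two distinct components of $G^0_\beta$. If such an edge $e$ existed, setting $w_e=0$ and $w_e=1$ would give two consistent completions with different MST weights. This is the analogue of Lemma~\ref{prop: beta certifies if no loop or bridge} for MST-WD, with the roles of bridges removed: in the minor $G_\beta$, every remaining edge must be a loop.

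Consequences:
\begin{itemize}
\item A loop of $G_\beta$ never needs to be queried. Any non-loop must eventually be queried.
\item The MST-WD problem decomposes exactly as in Lemma~\ref{lem:decompose}: after querying a non-loop $e$, the residual problems are MST-WD on $G/e$ and on $G\setminus e$.
\item We may assume an optimal strategy never queries a loop of the current minor.
\end{itemize}

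\textbf{Step 2: induction on $|E|$.} Let $\mathcal{T}$ be an optimal strategy for $G$ whose first query is a non-loop $e$. By the induction hypothesis, we may replace $\mathcal{T}^0$ and $\mathcal{T}^1$ by MST-WD ALG run on $G/e$ and $G\setminus e$ without increasing cost. Let $f$ be a non-loop edge of $G$ with minimum $p_f$; this is the edge MST-WD ALG queries first. If $e=f$ we are done, so assume $e\neq f$ and $p_f\le p_e$. Since greedy on the minor processes edges in increasing $p$-order, and $f$ has the smallest $p$ among non-loops, greedy on $G/e$ or $G\setminus e$ queries $f$ next unless $f$ has become a loop. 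That happens only when $w_e=0$ and $f$ is parallel to $e$ in $G$.

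\textbf{Step 3: compare the two orders.} Let $\mathcal{T}'$ be the strategy that queries $f$ first, then $e$ (unless it has become a loop), then runs greedy on the resulting minor. Both $\mathcal{T}$ (in its modified form) and $\mathcal{T}'$ reach the same minor $G_\beta$ once $e$ and $f$ are resolved. Here an edge skipped as a loop is treated as contracted; since it is a loop at that point, contracting it changes nothing. So the remaining expected costs agree in all four outcomes, because greedy on a minor depends only on that minor and on $p$.

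The costs on $\{e,f\}$ are:
\[
2-(1-p_e)\,\mathbf{1}[e\parallel f] \quad\text{for } \mathcal{T}, \qquad 2-(1-p_f)\,\mathbf{1}[e\parallel f] \quad\text{for } \mathcal{T}'.
\]
Here $e\parallel f$ means $e$ and $f$ are parallel. Since $p_f\le p_e$, we have $\mathrm{cost}(\mathcal{T}')\le \mathrm{cost}(\mathcal{T})$. Applying the induction hypothesis once more to $G/f$ and $G\setminus f$ shows that MST-WD ALG, which queries $f$ first and then recurses greedily, is optimal. The tie-breaking in the increasing-$p$ permutation does not matter by the same computation.

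\textbf{Main obstacle.} The delicate step is making the claim "the continuation is identical" in Step 3 precise. One must check that greedy on the minor after resolving $\{e,f\}$ coincides with what each branch of MST-WD ALG does. Two points need care:
\begin{itemize}
\item Skipping $f$ when it has become a loop is equivalent to contracting it.
\item Greedy's skip rule (same component of $G^0_\beta$) is exactly "loop in $G_\beta$". Deleted weight-1 edges do not affect components of $G^0_\beta$.
\end{itemize}
I would verify these two facts as a short preliminary lemma before running the exchange argument.
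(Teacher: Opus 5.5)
Your proposal is correct and follows essentially the paper's argument: the same stopping characterization (every unqueried edge is a loop of $G_\beta$), the same induction that replaces the continuations after the first query by the greedy strategy, and the same exchange computation ($1+p_e$ versus $1+p_f$ for a parallel pair, with identical resulting minors in every outcome). The only difference is organizational: the paper first proves that increasing-$p_e$ order is optimal among all fixed orders via repeated adjacent swaps, then reduces the adaptive case to that; you instead perform a single swap of the first query $e$ with the minimum-$p$ non-loop edge $f$ (ties broken by $\pi$) and invoke the induction hypothesis again on $G/f$ and $G\setminus f$.
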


We leave the full proof of Theorem \ref{thm: mst wd optimal} to Appendix \ref{app: adaptive} and summarize the key ideas here. A partial realization $\beta$ determines the MST weight iff every unqueried edge lies within a single weight 0 connected component of $G^0_\beta$. To analyze query complexity, we consider the residual multigraph formed by contracting these weight 0 components. For fixed-order strategies, an exchange argument shows that edges should be processed in increasing order of $p_e$. An induction argument on the number of remaining non-loop edges extends this fixed-order optimality to all adaptive decision trees.

\subsubsection{Graphs of minimum degree at least $3$}
For an instance $I=(G,p)$, let $\OPT_{\mathrm{ID}}(I)$ and
$\OPT_{\mathrm{WD}}(I)$ denote the minimum expected numbers of
queries required for MST-ID and MST-WD respectively.

\begin{theorem}
\label{thm:min-degree-three}
MST-WD ALG is a polynomial-time $9$-approximation algorithm for
MST-ID on loopless graphs whose vertices all have degree at least $3$.
\end{theorem}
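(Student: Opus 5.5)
Since MST-WD ALG solves MST-WD optimally (Theorem~\ref{thm: mst wd optimal}), and any MST-WD certificate is also an MST-ID certificate, the algorithm is a valid MST-ID strategy with cost $\OPT_{\mathrm{WD}}(I)$. So it suffices to show that on loopless graphs of minimum degree at least $3$,
\[
\OPT_{\mathrm{WD}}(I)\le 9\,\OPT_{\mathrm{ID}}(I).
\]
I would prove this by building a WD strategy from an optimal ID strategy. Run an optimal MST-ID strategy $\mathcal{T}$, which by Lemma~\ref{lem:optistutt} we may take to be a Tutte strategy, until it reaches a leaf with partial realization $\beta$. Then, by Lemma~\ref{prop: beta certifies if no loop or bridge}, $G_\beta$ has only loops and bridges. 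The loops of $G_\beta$ already lie inside weight-$0$ components, so they need no further work. What remains is to query the unqueried bridges of $G_\beta$: the MST weight is determined exactly once every bridge edge of $G_\beta$ has been queried. Hence
\[
\OPT_{\mathrm{WD}}\le \OPT_{\mathrm{ID}}+\mathbb{E}[B],
\]
where $B$ is the number of unqueried bridges of $G_\beta$ at the leaf reached. The goal becomes $\mathbb{E}[B]\le 8\,\OPT_{\mathrm{ID}}$.

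To bound $B$, I would charge each unqueried bridge $e$ at the leaf to queries performed on edges near $e$, using the degree condition.

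\begin{itemize}
\item \textbf{Why $e$ is a bridge.} An unqueried edge $e=(u,v)$ ends up a bridge of $G_\beta$ because every other edge in every cycle through $e$ has been resolved. In particular, the edges at $u$ other than $e$ have been queried, or have themselves become bridges or loops.
\item \textbf{Neighbours of $e$.} Since $\deg(u)\ge 3$ and $G$ is loopless, $u$ has at least two other incident edges in $G$. Each such edge was either queried with answer $1$ (deleted), or answered $0$ and contracted into a component that is cut off from $v$'s side except through $e$.
\item \textbf{Charging.} A cleaner charging is in terms of the contracted multigraph. Every vertex of $G_\beta$ (a $0$-component) that is a leaf-like endpoint of a bridge must have all its other original incident edges queried as $1$ (deleted) or internal. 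Since every original vertex has degree $\ge3$, each component has at least $2$ deleted $1$-edges leaving it for each bridge beyond the first. This gives $B\le 2\cdot\#\{\text{queried $1$-edges}\}+(\text{number of components})$.
\item \textbf{Forest structure.} Because $G_\beta$ minus its loops is a forest, the number of components equals $B+1$, so that term cannot be charged directly. Instead I would use leaves of the bridge forest. Each leaf component $X$ is incident to exactly one bridge, yet its original vertices have degree $\ge3$. So either $X$ is a single vertex with $\ge2$ deleted edges, or $X$ contains queried $0$-edges. Either way $X$ is charged at least one query made inside or at $X$. Since a forest has at least half its vertices of degree $\le2$ after suppressing degree-$2$ paths, and in a path the internal vertices also have degree $\ge3$ in $G$ and thus $\ge1$ extra queried edge, each bridge receives charge from some query with multiplicity at most $2$ (each queried edge touches two components).
\end{itemize}

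Done carefully, this yields $B\le c\cdot(\text{number of queries on the path})$ pointwise with some constant $c\le 8$. Taking expectations then gives the factor $9$ overall. Polynomial time is immediate from the algorithm's description.

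The main obstacle is this combinatorial charging: establishing a pointwise inequality $B\le 8Q$ at every leaf, where $Q$ is the number of queries on the root-to-leaf path. The difficulty is that a contracted $0$-component may contain many vertices but be charged only once. I would first try the inequality on single-vertex components, where degree $\ge3$ gives at least two deleted edges per non-leaf position, and then handle larger components by noting that each has at least one internal queried $0$-edge. If the pointwise bound fails, I would fall back to an expectation argument, using that a bridge edge left unqueried was "saved" only because neighbouring edges were queried. I expect the constant to come out as $1+8$ from such a case split.
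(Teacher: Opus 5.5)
\textbf{There is a genuine gap: the one nontrivial step is never proved.}

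Your reduction is correct and matches the paper's first step. Running an optimal MST-ID strategy and then querying the $B$ unqueried bridges of $G_\beta$ determines the MST weight, so $\OPT_{\mathrm{WD}}(I)\le\OPT_{\mathrm{ID}}(I)+\mathbb{E}[B]$. The missing piece is the pointwise bound $B\le 8Q$. You only assert that ``done carefully'' it holds, and you plan a fallback in case it fails.

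Two of your charging bullets are false as stated. The edges at an endpoint $u$ of an unqueried bridge need not have been queried at all. In $K_4$ on $\{u,a,b,c\}$, query the triangle $abc$ and suppose all three answers are $1$. This is a certificate in which $G_\beta$ is the star at $u$ with three unqueried bridges, and $u$ has no queried incident edge. So charging a bridge to queries ``near $e$'' at $u$ fails, and so does ``at least $2$ deleted $1$-edges per bridge beyond the first.'' The inequality $B\le 2\#\{\text{queried 1-edges}\}+\#\text{components}$ is vacuous, since $B\le\#\text{components}-1$ always holds. Your last bullet has the right ingredients, but it charges bridges (and bridges between high-degree nodes receive nothing), and it never carries out the count.

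\textbf{How to close it: charge nodes, not bridges.}
\begin{itemize}
\item Let $F$ be the forest whose nodes are the components of $G^0_\beta$ and whose edges are the unqueried edges joining different components. Then $B\le |V(F)|-1$.
\item Take a node of $F$-degree at most $2$. If it has at least two vertices, it contains a queried $0$-edge. If it is a singleton $\{x\}$, then since $G$ is loopless and $\deg(x)\ge 3$, some edge at $x$ is a queried $1$-edge.
\item A queried $0$-edge is claimed by one node and a queried $1$-edge by at most two. So there are at most $2Q$ nodes of $F$-degree at most $2$.
\item A forest has no more nodes of degree at least $3$ than leaves. Hence $|V(F)|\le 4Q$ and $B<4Q$, which would in fact give factor $5$.
\end{itemize}

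The paper avoids per-bridge charging altogether. It bounds the extra cost by $n-1$ and shows that every certificate path has $Q\ge n/8$, by splitting on the number $|S|$ of singleton components of $G^0_\beta$. If $|S|\le 3n/4$, the non-singleton components contain at least $(n-|S|)/2\ge n/8$ queried $0$-edges. Otherwise, the more than $9n/8$ edges incident to $S$ are all queried except those in the output tree, which has only $n-1$ edges.
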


\begin{proof}
Let $I=(G,p)$ be an MST-ID instance on a loopless graph $G=(V,E)$ of minimum degree at least $3$, and set $n=|V|$. Running an optimal MST-ID strategy and then querying every unqueried edge of its output tree determines the MST weight using at most $n-1$ additional queries. Therefore,
\begin{equation}
\label{eq:wd-vs-id}
    \OPT_{\mathrm{WD}}(I)\le \OPT_{\mathrm{ID}}(I)+n-1.
\end{equation}

We next show that every terminating execution of a correct MST-ID strategy makes at least $n/8$ queries. Fix such an execution, let $\beta$ be its final partial realization, and let $T$ be its output tree. Let $S\subseteq V$ be the set of vertices that form singleton components of $G^0_\beta$. We consider two cases.

\bigskip
{\bf \noindent Case 1:} $|S|\leq \frac{3n}{4}$. Let $\mathcal{C}$ be the set of non-singleton components of $G^0_\beta$. Each $C\in\mathcal{C}$ contains at least $|C|-1\geq |C|/2$ queried $0$-weight edges. Therefore, the total number of queries is at least
\[
\sum_{C\in \mathcal{C}} (|C|-1) \geq \frac{1}{2} \sum_{C\in \mathcal{C}} |C| = \frac{1}{2} (n-|S|) \geq \frac{n}{8}.
\]

{\bf\noindent Case 2: $|S|>\frac{3n}{4}$.} Let $E_S\subseteq E$ be the set of edges with at least one endpoint in $S$. Each edge in $E_S$ contributes at most $2$ to the sum of the degrees of the vertices in $S$, so
\[
|E_S|\geq \frac{1}{2}\sum_{v\in S}\deg_G(v)\geq \frac{3}{2}|S|>\frac{9n}{8}.
\]
Since $G$ is loopless and every vertex in $S$ forms a singleton component of $G^0_\beta$, every edge in $E_S$ joins distinct components of $G^0_\beta$. If an edge of $E_S$ were unqueried and outside $T$, Proposition~\ref{prop:fundcyc} would place both endpoints in the same component of $G^0_\beta$, whereas every edge of $E_S$ joins distinct components. Thus every edge of $E_S\setminus T$ has been queried. Since $T$ contains only $n-1$ edges, at least $|E_S|-(n-1) \geq \frac{9n}{8}-(n-1)\geq \frac{n}{8}$
edges of $E_S$ must have been queried. Every correct MST-ID strategy makes at least $n/8$ queries on each execution, so its expected query cost is at least $n/8$. Taking the minimum over all such strategies gives $\OPT_{\mathrm{ID}}(I)\ge n/8$. Hence $n-1\leq 8\OPT_{\mathrm{ID}}(I)$, and \eqref{eq:wd-vs-id} yields  $\OPT_{\mathrm{WD}}(I) \leq \OPT_{\mathrm{ID}}(I)+n-1 \leq 9\OPT_{\mathrm{ID}}(I).$

MST-WD ALG outputs an MST, so it is also a valid algorithm for MST-ID. By the preceding theorem, its expected query cost is $\OPT_{\mathrm{WD}}(I)$. The displayed inequality therefore proves the claimed $9$-approximation guarantee. Finally, it is easy to see that sorting the edges, maintaining connected components, and extending the forest to a spanning tree all take polynomial time.
\end{proof}

\subsubsection{Simple planar graphs}
We obtain the same approximation guarantee for simple planar graphs by applying Theorem~\ref{thm:min-degree-three} to a suitable dual instance. The reduction first contracts bridges and then complements the edge weights.

\begin{theorem}
\label{thm:simple-planar}
There is a polynomial-time $9$-approximation algorithm for MST-ID on
simple planar graphs.
\end{theorem}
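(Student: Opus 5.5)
The plan is to reduce to Theorem~\ref{thm:min-degree-three} via planar duality, in three steps: (1) remove bridges, (2) pass to the dual with complemented probabilities, and (3) check that the dual satisfies the hypotheses of Theorem~\ref{thm:min-degree-three}.

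\textbf{Step 1: removing bridges.} Given a simple planar connected $G$, first contract all bridges to obtain $G'$. By Lemma~\ref{prop: beta certifies if no loop or bridge} and Lemma~\ref{lem:decompose}, a bridge of $G$ stays a bridge in every minor $G_\beta$ in which it survives. It therefore never needs to be queried and lies in every MST. Hence MST-ID strategies for $(G,p)$ and $(G',p|_{E'})$ correspond with identical costs, and $\OPT_{\mathrm{ID}}$ is unchanged. I will check that $G'$ is still simple, planar and loopless. Planarity is preserved by contraction. If $e=(u,v)$ is a bridge in a simple graph, then no other edge joins $u,v$, and $u,v$ have no common neighbour $w$, since either would put $e$ on a cycle. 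So contracting bridges one at a time creates neither loops nor parallel edges. Thus $G'$ is simple, bridgeless and connected. If $G'$ has no edges, the problem is trivial. A single edge would be a bridge, so that case cannot occur.

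\textbf{Step 2: passing to the dual.} Fix a plane embedding of $G'$ and let $G^*$ be its planar dual, which is connected and has the same edge set $E'$. Spanning trees of $G'$ are exactly complements of spanning trees of $G^*$. Since all spanning trees have equal size, $T$ is an MST of $G'$ under $w$ iff $E'\setminus T$ is an MST of $G^*$ under $1-w$. So I consider the instance $(G^*,1-p)$, where querying $e$ in one instance is the same as querying $e$ in the other with the answer complemented. To make the correspondence of certificates fully rigorous I will use Lemma~\ref{prop: beta certifies if no loop or bridge}. Contraction in $G'$ corresponds to deletion in $G^*$ and vice versa, and loops correspond to bridges. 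So with $\bar\beta$ the complemented partial realization, $(G^*)_{\bar\beta}=(G'_\beta)^*$. Then every edge of $G'_\beta$ is a loop or bridge iff the same holds in $(G^*)_{\bar\beta}$. Consequently every strategy for one instance is a strategy for the other with the same cost distribution, $\OPT_{\mathrm{ID}}$ agrees, and an output tree $T^*$ translates to $E'\setminus T^*$.

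\textbf{Step 3: verifying the hypotheses.} I need $G^*$ to be loopless with minimum degree at least $3$. Loops of $G^*$ are exactly bridges of $G'$, and there are none. The degree of a dual vertex is the length of the boundary walk of the corresponding face. In a bridgeless simple graph with at least one edge, a face of length $1$ would be a loop and a face of length $2$ would be a pair of parallel edges. (A single-edge graph is excluded as above, and a cycle of length $\geq 3$ gives faces of length $\geq 3$.) So all faces have length at least $3$, and $G^*$ has minimum degree at least $3$.

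\textbf{Conclusion and main obstacle.} Running MST-WD ALG on $(G^*,1-p)$ and translating queries back gives expected cost at most $9\,\OPT_{\mathrm{ID}}(G^*,1-p)=9\,\OPT_{\mathrm{ID}}(G,p)$. Embedding, dualizing and bridge contraction are all polynomial-time. The main obstacle is the careful justification that the two query problems are equivalent, namely the identity $(G^*)_{\bar\beta}=(G'_\beta)^*$ together with the loop/bridge duality, including for disconnected intermediate minors. I would handle this by applying the deletion/contraction duality edge by edge, inductively.
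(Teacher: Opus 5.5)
Your proposal is correct and follows essentially the same route as the paper. Both contract bridges, pass to the planar dual with probabilities $1-p_e$, check that the dual is loopless with minimum degree at least $3$ (face lengths $\ge 3$ by simplicity), and invoke Theorem~\ref{thm:min-degree-three}. The step you flag as the main obstacle is not actually needed: you already have that $T$ is an MST under $\alpha$ iff $E'\setminus T$ is an MST under the complemented realization, and complementing gives a bijection between completions of $\beta$ and of $\bar\beta$, so $\beta$ certifies $T$ iff $\bar\beta$ certifies $E'\setminus T$ directly (this is how the paper argues), with no need for the minor identity $(G^*)_{\bar\beta}=(G'_\beta)^*$ on possibly disconnected minors.
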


\begin{proof}
Let $I=(G,p)$ be an MST-ID instance where $G$ is a simple planar connected graph. If $G$ is a tree, output $G$ without making any queries. Otherwise, contract all bridges of $G$. Let $H$ be the resulting graph, and let $p_H$ denote the restriction of $p$ to $E(H)$. Every spanning tree of $G$ contains all bridges, whose weights contribute the same amount to every spanning tree. Contracting the bridges gives a bijection between spanning trees of $G$ and $H$, so $\OPT_{\mathrm{ID}}(H,p_H)=\OPT_{\mathrm{ID}}(G,p)$, and it also preserves simplicity and planarity.

Thus $H$ is connected, simple, planar, and bridgeless, with at least one edge, and every edge of $H$ lies on a cycle. Fix a planar embedding of $H$, and let $H^*$ be its planar dual. Since $H$ is connected and has no bridges, $H^*$ is connected and loopless. Moreover, since $H$ is simple, every face boundary has length at least $3$. This boundary length equals the degree of the corresponding dual vertex, so every vertex of $H^*$ has degree at least $3$.

For each $e\in E(H)$, let $e^*$ be its corresponding dual edge in $H^*$.
We assign complementary weights to these edges: $e^*$ has weight $1$
when $e$ has weight $0$, and weight $0$ when $e$ has weight $1$.
Since $p_e$ is the probability that $e$ has weight $1$, the probability
that $e^*$ has weight $1$ is $p^*_{e^*}=\Pr(w_e=0)=1-p_e.$

For a fixed realization $\alpha$ of the weights on $H$, the corresponding
dual realization $\alpha^*$ is obtained by applying this rule to every
edge: $\alpha^*_{e^*}=1-\alpha_e$ for all $e\in E(H)$. The corresponding realizations have the same probability under the two distributions. A query to $e^*$ is simulated by querying $e$ and complementing the answer, and conversely.

For a spanning tree $T$ of $H$, define $T^*=\{e^*:e\in E(H)\setminus T\}$. By planar duality, $T^*$ is a spanning tree of $H^*$, and this gives a bijection between spanning trees of $H$ and spanning trees of $H^*$. 
The map sends $T$ to the duals of the edges omitted from $T$, so
corresponding trees may have different numbers of edges. Euler's formula gives $|T^*|=|E(H)|-(|V(H)|-1)=|V(H^*)|-1$, and the inverse map recovers $T$ by taking the complementary primal edges.

Now writing $\alpha(T)$ and $\alpha^*(T^*)$ for their respective total weights, we have
\[
\begin{aligned}
    \alpha^*(T^*) = \sum_{e\in E(H)\setminus T}(1-\alpha_e) 
    = |T^*|-\sum_{e\in E(H)\setminus T}\alpha_e 
    = |V(H^*)|-1-\sum_{e\in E(H)}\alpha_e+\alpha(T).
\end{aligned}
\]
Since all terms in the final expression except $\alpha(T)$ are independent of $T$, a spanning tree $T$ is an MST under $\alpha$ iff $T^*$ is an MST under $\alpha^*$.

Complementing the queried weights and leaving unqueried edges unqueried gives a bijection between the completions of corresponding partial realizations. Thus $T$ is an MST for every completion exactly when $T^*$ is. Since queries can be simulated in either direction with the same cost and matching outcome probabilities, we obtain $\OPT_{\mathrm{ID}}(H^*,p^*)=\OPT_{\mathrm{ID}}(H,p_H)$.

We now run MST-WD ALG on the dual instance $(H^*,p^*)$, simulating each dual query by one query to the corresponding edge of $G$. Let $T^*$ be its output tree and $T=\{e\in E(H):e^*\notin T^*\}$. Output the original edges of $G$ corresponding to $T$, together with all contracted bridges. The preceding correspondences show that this is an MST of $G$. By Theorem~\ref{thm:min-degree-three}, the expected number of queries is at most $9\OPT_{\mathrm{ID}}(H^*,p^*) =9\OPT_{\mathrm{ID}}(H,p_H) =9\OPT_{\mathrm{ID}}(G,p).$
Finding and contracting bridges, constructing a planar embedding and its dual, and running MST-WD ALG all take polynomial time.
\end{proof}

\section{Hardness}
\label{sec:hardness}

\begin{lemma} \label{lemma: expected cost of tutte}
    Consider an instance $\mathcal{I}=(G,p)$ of the MST-ID problem where $G=(V,E)$ and there exists $q \in (0,1)$ such that $q=1-p_e$ for all $e \in E$.
    Let $m=|E|$, and $n=|V|$.
    Let $t_{i, j}$ denote the coefficient of $x^iy^j$ in the Tutte polynomial representing graph $G$.
     
    The expected cost of a Tutte strategy solving the MST-ID problem for instance $\mathcal{I}$ as a function of $q$, is:

    $$C(q) := q^{n-1}(1-q)^{m-(n-1)}\sum_{i,j} t_{i,j} q^{-i}(1-q)^{-j}(m-(i+j))$$

    \noindent Furthermore, define $T_D(x) := T(x, x) = \sum_{i, j} t_{i, j} x^{i+j}$ to be the diagonal Tutte polynomial. Then 
    $$C(\tfrac{1}{2}) = m - (\tfrac{1}{2})^{m-1} T_D'(2)$$
    where $T_D'(x)$ is the derivative of $T_D(x)$.  So $C(\frac{1}{2})$ is computable in polynomial time iff $T_D'(2)$ is.
\end{lemma}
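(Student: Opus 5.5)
Fix a Tutte strategy $\mathcal{T}$ for $G$ and compute its expected cost leaf by leaf. As in the proof of Theorem~\ref{thm: all pe are half}, a leaf $\ell$ with term $x^iy^j$ corresponds to a root-leaf path on which $m-(i+j)$ edges are queried, and the remaining $i+j$ edges are the bridges and loops of $G_\ell$ and are never queried. So the plan is: (i) determine how many of the queried edges on that path returned $0$ and how many returned $1$; (ii) write the path probability in terms of $q$; (iii) sum over leaves using the fact that $t_{i,j}$ counts the leaves with term $x^iy^j$.

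For step (i), recall that $T_\ell$ consists of the contracted edges on the path together with the $i$ bridges of $G_\ell$, and it is a spanning tree, so it has $n-1$ edges. Hence exactly $n-1-i$ queried edges had weight $0$. The remaining $(m-i-j)-(n-1-i) = m-(n-1)-j$ queried edges had weight $1$ (these were deleted), and the $j$ loops are the unqueried non-tree edges. Since $\Pr[w_e=0]=q$, the path probability is
\[
q^{\,n-1-i}(1-q)^{\,m-(n-1)-j}=q^{n-1}(1-q)^{m-(n-1)}\,q^{-i}(1-q)^{-j}.
\]
Multiplying by the cost $m-(i+j)$ and summing over leaves, grouped by $(i,j)$ with multiplicity $t_{i,j}$, gives $C(q)$ directly. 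This expression is independent of which Tutte strategy we chose, since the $t_{i,j}$ are.

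For $q=\tfrac12$, the prefactor becomes $(\tfrac12)^m$ and $q^{-i}(1-q)^{-j}=2^{i+j}$, so
\[
C(\tfrac12)=(\tfrac12)^m\sum_{i,j}t_{i,j}2^{i+j}\bigl(m-(i+j)\bigr).
\]
We split this sum into two pieces. The first is $m(\tfrac12)^m T_D(2)$. The second is $(\tfrac12)^m\sum_{i,j} t_{i,j}(i+j)2^{i+j}$, which equals $(\tfrac12)^m\cdot 2\,T_D'(2)=(\tfrac12)^{m-1}T_D'(2)$ by differentiating $T_D$ term by term. The first piece simplifies to $m$: the leaf probabilities sum to $1$, i.e.\ $C$ without the cost factor is a probability distribution, and at $q=\tfrac12$ that reads $(\tfrac12)^m T_D(2)=1$. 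Equivalently, $T(2,2)=2^m$ is the standard count of edge subsets. Hence $C(\tfrac12)=m-(\tfrac12)^{m-1}T_D'(2)$, and the polynomial-time equivalence follows because the transformation between $C(\tfrac12)$ and $T_D'(2)$ is an affine map with easily computed coefficients.

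The only genuinely delicate point is step (i), namely justifying that the number of $0$-answers on the path is $n-1-i$. This relies on the paper's observation that $T_\ell$ (contracted edges plus the bridges of $G_\ell$) is a spanning tree of the connected graph $G$, together with the fact that loops and bridges of $G_\ell$ are exactly the unqueried edges. The rest of the argument is bookkeeping.
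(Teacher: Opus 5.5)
Your proof is correct and follows the paper's argument: a leaf with term $x^iy^j$ has $n-1-i$ contractions and $m-(n-1)-j$ deletions, you sum with multiplicity $t_{i,j}$, and at $q=\tfrac12$ you split the sum using $T_D'(2)$ and $T_D(2)=2^m$. The only minor difference is how you justify $T_D(2)=2^m$: you use the fact that the leaf probabilities of the decision tree sum to $1$, while the paper reads it off the rank--nullity expansion of $T(2,2)$. Both justifications are valid.
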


\begin{theorem} \label{thm: sharp P hard}
    The problem of computing the expected cost of an optimal solution to an MST-ID instance is 
     $\#P$-hard, even in the special case where all $p_e = \frac{1}{2}$.
\end{theorem}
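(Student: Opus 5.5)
By Theorem~\ref{thm: all pe are half}, when all $p_e=\frac12$ every Tutte strategy is optimal. So the optimal expected cost equals $C(\frac12)$ from Lemma~\ref{lemma: expected cost of tutte}, which is $m-(\frac12)^{m-1}T_D'(2)$. It therefore suffices to show that computing $T_D'(2)=\frac{d}{dx}T(G;x,x)\big|_{x=2}$ is $\#P$-hard for (multi)graphs. I would do this with a polynomial-time Turing reduction from a known $\#P$-hard Tutte/Potts quantity, using graph transformations and interpolation.

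First, I would recast $T_D'(2)$ in random-cluster language. Write $T(G;x,y)=(x-1)^{-k(E)}(y-1)^{-|V|}Z_G(q,v)$ with $q=(x-1)(y-1)$, $v=y-1$, and $Z_G(q,v)=\sum_{A\subseteq E}q^{k(A)}v^{|A|}$, where $k(A)$ is the number of components of $(V,A)$. The point $(2,2)$ lies on the hyperbola $q=1$, where $Z_G(1,v)=(1+v)^m$ is trivial. Differentiating along the diagonal picks up a transversal (in $q$) derivative. Concretely, I expect $T_D'(2)$ to be an explicit affine function, with easily computable coefficients, of
\[ \Phi_G(1):=\sum_{A\subseteq E}k(A), \]
up to terms in $m,n$. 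Equivalently, $\Phi_G(1)/2^m$ is the expected number of components of the weight-$0$ subgraph. This matches the intuition behind MST-WD. I would verify this identity by direct differentiation of the formula above.

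Second, I would show that computing $\Phi_G(v):=\sum_A k(A)v^{|A|}$ at $v=1$ is $\#P$-hard. I would apply $s$-thickenings (and, if needed, $k$-stretchings) to $G$. Under these operations the random-cluster weight of each edge transforms as $v\mapsto (1+v)^s-1$, while $q$ stays equal to $1$ and the component count is preserved. Hence $\Phi_{G^{(s)}}(1)$, divided by a known normalizing power, equals $\Phi_G(v_s)$ with $v_s=2^s-1$. Since $\Phi_G(v)$ is a polynomial of degree $m$ in $v$, querying $s=1,\dots,m+1$ and interpolating recovers all its coefficients $a_\ell=\sum_{|A|=\ell}k(A)$. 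The same trick applied to $G$ with an added pendant structure, or to one-point joins with small gadgets, should let me separate out finer information, such as $\sum_A [k(A)=1]\,v^{|A|}$. From that I would recover the number of connected spanning subgraphs, i.e.\ $T(G;1,2)$, which is $\#P$-hard by Jaeger--Vertigan--Welsh. An alternative endpoint is reliability $\Pr[k(A)=1]$, which is also $\#P$-hard.

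The main obstacle is this last extraction step: getting from the first moment $\sum_A k(A)v^{|A|}$ to a genuinely $\#P$-hard count. My first attempt is to note that the derivative in $q$ of $Z_G(q,v)$ at $q=1$ is exactly $\Phi_G(v)$. By applying the thickening and stretching to the full diagonal (not just at $x=2$), I would obtain derivatives of $T$ at the family of points $(x,y)=(2^k,\,2^k/(2^k-1))$ on $H_1$ in varying directions. I would then combine these via the chain rule to isolate $\partial_q Z_G(1,v)$ for many $v$. If that proves insufficient, I would fall back on a direct reduction from $s$-$t$ or all-terminal reliability using a series/parallel gadget whose effect on $k(A)$ is controllable.
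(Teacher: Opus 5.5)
Your first step is correct and matches the paper. Every Tutte strategy is optimal at $p_e=\frac12$, so the cost is $m-(\frac12)^{m-1}T_D'(2)$. Differentiating $T_D(x)=\sum_{A\subseteq E}(x-1)^{2k(A)+|A|-1-n}$ at $x=2$ gives $T_D'(2)=2\Phi_G(1)-2^m(1+n)+m2^{m-1}$. So the whole proof rests on the $\#P$-hardness of $\Phi_G(1)=\sum_A k(A)$, which is the paper's $S(G)$.

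That is exactly the step you leave open, and the interpolation machinery you develop cannot supply it. Thickening gives $\Phi_{G^{(s)}}(1)=\Phi_G(2^s-1)$, which is just the same first-moment polynomial $\Phi_G(v)$ at other points. Your chain-rule combination of derivatives along $H_1$ likewise isolates only $\partial_q Z_G(1,v)=\Phi_G(v)$. Pendant edges and one-point joins add nothing either. Since $k$ is additive minus one across a cut vertex, $\Phi_{G_1\cdot G_2}(v)=(1+v)^{m_2}\Phi_{G_1}(v)+(1+v)^{m_1}\Phi_{G_2}(v)-(1+v)^{m_1+m_2}$, so no new information about $G_1$ appears. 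The promised extraction of $\sum_A[k(A)=1]v^{|A|}$, i.e.\ of $T(G;1,2)$, therefore has no mechanism behind it. As written, the proposal contains no reduction from any known $\#P$-hard problem.

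The missing idea is your one-line fallback, made concrete, and it needs no interpolation at all. Fix $s\neq t$ and let $G^+=G+e$ with a new edge $e=(s,t)$. Subsets of $E(G^+)$ that avoid $e$ contribute $\Phi_G(1)$. For a subset $A\cup\{e\}$, the extra edge lowers the component count by exactly one iff $s$ and $t$ lie in different components of $(V,A)$. Hence $\Phi_{G^+}(1)=2\Phi_G(1)-D_{s,t}(G)$, where $D_{s,t}(G)$ is the number of $A\subseteq E$ that separate $s$ from $t$. Two oracle calls therefore give the $s$-$t$ connectedness count $2^m-D_{s,t}(G)$, which is $\#P$-hard by Valiant~\cite{valiant1979complexity}. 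This is the paper's Lemma~\ref{lem:computing S(G) is hard}; with it in place, your first step completes the proof.
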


The proof of Theorem \ref{thm: sharp P hard} is by reduction from two-terminal network reliability \cite{valiant1979complexity} and using Lemma \ref{lemma: expected cost of tutte}.

\section{Non-adaptive MST identification}
\label{sec:non-adaptive}

\subsection{An $O(\log E)$ approximation algorithm for general graphs}

We give an $O(\log E)$ approximation algorithm solving the non-adaptive MST-ID problem for general graphs.  
Our algorithm is based on a technique recently introduced by Brody et al.\ \cite{brody2026extensionsunanimousvoteproblem}.
The technique produces an approximately optimal non-adaptive querying strategy for a (certain type of) querying problem by reducing the problem to Submodular Ranking.  The resulting  Submodular Ranking instance is then solved (approximately), using a generic modified version of the greedy algorithm of Azar and Gamzu for that problem~\cite{AzarGamzu} in which the greedy decisions are based on estimates derived from random sampling. We begin by defining the Submodular Ranking problem.

\subsection{Submodular Ranking}

The input to the Submodular Ranking problem is a set of $m$ monotone submodular set functions $f_1,\ldots,f_m$ defined on subsets of a {\em ground set} $A$, along with weights $w_i \in \mathbb{R}_{\geq 0}$ for $i \in [m]$.
For each $i \in [m]$, $f_i:2^{A} \rightarrow [0,1]$, $f_i(\emptyset)=0$ and $f_i(A) = 1$.

Let $n = |A|$. Given a permutation $\pi:\{1, \dots, n\} \rightarrow A$ of the elements of $A$, the \emph{cover time} of $f_i$ in $\pi$, denoted $t(f_i, \pi)$, is the smallest integer $j \in \{1, \dots, n\}$ such that $f_i(\{\pi(1), \dots, \pi(j)\})=1$.  The Submodular Ranking problem asks for a permutation $\pi:\{1, \dots, n\} \rightarrow A$ with minimum cost, where the cost is the weighted sum of the cover times of the $f_i$ in $\pi$: 
 
\begin{equation}
\text{cost}(\pi) := \sum_{i=1}^m w_i\cdot t(f_i,\pi)
\end{equation}

The greedy approximation algorithm of Azar and Gamzu for Submodular Ranking works as follows. It builds a permutation $\pi$ of the elements, starting from an initially empty permutation, by repeatedly  choosing an element $j \in A \setminus S$ maximizing

\begin{equation}
\text{score}(S,j) :=\sum_{i:f_i(S) < 1} \frac{f_i(S \cup \{j\}) - f_i(S)}{1 - f_i(S)}.
\end{equation}
where $S$ is the set of already chosen items. The algorithm achieves an $O(\log 1/\epsilon)$-approximation factor: its constructed permutation has cost within a factor $O(\log 1/\epsilon)$ of the cost of an optimal permutation~\cite{AzarGamzu,Im2016}.  Here $\epsilon$ is the minimum non-zero value of $f_i(S \cup \{j\})-f(S)$, over all $i \in [m]$ and $S \subset A$.
Thus $\epsilon$ is the minimum non-zero marginal increase for any of the $f_i$.

\subsection{Reduction to Submodular Ranking}

To reduce the MST-ID problem to Submodular Ranking, we take an instance $(G=(V,E),p)$ of MST-ID, and use $E$ as the groundset for the reduced instance of Submodular Ranking. 

 For each $\alpha \in \{0,1\}^E$,  
 we define a function $f_{\alpha}:2^E \rightarrow [0,1]$ such that $f_{\alpha}$ is monotone and submodular,
$f_{\alpha}(\emptyset) = 0$, and $f(E) = 1$.  Additionally,
for each $S \subseteq E$, $f_{\alpha}(S) = 1$ iff $\alpha^S$ certifies some MST $T$ of $G$.  
Finally, given any $S \subseteq A$, $f_{\alpha}(S)$ can be computed in polynomial time.
We defer the details of the definition of $f_{\alpha}$ to Section~\ref{sec: goal value construction}. This definition is specific to the MST-ID problem (and not given by the technique of~\cite{brody2026extensionsunanimousvoteproblem}).  The $2^n$ functions $f_{\alpha}$ are the functions $f_i$ in the reduced Submodular Ranking instance.

Let $w_{\alpha}$ be the probability of $\alpha$ in the MST-ID instance $(G,p)$.
Thus given a permutation $\pi$ of the elements of $E$, 
$cost(\pi)$ is equal to the expected cost of non-adaptive strategy $\pi$ for MST-ID instance $(G,p)$.   It follows that applying the greedy algorithm for Submodular Ranking to the instance created by the reduction yields a permutation $\pi$ that is an $O(\log 1/\epsilon)$-approximation to the optimal non-adaptive strategy for MST-ID instance $(G,p)$.  We show in Section~\ref{sec: goal value construction} that for our definition of the $f_{\alpha}$, $\frac{1}{\epsilon}$ is polynomial in $|E|$, so the approximation factor is logarithmic in $|E|$.  

However, running the greedy algorithm for Submodular Ranking on the exponentially many $f_{\alpha}$ is not computationally feasible. 
Brody et al.\ \cite{brody2026extensionsunanimousvoteproblem} showed how to overcome this difficulty, asssuming we can sample from the distribution on realizations $\alpha$ (as is the case for the MST-ID problem).  In each greedy step of the Submodular Ranking algorithm, we can instead use sampling to estimate the $\text{score}(S,j)$ values, and run the greedy algorithm using the estimated scores.  With a sample of size polynomial in $|A|$ in each greedy step, the expected cost of the permutation produced by the greedy algorithm is still within a factor of $O(\log 1/\epsilon)$ of optimal.  

\subsection{Definition of the goal functions $f_{\alpha}$} \label{sec: goal value construction}

Fix a realization $\alpha \in \{0, 1\}^E$.
We will define a monotone submodular function
$g_{\alpha} : 2^E \to \{0,1,2,\ldots, Q_{\alpha}\}$ 
such that $g_\alpha(S) = Q_\alpha$ iff $\alpha^S$ is an MST-ID certificate for $G$.  
We define $f_{\alpha}:2^E \to [0,1]$ such that $f_{\alpha}(S) = \frac{1}{Q_{\alpha}}g_{\alpha}(S)$ for all $S \subseteq E$.
Thus $f_{\alpha}(S)=1$ iff $\alpha^S$ is an MST-ID certificate for $G$.

The definition of $g_{\alpha}$ is based on two different minors of $G$, $G^*=(V^*,E^*)$ and $G'=(V',E')$, which we define here.  $G^*$ is the minor of $G$ produced by contracting all weight 0 edges of $G$ and deleting all resulting loops.  
Thus the vertices of $G^*$ correspond to 
$C^* :=$ 
the set of connected components of the subgraph of $G$ containing only its weight 0 edges.  We call the $c^* \in C^*$ the {\em weight 0 connected components} of $G$.   
Graph $G'$ is produced by contracting just the weight 0 edges of $G$ that are on a weight 0 cycle of $G$ and deleting all resulting loops.  Thus the vertices of $G'$ correspond to $C':=$ the set of connected components of the subgraph of $G$ containing only the edges on weight 0 cycles of $G$.  The $c' \in C'$ are the 2-edge-connected components of the subgraph of $G$ consisting only of the weight 0 edges. Note that each $c' \in C'$ is contained in some unique $c^* \in C^*$ (See Figure \ref{fig:goalvalue-example} for an example of these constructions) .  Since each $c' \in C'$ is a graph, we write $c' = (V_{c'},E_{c'})$; similarly for $c^* \in C^*$ we write $c^* = (V_{c^*},E_{c^*})$. 

Consider a weight 0 connected component $c^* \in C^*$. The edges of $c^*$ present in $G'$ (all of which have weight 0) form a tree in $G'$, as $G'$ is formed by contracting all the weight 0 cycles of $G$. This tree, which we denote by $t_{c^*}$, spans all vertices of $G'$ corresponding to the 2-edge-connected components $\{c' \in C' \mid c' \subseteq c^*\}$. For each $c^* \in C^*$, we construct a bipartite graph $B_{c^*}$ with vertex set $L \sqcup R$, where $L$ is the set of edges in $t_{c^*}$, and $R$ is the set of weight 1 edges in $G'$ that connect two different vertices of $t_{c^*}$. An edge in $B_{c^*}$ connects $e_L \in L$ and $e_R \in R$ iff in $G'$, $e_L$ lies on the fundamental cycle that $e_R$ forms with $t_{c^*}$. Because this fundamental cycle consists solely of $e_R$ and edges of $t_{c^*}$ (which are in $L$), every edge in $B_{c^*}$ joins a vertex in $R$ to a vertex in $L$, making $B_{c^*}$ bipartite.

Our construction of $g_{\alpha}$ is based on the 
the following lemma.  It gives necessary and sufficient conditions for a set of edge queries, answered according to the fixed full realization $\alpha$, to constitute a certificate of an MST of $G$.

\begin{lemma}
\label{lem:nacert}
Let $S \subseteq E$ and $\alpha \in \{0,1\}^E$. Then $\alpha^S$ is a certificate of an MST $T$ of $G$ iff there exist three disjoint subsets of $S$, call them $S_1, S_2, S_3$ such that:
\begin{itemize}
\item[(1)] $S_1$ is a set of edges in $G^*$ such that $E^*\setminus S_1$ is a spanning tree of $G^*$.  
\item[(2)] $S_2 = \bigcup_{c' \in C'} T_{c'}$ where each $T_{c'}$ is a spanning tree of $c'$. 
\item[(3)] $S_3 = \bigcup_{c^* \in C^*} D_{c^*}$, where $D_{c^*}$ is a set of edges of $G$ that correspond to vertices in $B_{c^*}$ that form a cover of the edges of $B_{c^*}$.
\end{itemize}
\end{lemma}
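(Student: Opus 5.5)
The plan is to reduce everything to Lemma~\ref{prop: beta certifies if no loop or bridge}: $\alpha^S$ is an MST-ID certificate iff every edge of $G_{\alpha^S}$ is a loop or a bridge. Here $G_{\alpha^S}$ is obtained from $G$ by contracting the queried weight-0 edges $S\cap \alpha^{-1}(0)$ and deleting the queried weight-1 edges $S\cap\alpha^{-1}(1)$. I will translate the loop-or-bridge condition into three separate requirements, one for each of the three types of edges: weight-1 edges between distinct weight 0 components (the edges of $G^*$), weight-0 edges lying on weight-0 cycles (inside the $c'$), and everything else inside a single $c^*$.

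\textbf{Step 1: components must be fully identified.} First I show that the queried 0-edges must connect each $c'$. If some $c'$ were not connected by queried 0-edges, take an unqueried 0-edge $e$ of $c'$ whose endpoints are not joined by queried 0-edges. Because $e$ lies on a weight-0 cycle, all the other edges of that cycle are either unqueried or contracted. So $e$ lies on a cycle of $G_{\alpha^S}$ and is neither a loop nor a bridge. Hence $S\cap E_{c'}$ contains a spanning tree $T_{c'}$ of $c'$, which gives condition (2).

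Conversely, let $e$ be a tree edge of $t_{c^*}$ in $G'$. It is a bridge of the weight-0 subgraph. It must still be either identified as a bridge in $G_{\alpha^S}$ or contracted. If it is unqueried, then every weight-1 edge $e_R$ whose fundamental cycle in $t_{c^*}$ contains it must be queried, and so deleted. Otherwise $e$ and $e_R$ lie on a common cycle, since the other tree edges are unqueried or contracted. This is exactly the vertex cover condition on $B_{c^*}$: for each edge $(e_L,e_R)$ of $B_{c^*}$, either $e_L\in S$ or $e_R\in S$. Collecting these over all $c^*$ gives $S_3$.

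\textbf{Step 2: the quotient $G^*$.} After the above, each $c^*$ is contracted to a point in $G_{\alpha^S}$. More precisely, the unqueried parts of $c^*$ become bridges or loops relative to the remaining structure, so the components behave as vertices. The edges of $E^*$ are weight-1 edges between distinct components. Among them, the unqueried ones must form a forest, since a cycle would give non-loop, non-bridge edges. That forest can be extended inside $S$ to a spanning tree of $G^*$. So $S\cap E^*$ contains a complement $S_1$ of a spanning tree, which gives condition (1).

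The three sets are disjoint automatically, since they live in different edge classes. One detail needs care: a weight-1 edge inside $c^*$ belongs to $R$, not to $E^*$, and weight-1 edges inside a single $c'$ become loops and are never needed.

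\textbf{Step 3: sufficiency.} For the converse, given $S_1,S_2,S_3$, I would check directly that every edge of $G_{\alpha^S}$ is a loop or a bridge. Edges inside a $c'$ are loops, by (2). Within $c^*$, an edge of $t_{c^*}$ lies on a cycle only through some $e_R$. By (3), either $e_L$ is contracted or $e_R$ is deleted, and I will argue that every cycle in $G_{\alpha^S}$ restricted to $c^*$ must use a $B$-edge pair with both endpoints unqueried. Between components, (1) leaves at most a spanning tree of unqueried edges, so those edges are bridges.

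\textbf{Main obstacle.} I expect the hardest point to be the interplay between levels. A cycle in $G_{\alpha^S}$ could mix edges of $E^*$ with partially contracted pieces of several $c^*$, and the two directions have to handle this. The fix I would try first is to argue that the cycle space of $G_{\alpha^S}$ decomposes along the block structure. Each cycle either stays inside one $c^*$ (after contracting the $c'$), or its projection to $G^*$ is a nontrivial cycle of unqueried $E^*$-edges. Condition (1) rules out the second case, and condition (3), via fundamental cycles in the tree $t_{c^*}$, rules out the first.
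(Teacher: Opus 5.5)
Your proposal is correct and is essentially the paper's argument: both reduce to Lemma~\ref{prop: beta certifies if no loop or bridge}, derive (1)--(3) from cycles that would survive in $G_{\alpha^S}$, and prove sufficiency by separating cycles that use $E^*$-edges (ruled out by (1)) from cycles inside a single $c^*$ (ruled out by (2) and (3)); the only notable difference is that the paper proves necessity of (3) by applying Proposition~\ref{prop:fundcyc} to an MST (which must contain $t_{c^*}$), whereas you exhibit directly the surviving cycle through $e_L$ and $e_R$. Two small fixes: in Step~2 it is false that each $c^*$ is contracted to a point (unqueried edges of $t_{c^*}$ survive as bridges), but you only need each $c^*$ to stay connected in $G_{\alpha^S}$, which holds because weight-0 edges are never deleted; and in the sufficiency split the projection of a cycle to $G^*$ is in general an even-degree subgraph rather than a single cycle, which still suffices because it contains a cycle of unqueried $E^*$-edges.
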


We now describe the construction of $g_{\alpha}$.
Function $g_\alpha$ is the sum of 3 terms: $g_\alpha = g^* + \sum_{c'} g_{c'} + \sum_{c^*} g_{c^*}$, where $g^*$, the $g_{c'}$, and the $g_{c^*}$ are all monotone submodular functions. We describe each of them below.  In what follows, we rely on the definitions of $G^*$,$G'$, $C'$, and $C^*$ given above. 

{\bf \noindent The Function $g^*$: }Let $M^*(G^*)$ be the dual matroid of $M(G^*)$. 
For $S \subseteq E$, define
$g^*(S) $ to be 
the rank of $S \cap E^*$ in the dual matroid of $G^*$, $M^*(G^*)$.
This function is monotone and submodular (as a rank function of a matroid). 
Define $Q^* :=  |E^*| - |V^*| + 1$.
Because $G$ is connected, $Q^*$ is the rank of the dual matroid $M^*(G^*)$. 

{\bf \noindent The functions $g_{c'}$:} Let $G_0$ be the subgraph formed by weight 0 edges under $\alpha$. Let $c' \in C'$ be a $2$-edge-connected component of $G_0$ (the components $C'$ can be found in linear time using Tarjan's bridge-finding algorithm~\cite{tarjan1974bridges}). 
Recall that $M(c')$ is the matroid associated with graph $c'$. For $S \subseteq E$, 
define $g_{c'}(S)$ to be the rank of 
$S \cap E_{c'}$ in $M(c')$. Define $Q_{c'} := |V_{c'}| - 1$. Thus $Q_{c'}$ is the size of a spanning tree of $c'$.

{\bf \noindent The functions $g_{c^*}$:} Let $c^* \in C^*$.
Define $g_{c^*}(S)$= (number of edges of $B_{c^*}$ covered by vertices corresponding to edges in $S$). That is, edge $(e_0, f)$ in $B_{c^*}$ is covered by $S$ if $e_0 \in S$ or $f \in S$. $g_{c^*}$ is monotone and submodular as it is a coverage function. Define $Q_{c^*} := \text{ number of edges in } B_{c^*}$

\paragraph{Definition of $g_{\alpha}$: } We define $g_\alpha$ as: $g_\alpha := g^* + \sum_{c'} g_{c'} + \sum_{c^*} g_{c^*}$
Also, define $Q_\alpha := Q^* + \sum_{c'} Q_{c'} + \sum_{c^*} Q_{c^*}$.
We note that for $S=E$, each individual function reaches its goal value: $g^*(E) = Q^*$, $g_{c'}(E) = Q_{c'}$, and $g_{c^*}(E) = Q_{c^*}$, so $g_{\alpha}(E) = Q_\alpha$. Figure \ref{fig:goalvalue-example} (Appendix \ref{app: non-adaptive}) illustrates the definition of $g_{\alpha}$ and $Q_\alpha$ on a small example graph.

\begin{lemma}
  \label{prop:certificate if goal value}
  $g_\alpha(S) = Q_\alpha$ iff $\alpha^S$ is a  certificate for an MST $T$ of $G$. Moreover, $Q_\alpha = O(E^2)$.
\end{lemma}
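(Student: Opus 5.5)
Since each of $g^*$, the $g_{c'}$, and the $g_{c^*}$ is bounded above by its own target ($Q^*$, $Q_{c'}$, $Q_{c^*}$ respectively), the plan is to reduce to the component functions: $g_\alpha(S)=Q_\alpha$ holds iff every summand attains its individual maximum. I would then match each summand's saturation condition with the corresponding condition of Lemma~\ref{lem:nacert} and finish with a counting argument for the bound $Q_\alpha = O(E^2)$.

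\textbf{Matching the summands to Lemma~\ref{lem:nacert}.}
\begin{itemize}
\item $g^*(S)=Q^*$ means $S\cap E^*$ has full rank in the dual matroid $M^*(G^*)$. This is equivalent to $S\cap E^*$ containing the complement of some spanning tree of $G^*$. That gives a set $S_1\subseteq S$ as in condition (1); conversely, any such $S_1$ forces full dual rank.
\item $g_{c'}(S)=Q_{c'}$ means $S\cap E_{c'}$ has rank $|V_{c'}|-1$ in $M(c')$, i.e.\ it contains a spanning tree $T_{c'}$ of $c'$. Taking the union over $c'$ gives $S_2$ as in condition (2).
\item $g_{c^*}(S)=Q_{c^*}$ means every edge of $B_{c^*}$ has an endpoint in $S$. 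Then $D_{c^*}:=S\cap(L\cup R)$ is a vertex cover, giving $S_3$ as in condition (3).
\end{itemize}

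\textbf{Disjointness.} The main subtlety is that Lemma~\ref{lem:nacert} requires $S_1,S_2,S_3$ to be disjoint, while saturation of the summands only produces subsets of $S$ that may overlap. I expect to show that the relevant ground sets are themselves pairwise disjoint.
\begin{itemize}
\item $E^*$ consists only of weight $1$ edges joining distinct weight $0$ components. The weight $0$ edges have been contracted, and the resulting loops, which include every weight $1$ edge inside a component, have been deleted.
\item The $E_{c'}$ consist of weight $0$ edges lying on weight $0$ cycles.
\item The $L$-side of $B_{c^*}$ consists of weight $0$ tree edges of $t_{c^*}$. These are bridges of the weight $0$ subgraph, so they lie in no $E_{c'}$.
\item The $R$-side of $B_{c^*}$ consists of weight $1$ edges with both endpoints inside $c^*$. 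These become loops in $G^*$ and so lie outside $E^*$.
\item Different $c^*$ give disjoint edge sets, since each edge is assigned to the component containing its endpoints.
\end{itemize}
With all ground sets pairwise disjoint, the intersections of $S$ with them are automatically disjoint. The converse direction then follows by monotonicity: given disjoint $S_1,S_2,S_3\subseteq S$ satisfying (1)--(3), each summand already reaches its maximum on the relevant subset, hence on $S$. This bookkeeping, in particular confirming that the $R$-edges of $B_{c^*}$ are excluded from $E^*$, is where I expect most of the care to be needed.

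\textbf{The bound $Q_\alpha=O(E^2)$.}
\begin{itemize}
\item $Q^*\le |E^*|\le |E|$.
\item $\sum_{c'}Q_{c'}\le |V|\le |E|+1$, because the $c'$ are vertex-disjoint and $G$ is connected.
\item Each $Q_{c^*}$ is at most $|L|\cdot|R|$. Summed over $c^*$, the sets $L$ and $R$ are disjoint across components, so $\sum_{c^*}Q_{c^*}\le \bigl(\sum|L|\bigr)\bigl(\sum |R|\bigr)\le |E|^2$.
\end{itemize}
Adding the three bounds gives $Q_\alpha = O(E^2)$.
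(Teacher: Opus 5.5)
Your proposal is correct and follows the paper's proof. Both reduce to each of $g^*$, $g_{c'}$, $g_{c^*}$ reaching its own target, match these saturation conditions to conditions (1)--(3) of Lemma~\ref{lem:nacert}, and bound $\sum_{c^*}Q_{c^*}$ by pairing the fewer than $|V|$ tree edges with at most $|E|$ weight-1 edges. You also check explicitly that $E^*$, $\bigcup_{c'}E_{c'}$, and the $L$- and $R$-sides of the $B_{c^*}$ are pairwise disjoint, so the extracted $S_1,S_2,S_3$ are automatically disjoint; the paper leaves this under ``follows easily from the definitions.''
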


Lemma \ref{prop:certificate if goal value} follows from Lemma \ref{lem:nacert} and the definitions of $g^*$, $g_{c'}$, and $g_{c^*}$. Applying the greedy algorithm to this goal function construction yields the following theorem.

\begin{theorem} \label{thm: nonadaptive main theorem}
    There is a $O(\log |E|)$-approximation algorithm that solves the non-adaptive MST-ID problem.
\end{theorem}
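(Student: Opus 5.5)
The plan is to assemble Theorem~\ref{thm: nonadaptive main theorem} from the reduction to Submodular Ranking described above. For each realization $\alpha\in\{0,1\}^E$ we take $f_\alpha = g_\alpha/Q_\alpha$ with weight $w_\alpha = \Pr[\alpha]$. We then check the hypotheses needed by the Azar--Gamzu greedy analysis~\cite{AzarGamzu,Im2016} and by the sampling variant of Brody et al.~\cite{brody2026extensionsunanimousvoteproblem}.

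\textbf{Step 1: the instance is valid.} Each $f_\alpha$ is monotone and submodular, because $g^*$ and the $g_{c'}$ are restrictions of matroid rank functions, each $g_{c^*}$ is a coverage function, and nonnegative sums and scalings preserve both properties. We have $f_\alpha(\emptyset)=0$, since every rank function and coverage function vanishes on $\emptyset$. We also have $f_\alpha(E)=1$, as noted after the definition of $g_\alpha$. If $Q_\alpha=0$, then $\alpha^\emptyset$ is already a certificate, so the cover time is $0$ under any permutation and $f_\alpha$ can be dropped.

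\textbf{Step 2: cost equivalence.} By Lemma~\ref{prop:certificate if goal value}, $f_\alpha(S)=1$ iff $\alpha^S$ is an MST-ID certificate. Hence the cover time $t(f_\alpha,\pi)$ equals the number of queries a non-adaptive strategy $\pi$ performs on realization $\alpha$, because querying stops exactly when the answers so far certify an MST. Therefore $\text{cost}(\pi)=\sum_\alpha w_\alpha t(f_\alpha,\pi)$ is the expected cost of $\pi$, and optimal permutations coincide for the two problems.

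\textbf{Step 3: bounding $1/\epsilon$.} Since $g_\alpha$ is integer valued, any nonzero marginal of $f_\alpha$ is at least $1/Q_\alpha$. By Lemma~\ref{prop:certificate if goal value}, $Q_\alpha=O(|E|^2)$, so $\epsilon\ge 1/\max_\alpha Q_\alpha=\Omega(|E|^{-2})$ and $O(\log 1/\epsilon)=O(\log|E|)$.

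\textbf{Step 4: efficiency, the main obstacle.} There are $2^{|E|}$ functions, so exact greedy is infeasible. I would invoke the sampling-based greedy of Brody et al. It requires three things. First, we must sample $\alpha$, which is immediate from independent edges. Second, we must evaluate $f_\alpha(S)$ in polynomial time. This needs a check that $G^*$, $G'$, the components $C'$ (via Tarjan's bridge finding), the trees $t_{c^*}$, the bipartite graphs $B_{c^*}$, graphic and cographic ranks, and coverage counts are all polynomially computable, which they are. Third, each summand in the score must be bounded, $\frac{f_\alpha(S\cup\{j\})-f_\alpha(S)}{1-f_\alpha(S)}\le 1$ by monotonicity, so that polynomially many samples give accurate enough estimates. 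The delicate part is confirming that their analysis tolerates this approximate greedy with only a constant-factor loss. I would cite their theorem directly and verify only that its stated hypotheses hold here; combined with Step 3 this yields the $O(\log|E|)$ guarantee.
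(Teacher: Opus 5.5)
Your proposal is correct and follows essentially the same route as the paper. Both reduce to Submodular Ranking with $f_\alpha=g_\alpha/Q_\alpha$ weighted by $\Pr[\alpha]$, use Lemma~\ref{prop:certificate if goal value} both for the cost equivalence and for $Q_\alpha=O(|E|^2)$ (so $1/\epsilon$ is polynomial and $\log(1/\epsilon)=O(\log|E|)$), and invoke the sampling-based greedy of Brody et al.\ to avoid enumerating the exponentially many $f_\alpha$. You also handle the degenerate $Q_\alpha=0$ case correctly, which the paper leaves implicit.
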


\section{Submodular goal values for the adaptive case}
\label{sec:goalvalue}

In this section, we explore the possibility of using the {\em goal value approach} of Deshpande et al. to develop approximation algorithms for adaptive MST-ID problems.  We need the following definitions.

Let $\Omega$ be a finite set. 
For $\beta \in (\Omega \cup \{*\})^I$, $i \in I$, and $s \in \Omega$, we use $\beta_{i \leftarrow s}$ to denote the element of $(\Omega \cup \{*\})^I$ produced from $\beta$ by changing $\beta_i$ to $s$. Let $g:(\Omega \cup \{*\})^I \rightarrow \mathbb{R}_{\geq 0}$. 
For $\beta \in (\Omega \cup \{*\})^I$ and $i \in I$ such that $\beta_i \neq *$, we consider $\beta_i$ to be the state of $i$. Thus $g(\beta)$ can be considered to be the value of the items $i \in I$ such that $\beta_i \neq *$, when each such item $i$ is in state $\beta_i$. Therefore we call $g$ a \emph{state-dependent} set function. We say that $g$ is \emph{monotone} if for all $\beta,\gamma \in (\Omega \cup \{*\})^I$ with $\beta \preceq \gamma$, we have $g(\beta) \leq g(\gamma)$. We say that $g$ is \emph{submodular} if for all $\beta,\gamma \in (\Omega \cup \{*\})^I$ with $\beta \preceq \gamma$, $i \in I$ such that $\gamma_i = *$, and $s \in \Omega$, we have $g(\beta_{i \leftarrow s}) - g(\beta) \geq g(\gamma_{i \leftarrow s}) - g(\gamma)$. 

Applying the goal value approach of Deshpande et al.\ to the
 adaptive MST-ID problem requires reducing it to the problem of Stochastic Submodular Cover.  For an instance $(G,p)$ of adaptive MST-ID, that reduction must construct a single, state-dependent monotone submodular set function $g \colon \{0,1,*\}^E \to \{0,1,2,\ldots,Q\}$ such that $g(*,*,*,\ldots,*) = 0$ and for all partial assignments $\beta \in \{0,1,*\}^E$, $g(\beta) = Q$ iff $\beta$ contains an MST-ID certificate for $G$.  We call $g$ a {\em submodular goal function} for the MST-ID problem on $G$, and $Q$ its goal value.  Running the standard greedy algorithm for Stochastic Submodular Cover problem on $g$ then gives an $O(\log Q)$-approximation to the optimal solution of the original adaptive MST-ID instance.  Thus to achieve an $O(\log E)$ approximation algorithm for adaptive MST-ID using this approach, there must exist a submodular goal function $g$ for $G$ whose goal value $Q$ is polynomial in $|E|$.  We leave as an open question whether such a function exists, but present some related observations and results.

First, we 
note that in Section \ref{sec:non-adaptive}, we were able to achieve an $O(\log E)$ approximation algorithm for the non-adaptive problem by
constructing a function $g_\alpha:2^E \rightarrow \{0,1,2,\ldots,Q_{\alpha}\}$ for each full realization $\alpha$.
Each such $g_{\alpha}$ was monotone and submodular, and satisfied the properties that $g(\emptyset)=0$, and for all $S \subseteq E$, $g(S) = Q_{\alpha}$ iff $\alpha^S$ was an MST-ID certificate for $G$.
We call such a function a submodular goal function (for the MST-ID problem) for $(G,\alpha)$, and $Q_{\alpha}$ its goal value.  Crucially, each $Q_{\alpha}$ was polynomial in $|E|$.

As observed by  Brody et al. \cite{brody2026extensionsunanimousvoteproblem},  if there exists a submodular goal function $g$ with goal value $Q$ for an adaptive problem, then for the corresponding non-adaptive problem,
there exists a submodular goal function $g_{\alpha}$ with goal value $Q_{\alpha}=Q$ 
for each full realization $\alpha$
(just take $g_{\alpha}(S) = g(\alpha^S)$). However, we show the converse does not hold in general. A graph can satisfy $\max_\alpha Q_\alpha \leq Q$ across individual realizations, yet require a strictly larger goal value in the adaptive setting. We give more detail on this below.

We define the submodular goal value for MST-ID of a graph $G$, denoted by $Q_{ID}(G)$, to be the minimum value $Q$ such that there is a submodular goal function with goal value $Q$ for the MST-ID problem on $G$. It's easy to see that the $Q_{ID}(G)$ is the same for a graph and its dual.  
Using observations from Bach et al.~\cite{bach2018submodular}, who studied the submodular goal value of Boolean functions, it is easy to show that for any graph $G$, $Q_{ID}(G)$ is lower bounded by both the rank and the corank of $G$, and is upper bounded by $2^{|E|}$. We refer to the product $(|V|-1)(|E|-|V|+1)$ as the \emph{rank$\times$corank value} of $G$. For some particular graphs $G$, we were able to determine the exact value of $Q_{ID}(G)$. In Appendix \ref{app: goal values}, we show that for multi-edge cycles and their dual graphs, the rank$\times$corank value is equal to $Q_{ID}(G)$.

The rank$\times$corank lower bound is not always tight. For example, the diamond graph ($K_4$ minus a diagonal) with one non-diagonal edge doubled, denoted $D^{+1}$, has a rank$\times$corank value of 9, but $Q_{ID}(D^{+1}) =10$. We computed $Q_{ID}$ for a variety of very small graphs, using Integer Linear Programming over state spaces of partial realizations. A table of the computed values is in Appendix~\ref{app: goal values}.

\section{Open problems}
  We left open the following main questions: (1) whether the adaptive MST-ID problem has a poly-time $O(\log E)$-approximation algorithm and whether there is a constant-factor approximation for the non-adaptive problem  (2)  whether the MST-ID problem is NP-hard and  (3) whether $Q_{ID}$ is polynomial in $|E|$.
   
   There are also natural questions to ask about extensions of the problems in this paper.
   We focused exclusively on binary edge weights and uniform query costs.   
   Introducing general edge weights and non-uniform query costs creates new open questions. 
Some of our approaches and techniques extend easily to handle non-uniform query costs.  For example, a natural variant of the greedy submodular ranking algorithm achieves the same $O(\log 1/\epsilon)$ approximation factor even when the elements of $j \in A$ each have a weight (cost $c_j > 0$) and the problem is to minimize the sum of the weighted covering times; the score function is just modified by dividing by $c_j$~\cite{Im2016}.  Thus our $O(\log E)$ approximation algorithm can easily be adapted to handle non-uniform query costs.  However,
handling nonbinary weights appears to be significantly more challenging.  

\section*{LLM usage disclosure}

Google Gemini was consulted regarding the complexity of computing the diagonal of the Tutte polynomial at $x=2$. It then provided the essential ideas for the proofs of Lemma~\ref{lem: computable if S(G) is} and Lemma~\ref{lem:computing S(G) is hard}.  It was also consulted in an unsuccessful attempt to improve some submodular goal value bounds presented in the paper.
It was used to
assist in converting hand drawn images to tikz format for Figure \ref{fig:goalvalue-example} (Appendix \ref{app: non-adaptive}). Claude was used to convert experimental results in Appendix (Table \ref{tab: goal values}) to Overleaf table format. LLMs did not write any part of this paper and the authors take full responsibility for it.

\section*{Acknowledgements}
We thank Evan Brody for his careful proofreading.

\bibliography{main}
\newpage
\appendix

\section{Deferred results and proofs for adaptive strategies} \label{app: adaptive}

\begin{proof} [Proof of Lemma~\ref{lem:decompose}]
Let $E$ be the edge set of $G$.
\paragraph{$\Rightarrow$}  Consider a path from the root of $\mathcal{T}$ down to a leaf.  Let $\beta$ be the partial assignment representing the answers received to the queries along that path.  
Since $\mathcal{T}$ solves the MST-ID problem, $\beta$ is a certificate for an MST $T$ of $G$.  Consider such a tree $T$.  Thus $\beta$ and $T$ satisfy the fundamental cycle condition in Proposition \ref{prop:fundcyc}.
Let $\beta'$ denote the restriction of $\beta$ to $E \setminus \{e\}$.  

Suppose
$\beta_e = 0$. (The case where $\beta_e=1$ is dual and we omit it.) To show that $\mathcal{T}^0$ solves the MST-ID problem for $\mathcal{I}_{e=0}$, it suffices to show that $\beta'$ is an MST-ID certificate for $G / e$.

Case 1: $e \in T$.  Then $T' := T \setminus \{e\}$ is a spanning tree of $G/e$.  
Consider an edge $e' \in E \setminus T$. Then the edges in its fundamental cycle with $T'$ in $G/e$ are the same as the edges in its fundamental cycle with $T$ in $G$, except for the possible absence of $e$.  Therefore, since $\beta$ and $T$ satisfy the fundamental cycle condition of Proposition \ref{prop:fundcyc} for graph $G$, $\beta'$  and $T'$ satisfy that condition for graph $G/e$.  Thus $\beta'$ certifies $T'$ 
 as an MST of $G/e$. 
 
Case 2: $e \not\in T$.   Consider the fundamental cycle $C_e$ that $e$ forms with $T$ in $G$.  By Proposition \ref{prop:fundcyc}, all $e' \in C_e \setminus \{e\}$ are assigned weight 0 by $\beta$.
Let $\hat{e}$ be an arbitrary edge in $C_e \setminus \{e\}$ and 
let $T'$ be the spanning tree of $G$ produced by deleting $\hat{e}$ and replacing it with $e$.  Then $T'' := T'\setminus \{e\}$ is a spanning tree of $G/e$.
We now show that $\beta'$ certifies $T''$ as an MST of $G/e$, using Proposition \ref{prop:fundcyc}.
Let $f \neq e$ be an edge in $G/e$ that is not in $T''$, such that $\beta_f =0$ or $\beta_f = *$. 
Let $C''_f$ be the fundamental cycle that $f$ forms with $T''$ in $G / e$.   
If $f=\hat{e}$, then $C''_f$ is equal to $C_e \setminus \{e\}$, all of whose edges are assigned weight 0 by $\beta$.  Now suppose $f \neq \hat{e}$.  Then for every edge $e'' \in C''_f \setminus \{f\}$, $e''$ must either be a member of the original fundamental cycle $C_f$ that $f$ forms with $T$ in $G$, or a member of $C_e \setminus \{\hat{e},e\}$.   We already know that the edges in $C_e \setminus \{\hat{e},e\}$ are assigned weight 0 in $\beta$, and by Proposition \ref{prop:fundcyc} applied to $\beta$ and $T$, the edges in $C_f \setminus \{f\}$ are also all assigned weight 0 in $\beta$.
 Thus $\beta'$ certifies $T''$ 
 as an MST of $G/e$.

\paragraph{$\Leftarrow$}
Let $e$ be a non-bridge and non-loop edge of $G$.
Suppose $S_0$ and $S_1$ are strategies solving the MST-ID problem for $\mathcal{I}_{e=0}$ and $\mathcal{I}_{e=1}$ respectively.  Consider the strategy $S$ that first queries $e$ and then executes $S_0$ if $w_e=0$ and $S_1$ if $w_e$ = 1.

Consider a root-leaf path in $S$ and let $\beta$ be the partial assignment representing the answers to the queries along the path.  
We will show that $\beta$ is a certificate for an MST $T$ of $G$.  Let $\beta'$ be $\beta$ restricted to $E\setminus \{e\}$.
Suppose $\beta_e = 0$. 
(The case where $\beta_e=1$ is dual and we omit it.) 
Since $S_0$ solves the MST-ID problem for $\mathcal{I}_{e=0}$, $\beta'$ certifies an MST $T'$ of $G/e$.  Consider $T' \cup \{e\}$.  Because $G/e$ is produced by contracting $e$ in $G$ (which merges the endpoints of $e$), and $T'$ is a spanning tree of $G / e$, $T:=T' \cup \{e\}$ is  produced by ``unmerging'' the endpoints of $e$ in $T$.  Thus $T$ is a spanning tree of $G$.  For each edge $e'$ in $E \setminus T$, the fundamental cycle it forms with $T$ has the same edges as the fundamental cycle it forms with $T'$, except perhaps it also has $e$.  Since $\beta'$ and $T'$ must satisfy the condition in Proposition \ref{prop:fundcyc} for graph $G/e$, and $\beta_e = 0$, it follows that $\beta$ and $T$ satisfy the condition in Proposition \ref{prop:fundcyc} for graph $G$.  Therefore $S$ is a solution to the MST-ID problem for $\mathcal{I}$.
\end{proof}

\begin{proof}[Proof of Lemma~\ref{lem:optistutt}]
~~~
The proof is by induction on the number of edges in $G$.  The statement is clearly true for graphs with 1 edge.
We show the inductive step as follows.
If $G$ has only bridges and loops, the lemma clearly holds.
Otherwise, consider an optimal strategy $\mathcal{T}$ for $\mathcal{I}$ and let the first query be on edge $e$.  
It is easy to show that if $e$ is a loop or bridge of $G$, then $\mathcal{T}^0$ (or $\mathcal{T}^1$) would also be a strategy for $\mathcal{I}$, contradicting the optimality of $\mathcal{T}$. 
By Lemma~\ref{lem:decompose}, $\mathcal{T}^0$ and $\mathcal{T}^1$ are strategies solving the MST-ID problem for $\mathcal{I}_{e=0}$ and $\mathcal{I}_{e=1}$.  Further, they must be optimal strategies for that problem, otherwise  they could be replaced in $\mathcal{T}$ by strategies with lower expected cost, contradicting the optimality of $\mathcal{T}$.  Therefore, by the inductive hypothesis, $\mathcal{T}^0$ and $\mathcal{T}^1$ are Tutte strategies for $G/e$ and $G \setminus e$.   Therefore, $\mathcal{T}$ is a Tutte strategy for $G$.
\end{proof}

\begin{proof}[Proof of Lemma~\ref{lem:tuttesolves}]
~~~
The proof is by induction on the number of edges in $G$.  The statement is clearly true for graphs with 1 edge.
We show the inductive step as follows.
If $G$ has only bridges and loops, the lemma clearly holds.

Otherwise, consider a Tutte strategy $\mathcal{T}$ for $G$ and let the first query be on edge $e$.  Since $\mathcal{T}$ is a Tutte strategy, $e$ is not a bridge or a loop. By induction, $\mathcal{T}^0$ is a solution to the MST-ID problem for instance $\mathcal{I}_{e=0}$, and $\mathcal{T}^1$ is a solution to the MST-ID problem for instance $\mathcal{I}_{e=1}$. Then by Lemma \ref{lem:decompose}, $\mathcal{T}$ is a solution to $\mathcal{I}$.
\end{proof}

The following corollary of Theorem~\ref{thm: all pe are half}, extends the result from the case where all $p_e=1/2$ to the case where all $p_e$ are equal.

\begin{corollary} \label{cor: all pe are equal}
ExecuteTutteStrategy($\pi$) is an optimal algorithm for the adaptive MST-ID problem, for every permutation $\pi$ of the edges of $G$, in the special case where all $p_e$ are equal. 
\end{corollary}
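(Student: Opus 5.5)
The plan is to repeat the argument of Theorem~\ref{thm: all pe are half}, replacing the uniform path probability $2^{-m}$ by the probability of a root-to-leaf path when every edge has $p_e = p$ and $q = 1-p$.

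First, fix any Tutte strategy $\mathcal{T}$ for $G$ and any leaf $\ell$ of $\mathcal{T}$ whose term is $x^iy^j$. I will count the queried edges on the path to $\ell$ by their answers.
\begin{itemize}
\item The edges contracted along the path (answer $0$) together with the $i$ bridges of $G_\ell$ form the spanning tree $T_\ell$. Hence exactly $n-1-i$ queried edges received answer $0$.
\item The edges deleted along the path (answer $1$) together with the $j$ loops of $G_\ell$ make up $E\setminus T_\ell$. Hence exactly $m-(n-1)-j$ queried edges received answer $1$.
\end{itemize}
It follows that the path is followed with probability $q^{\,n-1-i}(1-q)^{\,m-n+1-j}$ and uses $m-(i+j)$ queries. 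Both quantities depend only on $(i,j)$ and not on the order in which edges were queried.

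Summing over leaves gives the expected cost
\[
\sum_{i,j} t_{i,j}\, q^{n-1-i}(1-q)^{m-n+1-j}\,(m-(i+j)).
\]
This is exactly the expression $C(q)$ in Lemma~\ref{lemma: expected cost of tutte}, so that lemma can be cited or reproved along the way. The coefficients $t_{i,j}$ are invariants of $G$ and do not depend on the chosen Tutte strategy. Therefore every Tutte strategy for $G$ has the same expected cost.

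To conclude, Lemma~\ref{lem:optistutt} says that some optimal strategy is a Tutte strategy. ExecuteTutteStrategy($\pi$) executes a Tutte strategy for every $\pi$, because it never queries loops or bridges. By Lemma~\ref{lem:tuttesolves} that strategy is correct, and by the previous paragraph its cost equals the optimum.

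The main obstacle is the counting step: verifying that at each leaf the numbers of $0$-answers and $1$-answers are determined by $(i,j)$. This holds because $T_\ell$ is always a spanning tree of the connected graph $G$, which forces $(n-1-i)+i=n-1$ zeros-plus-bridges and $m-(n-1)$ ones-plus-loops. That fact is what makes the probability order-independent.
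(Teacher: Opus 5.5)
Your proposal is correct and follows essentially the same argument as the paper. Both proofs count $n-1-i$ contractions and $m-(n-1)-j$ deletions on each root-to-leaf path, using that the leaf's tree is a spanning tree. Both then obtain the order-independent expected cost $\sum_{i,j} t_{i,j}(m-(i+j))q^{n-1-i}(1-q)^{m-(n-1)-j}$ and conclude via Lemma~\ref{lem:optistutt}, as in the proof of Theorem~\ref{thm: all pe are half}.
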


\begin{proof}
As in the proof of Theorem~\ref{thm: all pe are half},
consider the root-leaf path in $\mathcal{T}$ that would be followed on some
realization $\alpha$ of the weights of the edges of $G$.  Again, let $m$ be the number of edges of $G$. Let $n$ be the number of vertices.  
Let $\hat{p}$ be the common value of all the $p_e$ and let $q = 1-\hat{p}$.
Let $x^iy^j$ be the term corresponding to the leaf at the end of the path.  
The total number of queries along the path is $m-(i+j)$.
Since the output MST has $n-1$ edges,
the number of contractions along the path is $n-1-i$ and the number of deletions is $m-(n-1)-j$.  Therefore, the probability that the algorithm takes this path is $q^{n-1-i}(1-q)^{m-(n-1)-j}$.  
 Thus the expected cost of the strategy is equal to $\sum_{i,j} t_{i,j}(m-(i+j))q^{n-1-i}(1-q)^{m-(n-1)-j}$.  Since the values of the $t_{i,j}$ do not depend on the ordering of the edges in the tree, the corollary follows by the same argument used in the proof of Theorem~\ref{thm: all pe are half}.
\end{proof}

The last proof of this section concerns the MST-WD problem.

\begin{proof}[Proof of Theorem \ref{thm: mst wd optimal}]
A partial realization $\beta$ determines the MST weight iff every unqueried edge has both endpoints in the same component of $G^0_\beta$. For necessity, suppose an unqueried edge $e$ connects two distinct components. Assigning weight $1$ to all unqueried edges, and then changing only the weight of $e$ to $0$, yields two extensions with different MST weights. Sufficiency follows because every extension has the same weight 0 components, and the MST weight is one less than their number. During MST-WD ALG, the components of $G^0_\beta$ only merge, so every skipped edge remains within a component. At termination, $T_0$ spans these components and the queried $1$-weight edges connect them; hence Step~3 returns an MST of known weight.

For analyzing the query complexity, consider the residual multigraph obtained by contracting the components of $G^0_\beta$ and retaining only unqueried edges. We first prove optimality of the ordering $\pi$ among all {\em non-adaptive strategies}, namely, strategies that process edges in a fixed order, skipping any loops encountered along the way. We establish an exchange argument for two orders differing only by swapping consecutive edges $e$ and $f$. Condition on the history before these edges are processed. By independence, the unqueried edge weights retain their original distributions. If either edge is a loop, the order does not affect the number of queries on the pair. If neither edge is a loop and they are not parallel, both are queried regardless of their order. If they are parallel, processing $e$ first uses $1+p_e$ queries on the pair in expectation, whereas processing $f$ first uses $1+p_f$ queries in expectation. Moreover, for every realization, the graphs resulting from the two orders are identical after loops are discarded. Thus, putting $e$ before $f$ cannot increase the expected cost whenever $p_e\leq p_f$. By repeated adjacent exchanges, processing edges in nondecreasing order of $p_e$ is optimal among these fixed-order strategies.

We now prove, by induction on the number of remaining non-loop edges, that the greedy strategy is optimal among all {\em adaptive strategies} as well from every residual state. The claim is immediate when only loops remain. Fix the first query $e$ of an optimal strategy. By the induction hypothesis, its continuation after either outcome can be replaced by the greedy strategy on the resulting graph, with ties broken according to $\pi$, without increasing the expected cost. The resulting strategy processes $e$ first and then processes the remaining edges in the order induced by $\pi$, skipping loops. By iteratively applying the exchange argument above, sorting the entire order cannot increase the expected cost. Therefore, MST-WD ALG is optimal among all adaptive strategies.
\end{proof}

\section{Amortization lemma} \label{app: amortization}

\begin{lemma}
\label{lem:amortize}
$DecP$ achieves minimum expected amortized cost for the special case of the MST-ID problem where $p_e \leq \frac{1}{2}$ for all $e \in E$.
\end{lemma}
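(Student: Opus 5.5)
The plan is to fix a concrete notion of amortized cost, prove the lemma by an exchange argument in the style of the proof of Theorem~\ref{thm: mst wd optimal}, and record how it yields Theorem~\ref{thm :amortization}. I take the \emph{amortized cost} of a root--leaf path of a strategy to be the number of queries on it that return $0$, that is, the number of contractions. For a Tutte strategy ending at a leaf with term $x^iy^j$, this equals $n-1-i$. Minimizing expected amortized cost is therefore the same as maximizing the expected number $i$ of bridges of $G_\ell$, which are the tree edges obtained without a query.

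The reason this notion is useful is a Wald-type identity. For any strategy, the expected number of $1$-answers is $\sum_e p_e\Pr[e \text{ queried}]$. This is at most the expected number of $0$-answers, $\sum_e(1-p_e)\Pr[e\text{ queried}]$, because every $p_e\le\frac12$. Hence
\[
A(\mathcal{T}) \le \mathrm{cost}(\mathcal{T}) \le 2A(\mathcal{T}),
\]
where $A$ denotes expected amortized cost. Combined with the lemma this gives $\mathrm{cost}(DecP)\le 2A(DecP)\le 2A(\mathrm{OPT})\le 2\,\mathrm{cost}(\mathrm{OPT})$.

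\textbf{Step 1 (restrict to Tutte strategies).} First I restrict attention to Tutte strategies. As in the proof of Lemma~\ref{lem:optistutt}, querying an edge that is currently a loop or a bridge gives no information. Removing such a query never increases the number of $0$-answers on any path, so some strategy minimizing $A$ is a Tutte strategy, and Lemma~\ref{lem:decompose} lets us recurse on the subinstances $\mathcal{I}_{e=0}$ and $\mathcal{I}_{e=1}$.

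\textbf{Step 2 (induction).} I then induct on the number of non-loop, non-bridge edges, exactly as in the proof of Theorem~\ref{thm: mst wd optimal}. Take an $A$-optimal strategy with first query $e$. By induction, both of its continuations can be replaced by $DecP$ on $G/e$ and on $G\setminus e$, with ties broken consistently. The result is a fixed-order strategy: $e$ first, then the remaining edges in decreasing $p$ order, skipping loops and bridges.

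\textbf{Step 3 (exchange argument).} It remains to show that swapping adjacent edges $e,f$ with $p_e<p_f$ in a fixed order, so that $f$ is processed first, does not increase $A$. Condition on the history before the pair is reached. If either edge is currently a loop or a bridge, or if the two edges do not interact, then after both are processed the resulting minor and the number of contractions on the pair are the same in distribution for both orders. The interacting cases are two:
\begin{itemize}
\item $e,f$ parallel, i.e.\ $\{e,f\}$ is a $2$-circuit. Once one of them returns $0$, the other becomes a loop. The expected number of contractions on the pair is $1-p_ep_f$ in either order, but the minors produced differ in which edge survives.
\item $\{e,f\}$ is a $2$-cocircuit, i.e.\ a series pair. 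Once one of them returns $1$, the other becomes a bridge and is contracted for free. Processing $x$ first costs $(1-p_x)+(1-p_x)(1-p_y)$ contractions in expectation. This is smaller when $p_x$ is larger, so the edge with larger $p$ should come first.
\end{itemize}
In both cases I must also check that the resulting minors agree after discarding loops and bridges, so that the continuation costs coincide.

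\textbf{Main obstacle.} The hard part is that an adjacent swap can change whether edges \emph{later} in the order become bridges or loops, through longer cycles and cuts rather than only $2$-circuits and $2$-cocircuits. The hypothesis $p\le\frac12$ should enter here. My first attempt would be a coupling on the two-edge outcome table: the outcomes $(0,0)$ and $(1,1)$ give identical minors in both orders. I would then compare the mixed outcomes weighted by $p_e(1-p_f)$ versus $p_f(1-p_e)$, and show that the order with the larger $p$ first produces a minor that is stochastically at least as good for future free bridges.
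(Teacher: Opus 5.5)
There is a genuine gap, and it sits exactly at the step you flag as the main obstacle. Step~3 is never proved, and the difficulty you describe is not real.

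The missing observation is that, for \emph{every} realization, the minor on $E\setminus\{e,f\}$ obtained after both edges have been processed is the same in the two orders. Processing an edge $x$ in the current minor $N$ always produces $N/x$ if $w_x=0$ and $N\setminus x$ if $w_x=1$. When $x$ is skipped because it is a loop or a bridge, $N/x$ and $N\setminus x$ have the same cycle matroid, so it does not matter which operation you regard as applied. Contractions and deletions of distinct edges commute, so both orders end at $G/Z\setminus O$, where $Z$ and $O$ are the edges of $\{e,f\}$ of weight $0$ and weight $1$. Hence every later edge is skipped or queried identically, realization by realization.

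Consequences for your plan:
\begin{itemize}
\item No coupling or stochastic-dominance argument about future free bridges is needed, and no interaction through longer cycles or cuts can occur.
\item Your remark that for a parallel pair ``the minors produced differ in which edge survives'' is mistaken: neither edge is part of the remaining instance.
\item Your proposed comparison of the mixed outcomes with weights $p_e(1-p_f)$ versus $p_f(1-p_e)$ pairs up different realizations. In fact the outcome $(w_e,w_f)=(0,1)$ leads to $G/e\setminus f$ in both orders.
\end{itemize}

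With this in hand, the exchange reduces to the expected number of $0$-answers on the pair itself, and your case list is incomplete. You treat three cases:
\begin{itemize}
\item the edges do not interact: both are queried in either order;
\item parallel but not series: $1-p_ep_f$ in either order;
\item series but not parallel: $(1-p_x)(2-p_y)$ when $x$ goes first, which favors the larger $p$ first.
\end{itemize}
You also need the case where $\{e,f\}$ is both a $2$-circuit and a $2$-cocircuit, i.e.\ the two parallel edges form a whole block of the current minor. There your value $1-p_ep_f$ is wrong. After one edge returns $1$, the other is a bridge and is skipped, so the count is $1-p_x$ when $x$ goes first, again favoring the larger $p$. These four cases are essentially the paper's Cases A--D.

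The hypothesis $p_e\le\frac12$ plays no role in the exchange: $DecP$ minimizes the expected number of $0$-answers for arbitrary $p$. The hypothesis is needed only for your Wald-type bound $\mathrm{cost}\le 2A$; the paper builds the same fact into its charge of $2(1-p_e)$ per query.

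Once Step~3 is repaired this way, your bubble-sort route goes through. The paper takes a shorter path: it picks $a$ with maximum $p_e$, replaces the continuations of an optimal strategy by $DecP$, and compares only the first two queries ($c$ then $a$ versus $a$ then $c$), leaving everything below to the induction hypothesis.
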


\begin{proof}
    Recall that under the amortized charging scheme, querying some edge $e$ incurs an expected cost of $2(1-p_e) = 2 \cdot P[w_e = 0]$. We denote $C(S)$ to be the total expected amortized cost of some strategy $S$. 

    Take some weighted graph $G = (V, E)$. We prove by induction on $m$ that $DecP$ minimizes the expected amortized cost among all skipping strategies when $p_e \leq \frac{1}{2}$ for all $e \in E$.

    \textbf{Base case: } $m=0$ so trivially $C(DecP)=0$.

    \textbf{Inductive hypothesis (IH): } Assume $DecP$ is the strategy that achieves minimum expected amortized cost for all graphs with fewer than $m$ edges. 
    
    Let $a = \arg\max_{e} p_e$, and let $Q$ be an optimal strategy on $G$. Observe the following cases:

    \textbf{Case 1 ($Q$ queries $a$ first):} By IH, $DecP$ is optimal on both subtrees $G\setminus a$ and $G/a$. Replacing $Q's$ subtrees with $DecP$, gives $DecP$ on $G$, so $C(DecP) = c(Q) = OPT$.

    \textbf{Case 2 ($Q$ queries $c \neq a$ first): } By IH, replacing $Q$'s subtrees after the first query with $DecP$ does not increase expected cost. Denote this modified strategy as $Q'$. Given that $a = \arg\max_{e} p_e$, $a$ remains the maximum element in both $G \setminus c$ and $G /c$. Thus, we have that $Q'$ queries $c$ first and then queries $a$ on both branches (unless $a$ is ``skipped'' via becoming a bridge or a loop). We aim to show that $C(DecP) \leq C(Q')$. There are four structural cases.

    \paragraph{Case A: Both branches of $Q'$ certify an MST after querying $c$.} $\{a, c\}$ must be a cycle of size 2 (a parallel pair), and all other elements must be loops or bridges. Parallel edges have symmetric structural behavior regardless of whether $a$ or $c$ is queried first. Thus, we have that $DecP$ also certifies an MST after one query. Given this, we have that $C(Q') = 2(1-p_c)$ and $C(DecP) = 2(1-p_a)$ giving $C(DecP) - C(Q') = 2(p_c - p_a) \leq 0$.

    \paragraph{Case B: Both branches of $Q'$ query $a$ after $c$.} In this case $a$ and $c$ are not parallel, and there exists some cycle through $a$ that does not contain $c$. By cycle elimination, it is easy to show that the symmetric argument holds, and there exists some cycle through $c$ that does not contain $a$. Given this, we have that $DecP$ also queries both elements and both branches of its decision tree. Since both strategies query both elements, the resulting minor is identical and we have that $C(Q') = C(DecP)$.

    \paragraph{Case C: For $Q'$, if $w_c = 0$ query $a$, if $w_c = 1$ MST is certified}

    In this case, $\{a, c\}$ is not a cycle, and every cycle containing $a$ must also contain $c$. Once again, through cycle elimination it's easy to show that the symmetric argument holds, and every cycle containing $c$ must also contain $a$. Given this, we have that $DecP$ mirrors $Q'$ on both of its branches. Thus, we have $C(DecP) - C(Q') = 2(p_c - p_a) \leq 0$.
    
    \paragraph{Case D: For $Q'$, if $w_c = 0$ MST is certified, if $w_c = 1$ query $a$}

    In this case, $\{a, c\}$ are together in a multi-edge of size $\geq 3$. Edges in a multi-edge have symmetric structural behavior, so we have that $DecP$ mirrors $Q'$. Thus, we have that $C(Q') = C(DecP) = 2(1-p_ap_c)$.

    \bigskip

    In all cases, $C(DecP) \leq C(Q') \leq C(Q) = OPT$, which completes our proof.
    
\end{proof}

\section{Proof of \#P-hardness for adaptive expected cost} \label{app: hardness}

\begin{proof}[Proof of Lemma~\ref{lemma: expected cost of tutte}]
    The strategy corresponds to a decision tree that is derived from the recursive definition of the Tutte polynomial.  Each leaf of this tree corresponds to a spanning tree of $G$ that is reached by the path from the root down to this leaf.  The leaf corresponds to the minor of $G$ produced by performing the contractions and deletions on the path.   Let $i$ denote the number of bridges of the minor and $j$ denote the number of loops in the minor.  The number of contractions along the path is $n-1-i$ and the number of deletions is $m-(n-1)-j$.
    Therefore, the probability that the algorithm takes this path and ends up in the leaf we are considering is $q^{n-1-i}(1-q)^{m-(n-1)-j}$.  
    The total number of contractions and deletions along the path is $m-(i+j)$.  

    The number of leaves whose minor has exactly $i$ bridges and $j$ loops, is, by the recursive definition of the Tutte polynomial, exactly the coefficient $t_{i, j}$. Summing the contribution $q^{n-1-i}(1-q)^{m-(n-1)-j}(m-(i+j))$ of each leaf over all possible $(i, j)$, weighted by $t_{i,j}$ gives the expression for $C(q)$. To establish the second claim, note that the derivative of $T_D(x)$ is:  $T'_D(x)= \sum_{i,j} t_{i,j} x^{i+j-1}(i+j)$. Substituting $q = \frac{1}{2}$ into the expression stated in Lemma \ref{lemma: expected cost of tutte} gives

$$C(\tfrac{1}{2}) = (\tfrac{1}{2})^m\sum_{i,j} t_{i,j} 2^{i+j}(m-(i+j))=(\tfrac{1}{2})^m(mT_D(2)-2T'_D(2))$$

It is immediate from the rank-nullity definition of the Tutte polynomial that $T_D(2) = T(2,2) = \sum_{A \subseteq E} 1^{r(E)-r(A)+\nu(A)} = 2^m$ (since it counts the total number of subgraphs of $G$). Therefore,
$$C(\tfrac{1}{2}) = (\tfrac{1}{2})^m\sum_{i,j} t_{i,j} 2^{i+j}(m-(i+j))=m-(\tfrac{1}{2})^{m-1}T'_D(2)$$

Thus we can compute $C(\tfrac{1}{2})$ in polynomial time iff we can compute $T_D'(2)$ in polynomial time.
\end{proof}

We now present the proof of Theorem \ref{thm: sharp P hard}, which is based on the two lemmas presented below. Throughout, we assume $G = (V, E)$ is a connected multigraph where all edges have the same probability $q$ of having weight 0. Let $n=|V|$ and $m=|E|$. To establish $\#P$-hardness for the general problem, it suffices to prove hardness for the special case $q = \frac{1}{2}$.

For $A \subseteq E$, let $G_A=(V,A)$.  
 Let $\kappa(A)$ be the number of connected components of $G_A$, $r(A) = n-\kappa(A)$ be the rank of $G_A$, and $\nu(A) = |A| - r(A)$ be the nullity of $G_A$. Let
$S(G) = \sum_{A \subseteq E} \kappa(A)$. 

\begin{lemma} \label{lem: computable if S(G) is}
$T'_D(2)$ is computable in polynomial time iff $S(G)$ is.
\end{lemma}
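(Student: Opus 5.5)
We will use the rank–nullity (subset expansion) form of the Tutte polynomial,
\[
T(G;x,y)=\sum_{A\subseteq E}(x-1)^{r(E)-r(A)}(y-1)^{\nu(A)},
\]
which the proof of Lemma~\ref{lemma: expected cost of tutte} already invokes. On the diagonal this gives $T_D(x)=\sum_{A\subseteq E}(x-1)^{r(E)-r(A)+\nu(A)}$. Since $G$ is connected, $r(E)=n-1$ and $r(A)=n-\kappa(A)$, so
\[
r(E)-r(A)+\nu(A)=(\kappa(A)-1)+|A|-n+\kappa(A)=|A|+2\kappa(A)-n-1 .
\]
Write $d(A)$ for this exponent.

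The plan is to differentiate term by term:
\[
T_D'(x)=\sum_{A\subseteq E} d(A)\,(x-1)^{d(A)-1}.
\]
At $x=2$ every power of $(x-1)$ equals $1$. The exponent $d(A)-1$ can be negative only if $d(A)=0$, and then the coefficient is zero, so nothing misbehaves. Hence
\[
T_D'(2)=\sum_{A\subseteq E} d(A)=\sum_{A\subseteq E}|A|+2S(G)-(n+1)2^m .
\]
The first sum is $m2^{m-1}$, since each edge lies in exactly $2^{m-1}$ subsets. Therefore
\[
T_D'(2)=m2^{m-1}+2S(G)-(n+1)2^m .
\]

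This is an affine relation between $T_D'(2)$ and $S(G)$ with coefficients computable in time polynomial in $m$, because the numbers involved have $O(m)$ bits. So each quantity can be computed from the other in polynomial time, which proves both directions of the lemma.

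The only point needing care is the exponent bookkeeping. Since $T_D$ is a genuine polynomial, every exponent $d(A)$ is nonnegative: $\nu(A)\ge0$ and $r(E)\ge r(A)$. Term-by-term differentiation and evaluation at $x=2$ are therefore legitimate. We also need connectivity of $G$ to write $r(E)=n-1$; the paper assumes this throughout.
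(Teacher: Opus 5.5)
Your proof is correct and follows the paper's argument essentially line for line. You expand $T_D$ via the rank--nullity formula, rewrite the exponent as $|A|+2\kappa(A)-n-1$ using connectivity, differentiate termwise at $x=2$, and reach the same affine identity $T_D'(2)=2S(G)-(n+1)2^m+m2^{m-1}$. Your remarks on the nonnegativity of the exponents and the $O(m)$ bit-length of the coefficients are small justifications that the paper leaves implicit.
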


\begin{proof}
    Consider the alternative (rank nullity) formulation of $T_D(x)=T(x,x)$:

$$T_D(x) = \sum_{A \subseteq E}(x-1)^{r(E)-r(A)}(x-1)^{\nu(A)}=\sum_{A \subseteq E}(x-1)^{r(E)-r(A)+\nu(A)}$$

We assumed $G$ was connected, so $r(E) = n-1$ and therefore
$r(E)-r(A)=\kappa(A)-1$.
Clearly $\nu(A) = |A| - (n-\kappa(A))$.  

\begin{align}
T_D'(2) &= \sum_{A \subseteq E} (2\kappa(A)-
1+|A|-n)\\&=(2\sum_{A \subseteq E} \kappa(A))-2^{m}(1+n)+\sum_{A \subseteq E} |A|
\\&=(2\sum_{A \subseteq E} \kappa(A))-2^{m}(1+n)+m2^{m-1}
\\&=(2S(G))-2^{m}(1+n)+m2^{m-1}
\end{align}
where the last equality comes from using $\sum_{A\subseteq E} |A| = m2^{m-1}$, since each of the $m$ items appears in half of the subsets.

Thus $T_D'(2)$ is computable in polynomial time iff $S(G)$ is.  
\end{proof}

\begin{lemma}
\label{lem:computing S(G) is hard}
    $S(G)$ is $\#$P-hard to compute.
\end{lemma}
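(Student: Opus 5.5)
We reduce from two-terminal network reliability at $p=\tfrac12$, which Valiant showed is \#P-complete. Given a graph $G$ and terminals $s,t$, this asks for $R_{s,t}(G)$, the number of subsets $A\subseteq E$ such that $s$ and $t$ are connected in $G_A$. The key observation is that $\kappa$ behaves well under identifying $s$ and $t$. Let $G_{st}$ denote the graph obtained from $G$ by merging $s$ and $t$ into a single vertex. It has the same edge set $E$, and each $A\subseteq E$ gives a subgraph of $G_{st}$. For every $A$ we have $\kappa_{G_{st}}(A)=\kappa_G(A)$ if $s,t$ are connected in $G_A$, and $\kappa_{G_{st}}(A)=\kappa_G(A)-1$ otherwise. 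Summing over all $A\subseteq E$ gives
\[
S(G)-S(G_{st}) = 2^m - R_{s,t}(G),
\]
where $R_{s,t}(G)$ counts the subsets connecting $s$ and $t$. Hence two calls to an oracle for $S$ yield $R_{s,t}(G)$, and this is a polynomial-time (Turing) reduction.

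Some care is needed to stay within the setting of Lemma~\ref{lem: computable if S(G) is}, which assumes a connected multigraph. First, I may assume $G$ is connected: if it is not, restrict to the component containing $s$, answering $0$ if $t$ lies elsewhere, and multiply the count by $2^{m'}$ for the $m'$ edges outside that component. Then $G_{st}$ is also connected, though it may contain loops or parallel edges. This is harmless, because the paper explicitly allows multigraphs, and $\kappa$, $S$, and the Tutte polynomial are all defined for them. The definition of $S(G)$ does not require connectivity, but the connected case is what feeds Lemma~\ref{lem: computable if S(G) is} and then Lemma~\ref{lemma: expected cost of tutte}.

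I will also check the bookkeeping of the identity. Merging two vertices reduces the vertex count by one. If $s$ and $t$ lie in different components of $G_A$, those two components fuse, so the component count drops by exactly one. If they lie in the same component, the count is unchanged. Summing $\kappa_G(A)-\kappa_{G_{st}}(A)$ over all $A$ therefore counts the subsets in which $s$ and $t$ are disconnected, which is $2^m-R_{s,t}(G)$.

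The only delicate step is choosing the right source problem, since Valiant's reliability hardness is stated for counting $s$–$t$ connected subgraphs, which is exactly $R_{s,t}$. If one prefers a different starting point, an alternative is to note that $S(G)$ is an evaluation of the Tutte polynomial derivative off the hyperbola region. Jaeger–Vertigan–Welsh hardness does not directly cover derivatives, so the merging argument is cleaner and is the one I will use. Combined with Lemma~\ref{lem: computable if S(G) is} and Lemma~\ref{lemma: expected cost of tutte}, this also yields Theorem~\ref{thm: sharp P hard}.
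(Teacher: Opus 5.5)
Your proof is correct and is essentially the paper's argument: the paper adds a new edge $e=(s,t)$ to form $G^+$ and uses $S(G^+) = 2S(G) - D_{s,t}(G)$, where $D_{s,t}(G)=2^m-\mathrm{Rel}_{s,t}(G)$ counts the subsets leaving $s$ and $t$ disconnected. Your merged graph $G_{st}$ is exactly $G^+/e$, and splitting the subsets of $E(G^+)$ by whether they contain $e$ gives $S(G^+) = S(G) + S(G_{st})$, so your identity $S(G)-S(G_{st}) = 2^m - R_{s,t}(G)$ is the same identity with the contraction done directly; your extra handling of disconnected inputs is a harmless addition.
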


\begin{proof}
    Recall that $S(G)=\sum_{A \subseteq E} \kappa(A)$.  We show that $S(G)$ is \#P-hard to compute, via a reduction from two terminal network reliability. 
    
For vertices $s \neq t$ in $V$, let
$D_{s,t}(G) = |\{A \subseteq E : s \not\leftrightarrow t \text{ in } G_A\}|$, where $s \not\leftrightarrow t$ in $G_A$ means there is no path from $s$ to $t$ in $G_A$.   
    The two terminal network reliability problem takes as input a connected graph $G=(V,E)$ and
    vertices $s \neq t$ in $G$.  The problem is to compute $\mathrm{Rel}_{s,t}(G) = |\{A \subseteq E : s \leftrightarrow t \text{ in } G_A\}|$, the number of subsets $A \subseteq E$ for which there is a path from $s$ to $t$ in $G_A$.

    Let $G^+ := G \cup \{e\}$ where $e = (s,t)$ is a new edge not already in $G$. We can partition the subsets of $E(G^+)$ according to whether they contain $e$, and then analyze how much each side of the partition contributes to the value of $S(G^+)$.

\begin{enumerate}
    \item Subsets not containing $e$ contribute $S(G)$, since connectivity for these subsets is the same as in $G$
    \item Subsets containing $e$ contribute $S(G) - D_{s,t}(G)$, since adding $e$ decreases $\kappa(A)$ by exactly $1$ when $s$ and $t$ are disconnected in $G_A$
\end{enumerate}

Summing both of the partition contributions together gives 
$$S(G^+) = 2S(G) - D_{s,t}(G),$$
and rearranging gives
$$D_{s,t}(G) = 2S(G) - S(G^+).$$

Thus an oracle for $S$ yields $D_{s,t}(G)$ via two calls, and hence also 

$$\mathrm{Rel}_{s,t}(G) = 2^{m} - D_{s, t}(G)$$

Since two terminal reliability is \#P-hard (\cite{valiant1979complexity}, Theorem 1), $S(G)$ is \#P-hard as well.
\end{proof}

\begin{proof}[Proof of Theorem~\ref{thm: sharp P hard}]

The proof follows immediately from Lemmas~\ref{lemma: expected cost of tutte}, \ref{lem: computable if S(G) is}, and \ref{lem:computing S(G) is hard}.
\end{proof}

\section{Deferred proofs for non-adaptive MST-ID} \label{app: non-adaptive}

\begin{proof} [Proof of Lemma~\ref{lem:nacert}]
Let $\beta := \alpha^S$. By Lemma \ref{prop: beta certifies if no loop or bridge}, $\beta$ is an MST-ID certificate for $G$ iff the minor $G_\beta$ contains only loops and bridges.

\paragraph{Necessity:}
Suppose $S$ does not contain a set $S_1$ satisfying condition (1). Then the unqueried weight 1 edges ($E^* \setminus S$) contain a cycle in $G^*$. If $E^* \setminus S$ were acyclic, then it could be extended to a spanning tree $E^* \setminus S_1$ of $G^*$, giving $S_1 \subseteq S$, which is a contradiction. Since $G^*$ is formed by contracting weight 0 edges and deleting loops, these unqueried weight 1 edges remain as a non-loop cycle in $G_\beta$. By Lemma \ref{prop: beta certifies if no loop or bridge}, this contradicts that $\beta$ is a certificate for an MST of $G$.

Suppose $S$ does not contain a set $S_2$ satisfying the conditions in (2) above.  Then there is some $c' \in C'$ such that the edges contracted in forming $G_{\beta}$ from $G$ do not span the vertices of $c'$.  Since $c'$ is a 2-edge-connected component of weight 0 edges, it follows that the contractions performed in forming $G_{\beta}$ will not merge all vertices of $c'$ into a single vertex and $G_{\beta}$ will still have a non-loop cycle.  Contradiction.  

Finally, suppose $S$ does not contain a set $S_3$ satisfying the conditions in (3). Then there is a $c^* \in C^*$ such that the edges in $S$ that correspond to vertices in $B_{c^*}$ do not cover the edges of $B_{c^*}$. Consider the endpoints of an uncovered edge in $B_{c^*}$: the endpoint in $R$ corresponds to a weight 1 edge $e_R$ in $G'$ connecting two different vertices of $t_{c^*}$, and the endpoint in $L$ consists of an edge $e_L$ in $t_{c^*}$ that is in the fundamental cycle that $e_R$ forms with $t_{c^*}$.  
We will show that the fact that $e_R$ and $e_L$ are not in $S$ leads to a contradiction.

Because each edge in $t_{c^*}$ is 
a bridge edge in the subgraph of $G'$ consisting only of its weight 0 edges, each edge of $t_{c^*}$ is a unique minimum-weight edge in a cut of $G'$, and hence also in a cut of $G$.   
Thus any MST $T$ of $G$ must contain all edges of $t_{c^*}$, including edge $e_L$. Further, $T$ cannot contain any weight 1 edges with both endpoints in $c^*$, since the vertices of $c^*$ can be spanned by weight 0 edges. 

Therefore, $e_R$ is not in $T$. Since $e_L$ lies on the fundamental cycle formed by $e_R$ with $t_{c^*}$ in $G'$, and $T$ spans $c^*$ using exclusively weight 0 edges, $e_L$ must also lie on the fundamental cycle that $e_R$ forms with $T$ in $G$. Thus $S$ does not contain the weight 1 edge $e_R$, nor does it contain all the edges in the fundamental cycle $e_R$ forms with $T$. By Lemma~\ref{prop: beta certifies if no loop or bridge}, $\beta$ does not certify $T$ as an MST of $G$, contradiction.

\bigskip
{\bf \noindent Sufficiency:}
Suppose $S_1 \cup S_2 \cup S_3 \subseteq S$, with $S_1$,$S_2$,$S_3$ having the properties given in (1),(2),(3).

Let $D$ be a cycle in $G$ that is not a loop. 
Forming $G_{\beta}$ from $G$ involves performing contractions and deletions.  
We will show that these operations destroy cycle $D$ (either by deleting an edge in $D$ or by merging all vertices of $D$ into a single vertex).

Since the edges in $S_1$ have weight 1, the deletions include deleting all edges in $S_1$.  
If $D$ contained edges with endpoints in more than one zero-connected component $c^* \in C^*$, that edge would be deleted when deleting the edges in $S_1$, destroying the cycle.  
Suppose $D$ does not contain any such edges.  Then deleting the edges in $S_1$ does not affect $D$.  Consider the effect on $D$ of contracting the edges in $S_2$. If cycle $D$ has no edges with endpoints in different components $c' \in C'$, then the contractions cause the vertices of $D$ to be merged into a single vertex, again destroying cycle $D$.

Thus if after deleting the edges in $S_1$ and contracting the edges in $S_2$, $D$ has not been destroyed, the remaining edges of $D$ form a non-empty set we'll call $K$, consisting of edges with endpoints in different $c' \in C'$, but in the same $c^* \in C^*$. 
 Let $K_1$ be the set of weight 1 edges of $K$; it is non-empty because the weight 0 edges of $K$ are all contained in the (acyclic) tree $t_{c^*}$.  Consider an edge 
$\hat{e} \in K_1$.  It corresponds to a vertex of $B_{c^*}$. 
Now suppose we contract or delete the edges in $S_3$, according to their weights.  
This either causes contraction of a path of edges in $t_{c^*}$ between the endpoints of $\hat{e}$, or deletion of $\hat{e}$.  If at least one $\hat{e} \in K_1$ is deleted, then $D$ is destroyed.  Suppose that is not the case.  Then
the endpoints of each edge $\hat{e} \in K_1$ have been merged into a common vertex, causing $\hat{e}$ to become a loop.
Consider the remaining set of edges in $D$.
They are all in $K \setminus K_1$, so they are weight 0 edges, and are in $t_{c^*}$.  Since $t_{c^*}$ is acyclic, 
these edges cannot form a cycle.  Therefore cycle $D$ has been destroyed.
\end{proof}

Below, we present the proof of Lemma \ref{prop:certificate if goal value}.

\begin{proof}

By the definition of $g_\alpha$,
\[
g_\alpha(S) = Q_\alpha \iff g^*(S)=Q^* \ \text{ and } \ \forall c'~(g_{c'}(S)=Q_{c'}) \  \ \text{ and } \
\forall {c^*}~(g_{c^*}(S)=Q_{c^*}) \ .
\]
It remains to show this condition holds iff $\alpha^S$ is a certificate of an MST of $G$.
By Lemma~\ref{lem:nacert}, $\alpha^S$ is a certificate of an MST of $G$ iff it contains subsets $S_1,S_2,S_3$ satisfying the properties stated in the lemma.   It follows easily from the definitions of $g^*$, the $g_{c'}$, and the $g_{c^*}$ that the condition given for $\alpha^S$ in Lemma~\ref{lem:nacert} is satisfied iff the above condition is satisfied. 

It remains to show that $Q_\alpha = O(E^2)$. Note that $Q^* \leq |E|$. The sum $\sum_{c'} Q_{c'} \leq |V| - 1$
since the $c'$ do not share vertices.
For $\sum_{c^*} Q_{c^*} = \sum_{c^*} |E(B_{c^*})|$, each edge of $B_{c^*}$ corresponds to a
(weight 0 edge, weight 1 edge) pair where the weight 0 edge lies on the fundamental cycle
the weight 1 edge forms with tree $t_{c^*}$.
Since together the trees $t_{c^*}$ have less than $|V|$ edges total, 
 $\sum_{c^*} Q_{c^*} \leq |E||V| \leq |E|^2$ since $G$ is connected.
Hence $Q_\alpha=O(E^2)$. Therefore, we have proved Lemma \ref{prop:certificate if goal value}.
\end{proof}

\providecommand{\figwidth}{1\textwidth}
\begin{figure}[htbp]
\centering
\resizebox{\figwidth}{!}{%
\begin{tikzpicture}[
  v/.style={circle,fill=black,inner sep=1.2pt},
  big/.style={circle,draw=black,fill=white,inner sep=1.6pt,font=\scriptsize},
  ed/.style={thick},
  lab/.style={font=\scriptsize,inner sep=1pt,fill=white,fill opacity=.9,text opacity=1},
  ccbox/.style={draw=black!40,rounded corners=7pt,fill=black!7,inner sep=6pt},
  subbox/.style={draw=blue!60,dashed,thick,rounded corners=5pt,fill=blue!5,inner sep=3pt},
  tbox/.style={draw=black!70,dotted,thick,rounded corners=7pt,inner sep=6pt},
  ttl/.style={font=\small\bfseries,anchor=north west},
  flow/.style={-{Stealth[length=3mm]},very thick,black!70},
  flab/.style={font=\scriptsize,align=center,text width=2.4cm}
]

\begin{scope}[shift={(0,0)}]
  \node[ttl] at (-2.1,4.9) {(a) $G$};
  \coordinate (aL) at (-1.75,0);   \coordinate (aT) at (0.45,1.5);
  \coordinate (aB) at (0.45,-1.5); \coordinate (aR) at (2.5,0);
  \coordinate (bT) at (1.2,3.85);  \coordinate (bB) at (1.2,2.75);
  \coordinate (cR) at (5.05,0);
  \node[v] (v1) at (-1.1, 0.75) {};  \node[v] (v2) at (-1.1,-0.75) {};
  \node[v] (v3) at ( 0.7, 0)    {};
  \node[v] (v4) at ( 2.0, 1.0)  {};  \node[v] (v5) at ( 2.0,-1.0)  {};
  \node[v] (u1) at ( 0.2, 3.3)  {};  \node[v] (u2) at ( 2.2, 3.3)  {};
  \node[v] (w1) at ( 4.4, 1.0)  {};  \node[v] (w2) at ( 4.4,-1.0)  {};
  \begin{pgfonlayer}{background}
    \node[ccbox,fit=(v1)(v2)(v3)(v4)(v5)(aL)(aT)(aB)(aR),label={[font=\scriptsize]above left:$c^*_A$}] {};
    \node[ccbox,fit=(u1)(u2)(bT)(bB),label={[font=\scriptsize]above:$c^*_B$}] {};
    \node[ccbox,fit=(w1)(w2)(cR),label={[font=\scriptsize]above right:$c^*_C$}] {};
    \node[subbox,fit=(v1)(v2),label={[font=\scriptsize,blue!80!black]left:$A$}] {};
    \node[subbox,fit=(u1)(u2),label={[font=\scriptsize,blue!80!black]left:$B$}] {};
  \end{pgfonlayer}
  \draw[ed] (v1) to[bend left=40]  node[lab,pos=.3]{$a_1,0$} (v2);
  \draw[ed] (v1) to[bend right=40] node[lab,pos=.7]{$a_2,0$} (v2);
  \draw[ed] (v2) -- node[lab,midway]{$a_3,0$} (v3);
  \draw[ed] (v3) -- node[lab,midway]{$a_4,0$} (v4);
  \draw[ed] (v3) -- node[lab,midway]{$a_5,0$} (v5);
  \draw[ed] (v4) to[bend left=25]  node[lab,pos=.5]{$x_5,1$} (v5);
  \draw[ed] (v4) to[bend right=22] node[lab,pos=.5]{$x_6,1$} (v1);
  \draw[ed] (v5) to[bend left=22]  node[lab,pos=.5]{$x_7,1$} (v2);
  \draw[ed] (u1) to[bend left=30]  node[lab,pos=.5]{$b_1,0$} (u2);
  \draw[ed] (u1) to[bend right=30] node[lab,pos=.5]{$b_2,0$} (u2);
  \draw[ed] (w1) -- node[lab,pos=.27]{$c_1,0$} (w2);
  \draw[ed] (w1) to[bend left=50] node[lab,pos=.73]{$x_8,1$} (w2);
  \draw[ed] (v1) -- node[lab,midway]{$x_1,1$} (u1);
  \draw[ed] (v4) -- node[lab,midway]{$x_2,1$} (u2);
  \draw[ed] (u2) -- node[lab,midway]{$x_3,1$} (w1);
  \draw[ed] (v5) -- node[lab,midway]{$x_4,1$} (w2);
\end{scope}

\begin{scope}[shift={(9.4,0.2)}]
  \node[ttl] at (-0.9,4.7) {(b) $G^{*}$};
  \node[big] (As) at (0,0.2)   {$c^*_A$};
  \node[big] (Bs) at (1.9,3.0) {$c^*_B$};
  \node[big] (Cs) at (3.8,0.2) {$c^*_C$};
  \draw[ed] (As) to[bend left=16]  node[lab]{$x_1,1$} (Bs);
  \draw[ed] (As) to[bend right=16] node[lab]{$x_2,1$} (Bs);
  \draw[ed] (Bs) -- node[lab]{$x_3,1$} (Cs);
  \draw[ed] (As) -- node[lab]{$x_4,1$} (Cs);
\end{scope}

\begin{scope}[shift={(0,-9.2)}]
  \node[ttl] at (-2.1,4.9) {(c) $G'$};
  \coordinate (tL) at (-1.75,0); \coordinate (tT) at (0.45,1.2);
  \coordinate (tB) at (0.45,-1.2); \coordinate (tR) at (2.5,0);
  \coordinate (cR2) at (5.05,0);
  \node[big] (A1) at (-1.1,0) {$A$};
  \node[v]   (v3) at ( 0.7,0)    {};
  \node[v]   (v4) at ( 2.0,1.0)  {};  \node[v] (v5) at (2.0,-1.0) {};
  \node[big] (B)  at ( 1.2,3.3) {$B$};
  \node[v]   (w1) at ( 4.4,1.0)  {};  \node[v] (w2) at (4.4,-1.0) {};
  \begin{pgfonlayer}{background}
    \node[tbox,fit=(A1)(v3)(v4)(v5)(tL)(tT)(tB)(tR),label={[font=\scriptsize]below left:$c^*_A$}] {};
    \node[tbox,fit=(B),label={[font=\scriptsize]above right:$c^*_B$}] {};
    \node[tbox,fit=(w1)(w2)(cR2),label={[font=\scriptsize]above right:$c^*_C$}] {};
  \end{pgfonlayer}
  \draw[ed] (A1) -- node[lab,midway]{$a_3,0$} (v3);
  \draw[ed] (v3) -- node[lab,midway]{$a_4,0$} (v4);
  \draw[ed] (v3) -- node[lab,midway]{$a_5,0$} (v5);
  \draw[ed] (v4) to[bend left=25]  node[lab,pos=.5]{$x_5,1$} (v5);
  \draw[ed] (v4) to[bend right=22] node[lab,pos=.5]{$x_6,1$} (A1);
  \draw[ed] (v5) to[bend left=22]  node[lab,pos=.5]{$x_7,1$} (A1);
  \draw[ed] (w1) -- node[lab,pos=.27]{$c_1,0$} (w2);
  \draw[ed] (w1) to[bend left=50] node[lab,pos=.73]{$x_8,1$} (w2);
  \draw[ed] (A1) -- node[lab,midway]{$x_1,1$} (B);
  \draw[ed] (v4) -- node[lab,midway]{$x_2,1$} (B);
  \draw[ed] (B)  -- node[lab,midway]{$x_3,1$} (w1);
  \draw[ed] (v5) -- node[lab,midway]{$x_4,1$} (w2);
\end{scope}

\begin{scope}[shift={(9.4,-9.2)}]
  \node[ttl] at (-0.9,4.9) {(d) Bipartite graphs $B_{c^*}$};
  \tikzset{L/.style={draw=black,thick,rectangle,fill=white,inner sep=2.5pt,font=\scriptsize},
           R/.style={draw=black,thick,circle,fill=white,inner sep=1.5pt,font=\scriptsize}}
  \node[font=\scriptsize,align=center] at (0.2,3.85) {left side:\\bridge edges};
  \node[font=\scriptsize,align=center] at (3.6,3.85) {right side:\\internal weight 1 edges};
  \node[L] (la3) at (0.2,2.7) {$a_3$};
  \node[L] (la4) at (0.2,1.6) {$a_4$};
  \node[L] (la5) at (0.2,0.5) {$a_5$};
  \node[R] (rx6) at (3.6,2.7) {$x_6$};
  \node[R] (rx5) at (3.6,1.6) {$x_5$};
  \node[R] (rx7) at (3.6,0.5) {$x_7$};
  \draw[ed] (la3)--(rx6);
  \draw[ed] (la4)--(rx6);
  \draw[ed] (la4)--(rx5);
  \draw[ed] (la5)--(rx5);
  \draw[ed] (la5)--(rx7);
  \draw[ed] (la3)--(rx7);
  \node[font=\scriptsize,anchor=west] at (4.5,1.6) {$B_{c^*_A}$};
  \node[L] (lc1) at (0.2,-0.9) {$c_1$};
  \node[R] (rx8) at (3.6,-0.9) {$x_8$};
  \draw[ed] (lc1)--(rx8);
  \node[font=\scriptsize,anchor=west] at (4.5,-0.9) {$B_{c^*_C}$};
\end{scope}

\draw[flow] (5.9,2.0) -- (8.5,2.0);
\node[flab,above] at (7.2,2.15) {contract each weight 0 component};
\draw[flow] (1.0,-2.2) -- (1.0,-4.2);
\node[flab,right] at (1.15,-3.2) {contract each 2-edge-connected component $c'$};
\draw[flow] (5.9,-7.2) -- (8.5,-7.2);
\node[flab,above] at (7.2,-7.05) {components $c^*$ and their graphs $B_{c^*}$};
\end{tikzpicture}%
}
\caption{Worked example of the goal function construction given some realization $\alpha$ on a graph $G$. Each edge is labeled as (edge name, edge weight). \textbf{(a)} The original graph $G$, showing weight-0 components $c^*$ (outer shaded boxes) and 2-edge-connected weight-0 components $c'$ (inner dashed boxes $A, B$). \textbf{(b)} $G^*$ is obtained by contracting each weight-0 connected component $c^*$. \textbf{(c)} $G'$ is obtained by contracting each 2-edge-connected component $c'$. What remains inside each $c^* \in C^*$ is a tree of its bridge edges ($t_{c^*}$). \textbf{(d)} The bipartite graphs $B_{c^*_A}$ and $B_{c^*_C}$ constructed from $t_{c^*}$ and internal weight 1 edges.}
\label{fig:goalvalue-example}
\end{figure}

\section{Bounds on submodular goal values} \label{app: goal values}

\subsection{Lower bound lemma for submodular goal values} 

The following lemma can be useful in proving lower bounds on submodular goal value for some graphs.

\begin{lemma}
\label{lem:forlbQ}
Let $G=(V,E)$ be a graph, $e \in E$, and $g:\{0,1,*\}^E \rightarrow \{0,1,2,\ldots,Q\}$ be a submodular goal function for the MST-ID problem on $G$. 
Let $\beta \in \{0,1,*\}^E$ such that $\beta(e)=*$.  Then
\begin{enumerate}
    \item[\normalfont(i)] if $e$ lies on a simple cycle of length $\ell$ in the minor $G_\beta$, then 
     $ g(\beta_{e \leftarrow 1}) - g(\beta) \ge \ell - 1$
    \item[\normalfont(ii)] dually, if $e$ is a member of a minimal cut-set (bond) of size $\ell$ in the minor $G_\beta$, then 
      $g(\beta_{e \leftarrow 0}) - g(\beta) \ge \ell - 1$
\end{enumerate}
\end{lemma}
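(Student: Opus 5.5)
We prove (i); (ii) then follows by applying (i) to the planar/matroid dual setting, or more directly by the same argument with the roles of $0$ and $1$, contraction and deletion, and cycles and bonds exchanged. Let $C$ be the simple cycle of length $\ell$ in $G_\beta$ through $e$, with edges $e=e_1,e_2,\ldots,e_\ell$, all unqueried under $\beta$. The idea is to build a chain of extensions of $\beta$, each setting one more edge of $C$, and use submodularity to compare the marginal gain of setting $e$ to $1$ at $\beta$ with marginal gains at these later points. Integrality of $g$ then forces each of those later gains to be at least $1$.

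\textbf{Construction.} Let $\gamma^{(k)}$ be the extension of $\beta$ obtained by setting $e_2,\ldots,e_k$ to $0$, for $k=1,\ldots,\ell$; so $\gamma^{(1)}=\beta$. In $G_{\gamma^{(k)}}$, contracting $e_2,\ldots,e_k$ shrinks $C$ to a cycle of length $\ell-k+1$ through $e$. Contraction of cycle edges only merges vertices along the cycle, so $e$ remains a non-loop for $k<\ell$ and becomes a loop at $k=\ell$.

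\textbf{Key claim.} For each $k\in\{1,\ldots,\ell-1\}$,
\[
g\bigl(\gamma^{(k+1)}_{e\leftarrow 1}\bigr)-g\bigl(\gamma^{(k+1)}\bigr) \;>\; g\bigl(\gamma^{(k)}_{e\leftarrow 1}\bigr)-g\bigl(\gamma^{(k)}\bigr) \;-\; \ldots
\]
Rather than pursue that form, it is cleaner to compare the two orders of adding $e\leftarrow1$ and $e_{k+1}\leftarrow0$. Write $\Delta_k = g(\gamma^{(k)}_{e\leftarrow1})-g(\gamma^{(k)})$. Then
\[
\Delta_k-\Delta_{k+1} = \bigl[g(\gamma^{(k+1)})-g(\gamma^{(k)})\bigr]-\bigl[g(\gamma^{(k+1)}_{e\leftarrow1})-g(\gamma^{(k)}_{e\leftarrow1})\bigr],
\]
which is the marginal gain of $e_{k+1}\leftarrow0$ at $\gamma^{(k)}$ minus its gain at $\gamma^{(k)}_{e\leftarrow1}$. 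This is nonnegative by submodularity. The aim is to show it is at least $1$, using integrality. It suffices to exhibit one completion in which the gain of $e_{k+1}\leftarrow 0$ strictly drops after $e\leftarrow1$.

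\textbf{Simpler telescoping.} Submodularity gives $\Delta_1\ge\Delta_2\ge\cdots\ge\Delta_\ell\ge0$. So it suffices to show $\Delta_\ell\ge 0$ and that each step $\Delta_k>\Delta_{k+1}$ is strict, giving $\Delta_1\ge \ell-1$. For strictness I would use goal values. Take a completion $\delta\succeq\gamma^{(k)}$ agreeing on edges outside $C$ so that the remaining $G_\delta$ consists exactly of the shrunken cycle of length $\ell-k+1\ge 2$ plus bridges/loops. For instance, set all other unqueried edges so that, by Lemma~\ref{prop: beta certifies if no loop or bridge}, the only non-loop non-bridge edges are those of the cycle.

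\textbf{Main obstacle.} Showing a strict decrease requires using the goal condition. Here $g(\delta)<Q$, but setting $e\leftarrow1$ makes the other cycle edges bridges, so $g(\delta_{e\leftarrow1})=Q$. Likewise, setting the cycle edges other than $e$ all to $0$ makes $e$ a loop, which also yields $Q$. I would carry out the strictness argument directly on such maximal points, where the value jumps to $Q$, and push it back to $\gamma^{(k)}$ via submodularity. This bookkeeping is the delicate part.
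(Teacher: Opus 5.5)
\paragraph{Where the proposal breaks.} The gap is the step you flag as the main obstacle, the strict inequalities $\Delta_k>\Delta_{k+1}$. Submodularity only says that the marginal gain of a fixed assignment is non-increasing along $\preceq$. The difference $\Delta_k-\Delta_{k+1}$ is a second-order quantity: the interaction of $e\leftarrow 1$ and $e_{k+1}\leftarrow 0$ at $\gamma^{(k)}$. Submodular functions need not have monotone interactions, so knowing that this interaction is $\ge 1$ at a completion $\delta\succeq\gamma^{(k)}$ says nothing about $\gamma^{(k)}$.

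In fact the claim is false. Let $G$ be the triangle on $s,t,u$ with $st$ doubled: $e,a$ join $s,t$, $c_1=su$, $c_2=ut$. Every assignment to three of the four edges is an MST-ID certificate. The non-certificate pairs are exactly $\{e{=}0,a{=}0\}$, $\{e{=}0,a{=}1\}$, $\{e{=}1,a{=}0\}$, $\{e{=}1,c_i{=}0\}$, $\{a{=}1,c_i{=}0\}$ for $i=1,2$, $\{c_1{=}0,c_2{=}1\}$, $\{c_1{=}1,c_2{=}0\}$ and $\{c_1{=}1,c_2{=}1\}$. Define $g$ as follows:
\begin{itemize}
\item $g(\emptyset)=0$;
\item $g=4$ on the singletons $e{=}0$, $e{=}1$, $a{=}0$;
\item $g=6$ on the other five singletons;
\item $g=8$ on every non-certificate pair;
\item $g=10$ on every certificate.
\end{itemize}
Monotonicity is clear, and the local exchange inequalities (which suffice) hold. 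At $\emptyset$ they hold because no certificate pair consists of two of $e{=}0$, $e{=}1$, $a{=}0$. At a singleton $\sigma$, any two pairs extending $\sigma$ on distinct edges have total value $\ge 16\ge g(\sigma)+10$. At pairs they hold trivially. So $g$ is a submodular goal function with $Q=10$.

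Now take $\beta$ all-$*$ and $C=\{e,a\}$, so $\ell=2$ and $e_2=a$. You get $\Delta_1=g(e{=}1)-g(\emptyset)=4$ and $\Delta_2=g(a{=}0,e{=}1)-g(a{=}0)=4$, even though $e$ is a loop once $a\leftarrow 0$. The lemma holds here, but no bound of the form $\Delta_1\ge\Delta_\ell+(\ell-1)$ does, so the telescoping route cannot be completed.

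\paragraph{The missing idea.} Apply submodularity once, to the marginal of $e\leftarrow 1$ itself. Let $\delta\succeq\beta$ set every edge of $G_\beta$ not on $C$ to $1$, so $G_\delta$ is just $C$ plus isolated vertices. Then $\delta_{e\leftarrow 1}$ is a certificate, since the rest of $C$ becomes a path of bridges. Hence $\Delta_1\ge g(\delta_{e\leftarrow 1})-g(\delta)=Q-g(\delta)$.

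Now, starting from $\delta$, set $e_2,\dots,e_{\ell-1}$ to $0$ one by one. This gives $\ell-1$ states, each still containing a cycle of length at least $2$, so all are non-certificates. Each step raises $g$ by at least $1$: by submodularity, the gain of $e_j\leftarrow 0$ is at least its gain at the extension where every edge of $C$ other than $e,e_j$ is set to $0$, and there $g$ jumps from below $Q$ to $Q$. Integrality then gives $Q-g(\delta)\ge \ell-1$.

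This is essentially the paper's proof. Your last paragraph already contains the ingredients, but they should be used to bound $Q-g(\delta)$, not pushed back to $\gamma^{(k)}$. Part (ii) then follows by duality, as you say.
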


\begin{proof}
    Let $g$ be a submodular goal function for $G$ with goal value $Q$. 

    \paragraph{Part (i):} Let $C$ be the simple cycle of length $\ell$ containing edge $e$ in the minor $G_\beta$. Let $\gamma \in \{0,1,*\}^E$ be the extension of $\beta$ produced by deleting all edges $e'$ in the minor $G_\beta$ except for those in $C$. Then the resulting minor $G_\gamma$ consists solely of the cycle $C$. Recall from earlier that identifying an MST on a simple cycle requires querying until either a weight 1 edge is identified or $\ell - 1$ weight 0 edges are found. In the worst case, resolving $C$ requires $\ell-1$ queries. Given that $g$ increases by at most $1$ per query, the goal value distance satisfies $$Q - g(\gamma) \ge \ell - 1$$

    Setting $e$ to $1$ deletes $e$ from $C$, leaving $G_{\gamma_{ e \leftarrow 1}}$ as an acyclic path of length $\ell-1$. Since all edges in $G_{\gamma_{ e \leftarrow 1}}$ are bridges, we have that $\gamma_{e \leftarrow 1}$ is a certificate of some MST on $G$, and $g(\gamma_{e \leftarrow 1}) = Q$. Substituting gives $$g(\gamma_{e \leftarrow 1}) - g(\gamma) = Q - g(\gamma) \ge \ell - 1$$ Recall that $\beta \preceq \gamma$, so by the definition of submodularity we have $$g(\beta_{e \leftarrow 1}) - g(\beta) \ge g(\gamma_{e \leftarrow 1}) - g(\gamma) \ge \ell - 1$$

    \paragraph{Part (ii):} A minimum edge-cut (bond) of size $\ell$ containing $e \in G_\beta$ corresponds to a circuit of length $\ell$ in the cographic (dual) matroid $M^*(G_\beta)$ \cite{oxley}. Setting $e$ to $0$ contracts $e$, which merges the two components defined by the cut and resolves the dual circuit. Applying part (i) under graphic matroid duality gives $$g(\beta_{e \leftarrow 0}) - g(\beta) \ge \ell - 1$$
\end{proof}

\subsection{Submodular goal value of a multi-edge cycle}

\begin{theorem} \label{thm: multicycle and disjoint st path equal to formula}
Let $G$ be a multi-edge cycle.  Then the submodular goal value for MST-ID of $G$ is equal to the rank$\times$corank value for $G$.
Dually, the same holds if $G$ 
consists of disjoint paths between two vertices, $s$ and $t$.
\end{theorem}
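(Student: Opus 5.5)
The proof has two parts: a lower bound $Q_{ID}(G)\ge (|V|-1)(|E|-|V|+1)$ obtained by telescoping Lemma~\ref{lem:forlbQ}, and a matching explicit construction. The dual statement then follows from the earlier observation that $Q_{ID}$ is invariant under planar duality. The dual of a multi-edge cycle is exactly a family of internally disjoint $s$--$t$ paths, and duality swaps rank and corank, so it preserves the rank$\times$corank value. I therefore only treat the multi-edge cycle.

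Let $G$ be a multi-edge cycle with $k$ bundles of sizes $m_1,\dots,m_k$, and let $m=\sum_i m_i$. Then the rank is $k-1$ and the corank is $m-k+1$.

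\textbf{Lower bound.} Let $g$ be any submodular goal function with goal value $Q$. Build a chain of partial realizations $\beta^0=(*,\dots,*)\preceq\beta^1\preceq\cdots$, each step setting one further edge to $1$.
\begin{itemize}
\item First delete, one at a time, all but one edge of each bundle; this is $m-k$ steps. Before each step, the edge being set lies on a simple cycle of length $k$ in $G_\beta$: take one surviving edge per bundle, with the edge being deleted as the representative of its bundle. Lemma~\ref{lem:forlbQ}(i) then gives a gain of at least $k-1$ per step.
\item Then delete one edge of the remaining simple $k$-cycle, for one more gain of at least $k-1$.
\end{itemize}
The final partial realization leaves a path, so it is a certificate and has value $Q$. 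Since $g(\beta^0)=0$, telescoping gives $Q\ge (m-k+1)(k-1)$.

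\textbf{Upper bound.} For $\beta$ with $G_\beta$ a minor of $G$, define $h(\beta)=\sum_B r(B)\,c(B)$. The sum ranges over the non-trivial blocks $B$ of $G_\beta$ (2-connected components with at least two edges), after discarding loops. Here $r(B)$ and $c(B)$ are the rank and corank of the block. Set $g(\beta)=(k-1)(m-k+1)-h(\beta)$.

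Every minor of $G$ is, after discarding loops, one of three types:
\begin{itemize}
\item a multi-edge cycle, which is one block with $r=k'-1$ and $c=m'-k'+1$;
\item a multi-edge path, whose blocks are the bundles, each with $r=1$ and $c=m_i'-1$;
\item a graph with only loops and bridges.
\end{itemize}
Hence $h(\beta)=0$ iff $G_\beta$ has only loops and bridges, i.e., by Lemma~\ref{prop: beta certifies if no loop or bridge}, iff $\beta$ is an MST-ID certificate. Also $g(*,\dots,*)=0$, and $g$ is integer-valued. Monotonicity of $g$ means $h$ never increases under a single deletion or contraction; this is checked case by case.

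\textbf{Main obstacle: submodularity.} One must show that the marginal drop $h(\beta)-h(\beta_{e\leftarrow s})$ never increases when $\beta$ is extended. I would first tabulate the drop for each configuration. For example:
\begin{itemize}
\item deleting a non-last edge of a bundle in a cycle with $k'$ bundles drops $h$ by $k'-1$;
\item deleting the sole edge of a singleton bundle turns the cycle into a path, dropping $h$ from $(k'-1)(m'-k'+1)$ to $m'-k'$;
\item contracting an edge of a bundle of size $m_i'$ in a cycle drops $h$ by $m'-k'+1 + (k'-2)(m_i'-1)$;
\item the analogous single-bundle formulas apply in the path case.
\end{itemize}
I would then argue monotonicity of each formula in the parameters $k'$, $m'$, $m_i'$. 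These parameters only shrink along extensions, and the cycle-to-path transition only lowers the drops. The delicate cases are the regime changes, where $\beta$ is still a cycle but $\gamma$ has already become a path. There I would compare the drop in the cycle regime directly against the bundle-wise drop in the path regime. If some case fails, I would fall back on adjusting $h$ on path minors, e.g., by weighting the bundles differently, while keeping the value at the all-$*$ partial realization equal to $(k-1)(m-k+1)$.
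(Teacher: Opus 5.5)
\paragraph{Verdict: the upper-bound construction fails.}
Your lower bound and your treatment of the dual case coincide with the paper's. The lower bound telescopes Lemma~\ref{lem:forlbQ}(i) over the $m-k$ non-last deletions plus one final cycle-breaking deletion.

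The gap is in the upper bound. Your function $g=(k-1)(m-k+1)-h$ with $h(\beta)=\sum_B r(B)c(B)$ is \emph{not} submodular, so the deferred case analysis cannot go through. Take $k=3$ bundles $\{e,f\}$, $\{a,b\}$, $\{c\}$, so $m=5$ and $Q=6$. Let $\beta=(*,\dots,*)$ and $\gamma=\beta_{f\leftarrow 1}$.
\begin{itemize}
\item $h(\beta)=2\cdot 3=6$.
\item $h(\beta_{e\leftarrow 1})=h(\gamma)=2\cdot 2=4$.
\item $G_{\gamma_{e\leftarrow 1}}$ is the digon $\{a,b\}$ plus the bridge $c$, so $h(\gamma_{e\leftarrow 1})=1$.
\end{itemize}
The gain from setting $e$ to $1$ is therefore $2$ at $\beta$ but $3$ at $\gamma\succeq\beta$.

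Your own table shows why. Along an extension, $e$ can change from a non-last edge of its bundle (drop $k'-1$) to the last surviving edge of its bundle (drop $(k'-2)(m'-k'+1)+1$). The latter is strictly larger whenever $k'\ge 3$ and $m'>k'$. So the cycle-to-path transition is exactly where marginals \emph{increase}, contrary to your expectation that it only lowers them.

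\paragraph{The missing idea.}
After some bundle has been entirely deleted, the residual corank-type deficit must still be multiplied by the current rank deficit $|V(G_\beta)|-1$. Multiplying it instead by the rank $1$ of each individual bundle-block is what breaks submodularity.

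This is exactly what the paper's OR construction produces:
\[
g=Q_1Q_2-(Q_1-g_1)(Q_2-g_2),
\]
with the following ingredients.
\begin{itemize}
\item $Q_1=k-1$, and $g_1$ is the graphic rank of the $0$-edges.
\item $Q_2=m-k+1$, and $g_2=g_2'+g_2''$.
\item $g_2'$ adds $\min(N_1(M,\beta),|M|-1)$ for each uncontracted bundle $M$, and $|M|-1$ for each contracted one.
\item $g_2''$ is the indicator that some bundle is entirely set to $1$.
\end{itemize}
In your language this is
\[
h=(|V(G_\beta)|-1)\bigl(1-g_2''(\beta)+\sum_{M}\max(m'_M-1,0)\bigr),
\]
where the sum runs over uncontracted bundles and $m'_M$ is the number of unqueried edges of $M$.

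This agrees with your $h$ on cycle minors but is $|V(G_\beta)|-1$ times yours on path minors. With it, deleting the last edge of a bundle in the cycle regime drops $h$ by exactly $k'-1$, the same as a non-last deletion. Submodularity then follows from the OR construction, once one checks that $g_2'+g_2''$ is submodular even though the indicator $g_2''$ alone is not.

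Your fallback of reweighting the bundles points in this direction. But as written, the proposal neither specifies the right weights nor proves submodularity, so the upper bound is not established.
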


\begin{proof}
Let $G$ be a multi-edge cycle.

\paragraph{Lower bound on submodular goal value:}
We begin by showing that the rank$\times$corank formula for $G$ is a lower bound for the goal value of any submodular goal function $g$ for MST identification of graph $G$.
Consider such a $g$.
 Let $m$ be the number of edges in the multi-edge cycle, and $n$ be the number of vertices (equivalently the number of multi-edges).  Choose 1 edge from each multi-edge to form a cycle $C$.  Define a sequence of partial assignments beginning with $(*,*,*,...,*)$.  Let $E'$ be the set of edges not in cycle $C$, and let $m'=|E'|$.  W.l.o.g. let $E'=\{e_1,e_2,\ldots,e_{m'}\}$.  The sequence is $\beta_0, \beta_1, \ldots, \beta_{m'}$ where $\beta_i$ is the assignment that sets $e_1,\ldots,e_{i}$ to 1 and all other edges to $*$.
So the sequence corresponds to deleting the edges in $E'$ one by one.  Consider a $\beta_i$ in the sequence, where $0 < i \leq m'$.  In the minor induced by $\beta_i$, the cycle $C$ is still present.  Also, $C$ contains an edge in the same multi-edge as $e_i$, that is, an edge that is parallel to $e_i$.  Replacing this edge by $e_i$ yields a cycle $C_i$ containing $e_i$.  Since this cycle is present in the minor corresponding to $\beta_i$, it follows from Lemma~\ref{lem:forlbQ} that $g(\beta_i) - g(\beta_{i-1}) \geq n-1$.  Therefore, $g(\beta_{m'}) \geq (n-1)m'$.  Finally, define $\beta_{m'+1}$ to be an assignment produced from $\beta_{m'}$ by deleting an arbitrary edge $e$ from $C$.  Since $\beta_{m'}$ contains $e$, it follows that $g(\beta_{m'+1})-g(\beta_{m'}) \geq (n-1)$ and hence  $Q \geq g(\beta_{m'+1})-g(\beta_{0}) \geq (n-1)(m'+1) = (n-1)(m-n+1)$, which is the rank$\times$corank formula for $G$. 

\paragraph{Construction of submodular goal function:}
We now construct a submodular goal function $g$ for MST identification for $G$ whose goal value $Q$ is equal to the rank$\times$corank formula.  We construct $g$ by applying the standard OR construction (cf.~\cite{bach2018submodular}) to two monotone submodular functions, $g_1$ and $g_2$, with goal values $Q_1=n-1$ and $Q_2 = m-n+1$.
The result of the OR construction is the function $g$ such that for $\beta \in \{0,1,*\}^E$, $g(\beta) = Q_1Q_2 -  (Q_1-g_1(\beta))(Q_2 - g_2(\beta))$.

The value of $g_1$ on any partial assignment $\beta$ is just the rank in the graphic matroid $M(G)$, of the subset of edges set to 0 by $\beta$.  Equivalently, it is the number of multi-edges in $G$ such that $\beta$ has set at least one edge in the multi-edge to 0.  
The value of $g_2$ on any assignment $\beta$ is equal to the sum of two functions, $g_2'$ and $g_2''$.
Let $\mathbb{M}$ be the set of all multi-edges of $G$.  For multi-edge $M\in\mathbb{M}$, define $N_1(M,\beta)$ to be the number of $e \in M$ set to 1 by $\beta$. 
Define $z(M,\beta)$ to be $|M|-1$ if $\beta$ sets at least one $e \in M$ to 0, else $z(M,\beta)=\min\{N_1(M, \beta), |M|-1\}$.   
The value of $g_2'(\beta)=\sum_{M \in \mathbb{M}} z(M,\beta)$.  The value of $g_2''(\beta)=1$ if $N_1(M, \beta)=|M|$ for some $M \in \mathbb{M}$, else $g_2''(\beta)=0$.  The maximum possible value of $g_2'(\beta)$ is $\sum_{M \in \mathbb{M}}(|M|-1) = m-n$, and it is reached if for each multi-edge $M$, either (1) $\beta$ assigns at least one edge in $M$ to 0 (contracting the edge, so that any edges $e \in M$ such that $\beta(e)=*$ are turned into self-loops), or (2) $\beta$ assigns 1 to at least $|M|-1$ of the edges in $M$. 
It therefore follows from the definition of $g_2''$ that $g_2(\beta)=m-n+1$ iff there is at least one multi-edge $M \in \mathbb{M}$ such that $\beta$ sets all $e \in M$ to 1 (deleting all edges in the multi-edge), and for all other multi-edges $M \in \mathbb{M}$, $\beta$ either sets at least one edge in $M$ to 0 (contracting it and turning $*$ edges in the multi-edge to self-loops) or sets at least $|M|-1$ edges in $M$ to 1 (deleting them, so that at most one edge remains in $M$).

\paragraph{Monotonicity and submodularity of $g$:}
Because it is a rank function, $g_1$ is clearly monotone and submodular. 
Function $g_2$ is clearly monotone.
To show $g_2$ is also submodular, consider $\beta, \gamma \in \{0,1,*\}^E$ such that $\beta \preceq \gamma$ and $e \in E$ with $\beta_e=\gamma_e = *$.  Note that $g_2'$ is both monotone and submodular, because for each $M \in \mathbb{M}$, the function $g_M := g(M,\cdot)$ is monotone and submodular.  We claim first that
$g_2(\beta_{e \leftarrow 0}) - g_2(\beta) \geq g_2(\gamma_{e \leftarrow 0}) - g_2(\gamma)$.  This holds because setting $e$ to 0 in $\beta$ only changes the value of $g_2'$, not the value of $g_2''$, and the same is true for setting $e$ to 0 in $\gamma$.  

To complete the proof of submodularity, it remains only to show that $g_2(\beta_{e \leftarrow 1}) - g_2(\beta) \geq g_2(\gamma_{e \leftarrow 1}) - g_2(\gamma)$.
Since $g_2'$ is submodular and the value of $g_2''$ is either 0 or 1, the only way this could be violated is if $g_2''(\beta_{e \leftarrow 1}) - g_2''(\beta) = 0$ and $g_2''(\gamma_{e \leftarrow 1}) - g''(\gamma)=1$.
So assume that is the case.
Clearly $g_2''(\gamma_{e \leftarrow 1}) - g''(\gamma) = 1$ implies that 
$e$ is a member of a multi-edge $M$ such that $\gamma$ sets all $|M|-1$ other edges of $M$ to 1.  Further, it implies that $g_2'(\gamma_{e \leftarrow 1}) - g'(\gamma) = 0$ and
for all multi-edges $M' \neq M$, $\gamma$ sets fewer than $|M'|$ of the edges in $M'$ to 1.  Since $\beta \preceq \gamma$, it also holds that for all multi-edges $M' \neq M$, $\beta$ sets fewer than $|M'|$ of the 
edges in $M'$ to 1.  Further, $\beta$ sets some $i \leq |M|-1$ of the edges in $M$ to 1, and the other edges in $M$ to $*$.  If $i = |M|-1$, then 
$g_2''(\beta_{e \leftarrow 1}) - g''(\beta) = 1$, which contradicts our assumption.
If $i < |M|-1$, then $g_2''(\beta_{e \leftarrow 1}) - g''(\beta) = 0$, but 
$g_2'(\beta_{e \leftarrow 1}) - g'(\beta) = 1$, so $g_2(\beta_{e \leftarrow 1}) - g_2(\beta) = 1$.
Since $g_2'(\gamma_{e \leftarrow 1}) - g_2'(\gamma)=0$, we have that 
$g_2(\gamma_{e \leftarrow 1}) - g_2(\beta) = 1$ and thus
$g_2(\beta_{e \leftarrow 1}) - g_2(\beta)= g_2(\beta_{e \leftarrow 1}) - g_2(\beta)=0$,
completing the proof that $g_2$ is submodular.

\paragraph{Certificate property of $g$: }
 The maximum values of $g_1$ and $g_2$ are $Q_1=n-1$ and $Q_2=m-n+1$ respectively.
Since $g$ is the result of applying the OR construction to these monotone submodular functions, $g$ is also monotone and submodular, and its goal value $Q=(n-1)(m-n+1)$ is attained
iff $g_1(\beta)=n-1$ or $g_2(\beta)=m-n+1$ (or both).
It remains to show that the information in $\beta$ is an MST-ID certificate for $G$ iff $g_1(\beta)=n-1$ or $g_2(\beta)=m-n+1$.

If $g_1(\beta)=n-1$, then the edges set to 0 by $\beta$ include a spanning tree of $G$ of total weight 0, which is clearly an MST, making $\beta$ an MST-ID certificate for $G$.  

If $g_2(\beta)=m-n+1$, then $g_2''(\beta)=1$, meaning that $\beta$ deletes all edges in at least one multi-edge $M^*$.  Further, $g_2'(\beta)=m-n$, meaning that in every multi-edge $M \neq M^*$, $\beta$ either contracts at least one edge (making all $*$'ed edges in the multi-edge into self-loops) or deletes all but one edge (making that possible remaining edge into a bridge, because $\beta$ deleted all edges in $M^*$).  Therefore, if $g_2(\beta)=m-n+1$, $G_\beta$ has only bridges and self-loops, so by Lemma \ref{prop: beta certifies if no loop or bridge}, $\beta$ is an MST-ID certificate for $G$.
 For the other direction, suppose $\beta$ is an MST-ID certificate for $G$. Then either (1) $\beta$ assigns a 0 to at least one edge in at least $n-1$ of the multi-edges, or (2) $\beta$ deletes all edges in one of the multi-edges, and for every other multi-edge, it must either set one of its edges to 0 (making all remaining $*$'s into self-loops) or set at least $|M|-1$ of its edges to 1 (making the one possible remaining $*$'ed edge, if it exists, into a bridge).   In the first case, $g_1(\beta) = n-1$ and in the second, $g_2(\beta) = m-n+1$. 

\paragraph{Dual statement:}
The dual statement then follows from the fact that multi-edge cycles are planar graphs whose dual graphs can be represented as disjoint paths of edges between two vertices $s$ and $t$, and vice-versa.  
\end{proof}

\subsection{Integer Linear Programming formulation and computational results}

Here, we describe how we computed the goal values for a number of small graphs. We report some of these values in Table \ref{tab: goal values}.

\paragraph{Graph types considered}
We computed submodular goal values for complete graphs $K_n$, fan graphs $F_k$ (a path of length $k$, where all vertices are connected to some hub vertex), wheel graphs $W_k$ (a simple cycle of length $k$ joined to a hub vertex, simple cycles $C_n$, complete bipartite graphs $K_{r, s}$, and theta graphs $\theta(a, b, c)$ (two pole vertices connected by three vertex disjoint paths of lengths $a, b, c$).

We also computed values for a few other graphs:
\begin{enumerate}
     \item \textbf{Diamond graph} (square $+$ diagonal): $\Theta(1,2,2)$, formed by a cycle of length 4 with one diagonal edge.
    \item \textbf{Squares with parallel edges}: A cycle of length 4 with either one or two opposite edges converted to parallel edges.
    \item \textbf{$D^{+1}$}: A diamond graph with one (non-diagonal) edge converted to a parallel edge (4 vertices, 6 edges). 
    \item \textbf{$K_4$ subdivide one edge}: A 5-vertex, 7-edge graph obtained by inserting a vertex of degree 2 to one of the (non-diagonal) edges of $K_4$.
\end{enumerate}

\paragraph{Solving the ILP.} For a graph $G$, we introduce an integer variable $g(b)$ for each partial assignment $b \in \{0,1,*\}^{m}$ and a variable $Q$ for the goal value, following the formulation in Bach et al. \cite{bach2018submodular}. The ILP includes the base constraint $g(*,\ldots,*) = 0$, certificate constraints $g(b) = Q$ for every $b$ containing an MST-ID certificate, and $g(b) \leq Q-1$ for every non-certificate $b$. Monotonicity is enforced via $g(b) \le g(b_{e \leftarrow v})$ for every unqueried edge $e$ and outcome $v \in \{0, 1\}$. Submodularity is enforced via $g(b_{e_1 \leftarrow v_1, e_2 \leftarrow v_2}) - g(b_{e_2 \leftarrow v_2}) \le g(b_{e_1 \leftarrow v_1}) - g(b)$ for all distinct unqueried edges $e_1, e_2$ and outcomes $v_1, v_2 \in \{0, 1\}$. We solve the resulting ILP using Gurobi, first solving the LP relaxation to obtian a lower bound, and then solving the integer program directly to obtain the optimal goal value $Q_{\text{ID}}(G)$. This approach is used for every graph in Table \ref{tab: goal values} except $K_5$.

\paragraph{$K_5$: reducing variables via isomorphism.} 
$K_5$'s $3^{10} = 59,049$ partial assignments made solving the ILP computationally infeasible. To address this, we tried reducing the number of variables of the ILP using graph automorphisms. Two partial assignments are declared isomorphic if there exists a vertex permutation $\sigma$ of $K_5$ that maps the value of each edge $\{u,v\}$ in $b$ to the value of $\{\sigma(u),\sigma(v)\}$ in $b'$. This means that any two isomorphic partial assignments have identical query states up to some relabeling, which allows for any optimal goal function to assign them the same goal value without loss of generality.

For $K_n$, we have that every vertex permutation is an automorphism. This means that $n! = 5! = 120$ automorphisms are possible for $K_5$. For each partial assignment $b$, we try all $120$ permutations, treating each resulting string as a base-3 number, and keep the smallest one as $b$'s form. Two partial assignments have the same form iff they are isomorphic under this definition. This reduces the $3^m$ partial assignments to one variable per isomorphism class in the ILP. For $K_5$, the $3^{10} = 59,049$ partial assignments reduce to 792 classes. With this reduction in place, the same ILP described above is applied directly to the 792 classes rather than the full $59,049$ partial assignments. We also tried two minor based notions of isomorphism, but both proved strictly weaker definitions than the one we used, and inflated goal value, so we do not use them. Note that enforcing equal goal values across isomorphic classes assumes that an optimal submodular goal function obeys this vertex automorphism symmetry. Under this assumption, we obtain $Q_{ID}(K_5) = 35$. Without this assumption, 35 represents a valid upper bound on $Q_{ID}(K_5)$.

\begin{table}[ht]
\centering
\begin{tabular}{lcccc}
\toprule
Graph & $n$ & $m$ & Rank$\times$corank & $Q_{\text{ID}}$ \\
\midrule
\multicolumn{5}{l}{\textit{Fan family}} \\
$K_3$ (Fan $F_2$) & 3 & 3 & 2 & 2 \\
Fan $F_3$ & 4 & 5 & 6 & 6 \\
Fan $F_4$ & 5 & 7 & 12 & 14 \\
\midrule
\multicolumn{5}{l}{\textit{Wheel family}} \\
$K_4$ (Wheel $W_3$) & 4 & 6 & 9 & 9 \\
Wheel $W_4$ & 5 & 8 & 16 & $\geq 18$\textsuperscript{*} \\
\midrule
\multicolumn{5}{l}{\textit{Simple series-parallel graphs (no parallel edges)}} \\
Cycle $C_5$ & 5 & 5 & 4 & 4 \\
Cycle $C_6$ & 6 & 6 & 5 & 5 \\
$\Theta(3,2,2)$ & 6 & 7 & 10 & 10 \\
\midrule
\multicolumn{5}{l}{\textit{Other graphs}} \\
$K_{2,3}$ & 5 & 6 & 8 & 8 \\
Square $+$ diagonal (diamond) & 4 & 5 & 6 & 6 \\
Square $+$ 1 parallel & 4 & 5 & 6 & 6 \\
Square $+$ 2 parallel pairs & 4 & 6 & 9 & 9 \\
$D^{+1}$ & 4 & 6 & 9 & 10 \\
$K_4$ subdivide one edge & 5 & 7 & 12 & 14 \\
$K_5$ & 5 & 10 & 24 & 35\textsuperscript{$\dagger$} \\
\bottomrule
\end{tabular}
\caption{Computed submodular goal values for MST-ID, compared with the rank$\times$corank value $(n-1)(m-n+1)$. \textsuperscript{*}Wheel $W_4$'s value is a lower bound and has not been proved to be optimal. \textsuperscript{$\dagger$}$K_5$'s value is computed assuming  that the optimal submodular goal function obeys the given isomorphism property.}
\label{tab: goal values}
\end{table}

For many of the graphs in the table, the $Q_{ID}$ values in Table \ref{tab: goal values} are equal to the rank$\times$corank value. $K_4$ and $F_3$ also hit the formula exactly despite us not being able to prove their behavior. Every remaining graph exceeds
the formula, smallest at the $D^{+1}$ ($+1$) and largest at $K_5$ ($+5$ on $Q_{\text{ID}}$). $K_5$ is also the only graph for which an isomorphic reduction was used.

\end{document}